
\documentclass[aps,prb,reprint,twocolumn,floats,floatfix,superscriptaddress,nofootinbib]{revtex4-2}


\usepackage{graphicx}
\usepackage{amsmath}
\usepackage{amsfonts}
\usepackage{amssymb}
\usepackage{amstext}
\usepackage[english]{babel}
\usepackage{helvet}
\usepackage{microtype}
\usepackage[pdftex,hidelinks]{hyperref}
\hypersetup{
    colorlinks=true,
    linkcolor=blue,
    filecolor=blue,
    urlcolor=blue,
    citecolor = blue,
}
\usepackage{xcolor}
\usepackage[export]{adjustbox}
\usepackage{bbm}
\usepackage{manfnt}
\usepackage[caption=false]{subfig}
\usepackage{multirow}
\usepackage{dsfont}
\usepackage{amsthm}

\graphicspath{{./images/}}

\newcommand{\ket}[1]{|#1\rangle}
\newcommand{\bra}[1]{\langle#1|}
\newcommand{\braket}[2]{\langle#1|#2\rangle}
\DeclareMathOperator{\tr}{Tr}
\DeclareMathOperator{\diag}{diag}
\DeclareMathOperator{\spec}{Spectrum}
\newcommand{\texteq}[1]{\stackrel{\text{#1}}{=}}
\newcommand{\Mnn}[1]{\mathcal{M}^{(#1)}_{nn}(\mathbb{R})}
\newcommand{\MnnF}[0]{\mathcal{M}_{nn}(\mathbb{R})}
\newcommand\cube[1][.6]{\mathbin{\vcenter{\hbox{\scalebox{#1}{\mbox{\mancube}}}}}}
\newcommand{\Tsp}[2]{\mathfrak{T}_{#1}^{#2}}

\newtheorem{theorem}{Theorem}
\newtheorem{corollary}{Corollary}[theorem]
\theoremstyle{remark}
\newtheorem*{remark}{Remark}



\begin{document}

\title{
    Lattice-reflection symmetry in tensor-network renormalization group with entanglement filtering in two and three dimensions
}


\author{Xinliang Lyu}
\email[]{xlyu@ihes.fr}
\affiliation{
    Institute for Solid State Physics, The University of Tokyo,
    Kashiwa, Chiba 277-8581, Japan
}
\affiliation{
    Institut des Hautes \'Etudes Scientifiques,
    91440 Bures-sur-Yvette, France
}
\author{Naoki Kawashima}
\email[]{kawashima@issp.u-tokyo.ac.jp} 
\affiliation{
    Institute for Solid State Physics, The University of Tokyo, 
    Kashiwa, Chiba 277-8581, Japan
}
\affiliation{
    Trans-scale Quantum Science Institute, The University of Tokyo
    7-3-1, Hongo, Tokyo 113-0033, Japan
}

\date{April 1, 2026}

\begin{abstract}
    Tensor-network renormalization group (TNRG) is an efficient real-space renormalization group method for studying the criticality in both classical and quantum lattice systems.
    Exploiting symmetries of a system in a TNRG algorithm can simplify the implementation of the algorithm and can help produce correct tensor RG flows.
    Although a general framework for considering a global on-site symmetry has been established, it is still unclear how to incorporate a lattice symmetry in TNRG.
    As a first step for lattice symmetries, we propose a method to incorporate the lattice-reflection symmetry in the context of a TNRG with entanglement filtering in both two and three dimensions (2D and 3D).
    To achieve this, we write down a general definition of lattice-reflection symmetry in tensor-network language.
    Then, we introduce a transposition trick for exploiting and imposing the lattice-reflection symmetry in two basic TNRG operations: projective truncations and entanglement filtering.
    Using the transposition trick, the detailed algorithms of the TNRG map in both 2D and 3D are laid out, where the lattice-reflection symmetry is preserved and imposed.
    Finally, we demonstrate how to construct the linearization of the TNRG maps in a given lattice-reflection sector, with the help of which it becomes possible to extract scaling dimensions in each sector separately.
    Our work paves the way for understanding the lattice-rotation symmetry in TNRG.
\end{abstract}

\maketitle

\newpage

\tableofcontents

\section{Introduction}
Tensor-network renormalization group (TNRG)~\cite{Levin:Nave:2007,Okunishi:Nishino:2022} is a modern formulation of Kadanoff’s real-space renormalization group (RG) idea~\cite{Kadanoff:1966} in a tensor-network language.
It is a powerful numerical RG method for both classical and quantum lattice systems.
Unlike the real-space RG formulated in the spin representation, whose approximations are uncontrolled~\cite{Kardar:2007}, the TNRG is naturally equipped with a measure of RG errors~\cite{Levin:Nave:2007,Evenbly:2017:algo}.
These errors are controlled by a positive integer $\chi \in \mathbb{Z}^+$ called bond dimension, corresponding to the number of coupling constants kept in an RG map.
In the tensor-network language, Kadanoff’s block-spin transformation becomes coarse graining of tensor blocks, which will be referred to as \emph{block-tensor} transformation in this paper.
The RG approximation errors of a block-tensor transformation depend on the scaling of the entanglement entropy of a system~\cite{Levin:Nave:2007,Lyu:Kawashima:2023}.
When an entanglement filtering (EF)~\cite{Gu:Wen:2009,Evenbly:Vidal:2015} is integrated into a block-tensor transformation, the TNRG is able to exhibit critical fixed points in both two and three dimensions (2D and 3D)~\cite{Evenbly:Vidal:2015,Lyu:Kawashima:2024}.
A systematically improvable real-space RG has been realized in 2D TNRG~\cite{Evenbly:Vidal:2015}, while in 3D, the TNRG can produce reliable estimates of scaling dimensions~\cite{HOTRG:2012,Lyu:Kawashima:2024}.

In order to produce correct RG flows, it is important to incorporate into the TNRG the symmetries of a lattice model that are essential to the nature of its phase transition.
This is because all RG-relevant and marginal operators that are not in the symmetry sector can be eliminated in numerical calculations.
This means that incorporating a proper symmetry in TNRG facilitates the estimation of the critical parameter of a model and, thus, the isolation of a critical fixed-point tensor.
For example, the spin-flip $\mathbb{Z}_2$ symmetry is essential for the second-order phase transition of the Ising model.
Without incorporating this symmetry in TNRG, the low-temperature (low-$T$) fixed point becomes unstable under an RG map and will eventually flow to the high-temperature (high-$T$) fixed point due to the perturbation corresponding to the spin operator~\cite{Gu:Wen:2009}.
Imposing the spin-flip $\mathbb{Z}_2$ symmetry in TNRG can eliminate this perturbation, making the low-$T$ fixed point of the Ising model stable.

Furthermore, exploiting lattice symmetries, like reflection and rotation, can simplify the implementation of the TNRG algorithms. 
For example, it is reasonable to expect that the same piece of tensor in one direction can be used in other directions due to these lattice symmetries.
This reduction of the number of undetermined tensors in the implementation of a TNRG has been shown to be quite helpful in a recent algorithm of the 3D TNRG enhanced by EF~\cite{Lyu:Kawashima:2024}.
However, it is not proven in Ref.~\cite{Lyu:Kawashima:2024} regarding why exploiting the lattice symmetry leads to such simplification.
We will show in~\autoref{subsec:symFM} that the number of the filtering matrix in EF reduces from 24 to 3 by exploiting the lattice-reflection symmetry in 3D.

A general framework for incorporating a global on-site symmetry, like the spin-flip $\mathbb{Z}_2$ symmetry of the Ising model, has been established for tensor-network decompositions and contractions~\cite{Singh:Pfeifer:2010,Singh:Pfeifer:2011,Singh:Vidal:2012}.
As for lattice symmetries, it seems that some practitioners of TNRG know, maybe based on their intuition, how to incorporate lattice-reflection and rotation symmetry on a case-by-case basis.
Without demonstrating why a lattice symmetry can be preserved under the RG, in the study of tensor network renormalization (TNR)~\cite{Evenbly:2017:algo}, Evenbly briefly discussed the lattice-reflection symmetry; 
while in the study of loop-TNR~\cite{Yang:Gu:Wen:2017}, a claim was made about a method for incorporating the lattice-rotation symmetry.
In the context of the lattice Schwinger model, both the lattice-reflection and rotation symmetries are incorporated in a specially designed Grassmann version of tensor network renormalization (TRG) with decorations~\cite{Shimizu:Kuramashi:2018}.
However, there is no systematic and general discussion that can serve as a basis for exploiting lattice symmetries in the development of a new scheme, for example, in 3D.

The purpose of this paper is to offer a general framework for understanding the lattice-reflection symmetry in TNRG with EF.
We focus on real-valued tensors.
The framework works equally well in both the 2D square-lattice and the 3D cubic-lattice tensor networks.
Our method is based on a general truncation scheme in TNRG called projective truncations\footnote{
Most of the existing TNRG algorithms can be reformulated using projective truncations.
}~\cite{Evenbly:2017:algo}.
We explain the origin of a general definition of lattice-reflection symmetry proposed by Evenbly~\cite{Evenbly:2017:algo}, and prove how this symmetry can be preserved and imposed under a TNRG transformation.
A new technique is invented for the proof regarding lattice symmetries in tensor network;
this technique involves dragging the tensors around in a tensor-network diagram and comparing the resultant diagram with the original one.
The notion of a SWAP-gauge matrix, which appears naturally in the definition of the lattice-reflection symmetry in TNRG, can help understand a recent construction of non-orientable surfaces in tensor network, like the crosscap and rainbow boundaries~\cite{Shimizu:2024}.
This framework is employed in the design of a recent EF-enhanced TNRG in 3D~\cite{Lyu:Kawashima:2024}.
Based on this framework, the lattice-rotation symmetry will be studied in a coming paper.

The remaining part of the paper is organized as follows. 
In~\autoref{sec:projtrunc}, we review projective truncations and how to use them to implement a simple block-tensor transformation. 
Then, we introduce a graph-independent EF scheme whose formulation is conducive to incorporating lattice symmetries in~\autoref{sec:GIEF}.
In~\autoref{sec:lattsym}, we demonstrate the key technical ingredients for exploiting the lattice-reflection symmetry in TNRG, including its definition, how to use a transposition trick to impose this symmetry, and the symmetry properties of various kinds of tensors in projective truncations and the EF.
The 2D and 3D algorithms of the EF-enhanced TNRG that preserves the lattice-reflection symmetry are proposed in~\autoref{sec:algo}.
The construction of the linearized RG in different lattice-reflection sectors is expounded in~\autoref{sec:linRG}.
As a numerical demonstration, in~\autoref{sec:numdemo}, we apply the proposed schemes to the Ising model and extract its scaling dimensions from the linearized RG map in different lattice-reflection sectors.

\section{Projective truncations\label{sec:projtrunc}}
In this section, we review a numerical technique called \emph{projective truncations}~\cite{Evenbly:2017:algo} for implementing block-tensor transformations. 
The go-to block-tensor map in 3D, the higher-order tensor renormalization group (HOTRG)~\cite{HOTRG:2012}, can be implemented using projective truncations.
When first proposed, the HOTRG made use of higher-order singular value decomposition. 
However, it is more straightforward to see the implications of lattice-reflection symmetry using projective truncations.

\subsection{2-to-1 isometric tensors}
To implement a block-tensor transformation with rescaling factor $b=2$, a type of 3-leg tensor is used to fuse two incoming legs into one output leg,
\begin{align}
    \label{eq:isom221}
    p: \mathbb{R}^{\chi'} \to 
    \mathbb{R}^{\chi} \otimes \mathbb{R}^{\chi}
    \text{ or pictorially }
    \includegraphics[scale=1.0, valign=c]{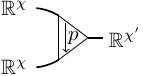}.
\end{align}
The dimensionality $\chi$ of the vector space $\mathbb{R}^\chi$ is known as \emph{bond dimension}.
This object $p$ is a tensor with three indices, $p_{ijk}$, where the first two indices correspond to the vector space $\mathbb{R}^\chi \otimes \mathbb{R}^\chi$ while the third one corresponds to $\mathbb{R}^{\chi'}$.
This tensor $p$ is isometric, satisfying the following condition,
\begin{subequations}
\begin{align}
    \label{eq:isomCondi}
    p^\intercal p = \mathbbm{1}_{\mathbb{R}^{\chi'}}
    \text{ or pictorially }
    \includegraphics[scale=1.0, valign=c]{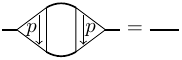},
\end{align}
where we use a single line to represent an identity operator $\mathbbm{1}$.
It is a convention in a tensor-network diagram that if a tensor leg is shared by two tensors, this leg is summed over (we also say the leg is contracted). 
This is similar to Einstein’s summation convention in tensor analysis.
In the usual tensor notation written as components, the isometric condition in Eq.~\eqref{eq:isomCondi} is
\begin{align}
    \label{eq:isomCondi2math}
    \sum_{ij} p_{ijk} p_{ijk'} = \delta_{k k'},
\end{align}
where $\delta_{k k'}$ is the Kronecker delta.
\end{subequations}
The arrow in the pictorial representation of $p$ in Eq.~\eqref{eq:isom221} denotes the order of the incoming indices:
\begin{align}
    \label{eq:arrowNotation}
    \includegraphics[scale=1.0, valign=c]{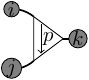}
    = p_{ijk}
    \text{ and }
    \includegraphics[scale=1.0, valign=c]{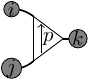}
    = p_{jik}.
\end{align}
Therefore, a reverse of the direction of the arrow of $p$ corresponds to a transpose of its two incoming legs:
\begin{subequations}
    \label{eq:swapOp}
\begin{align}
    \label{eq:pswap}
    \includegraphics[scale=1.0, valign=c]{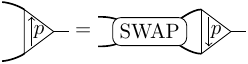},
\end{align}
where we define a SWAP operator as
\begin{align}
    \label{eq:swapDef}
    \includegraphics[scale=1.0, valign=c]{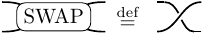}\quad.
\end{align}
\end{subequations}
A pair of two copies of $p$ contracted head-to-head, $p p^\intercal$, forms a projection operator,
\begin{align}
    \label{eq:ppt}
    p p^\intercal =
    \includegraphics[scale=1.0, valign=c]{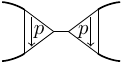},
\end{align}
and it is easy to check that $(p p^\intercal) (p p^\intercal) = p p^\intercal$ using the isometric condition in Eq.~\eqref{eq:isomCondi}.

\subsection{Block-tensor transformation using projective truncations\label{subsec:proj}}
In TNRG, the partition function of a given model is represented by a full contraction of a tensor network.
We will focus on a square-lattice tensor network with periodic boundary condition.
The partition function $Z(A, L_x, L_y)$ is the full contraction of an $L_x \times L_y$ square-lattice tensor network consisting of 4-leg tensor $A$,
\begin{align}
    \label{eq:tn2Z}
    Z(A, L_x, L_y) \texteq{def}
    \includegraphics[scale=1.0, valign=c]{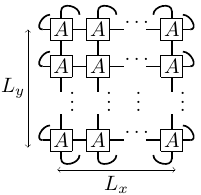}.
\end{align}
By inserting projection operators like $p p^\intercal$ into the square-lattice tensor network, one can implement a block-tensor transformation with $b = 2$ in the following way:
\begin{align}
    \label{eq:insertppt}
    \includegraphics[width=0.8\columnwidth, valign=c]{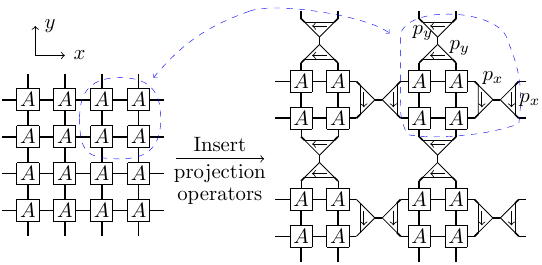}.
\end{align}
Two different isometric tensors $p_x$ and $p_y$ are used for different directions since rotation symmetry is not considered here.
It is now easy to see how to define the coarse-grained tensor $A'$:
\begin{align}
    \label{eq:bkten}
    \includegraphics[scale=1.0, valign=c]{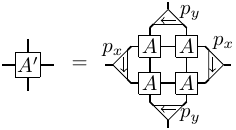}.
\end{align}
After this coarse graining, the partition function can be approximated by a coarser tensor network consisting of $A'$, with half the linear size in both directions: 
\begin{align}
    \label{eq:tncoarser}
    Z(A, L_x, L_y)
    \approx Z\left(A', \frac{L_x}{2}, \frac{L_y}{2}\right).
\end{align}
The coarser tensor network should be a good approximation of the partition function represented by the original tensor network.
Of course, if the output bond dimension $\chi' = \chi^2$, the projection operators in both directions become identity operator and the partition function is invariant under block tensor transformation.
In numerical calculations, however, truncations happen: $\chi' < \chi^2$.
A scheme is needed to choose isometric tensors such that the approximation in Eq.~\eqref{eq:tncoarser} is good.

The idea of projective truncations is choosing a local patch of the tensor network that the projection operator $pp^\intercal$ acts on, and demanding that the patch after the projection is a good approximation.
We call such a patch the environment of a given projective truncation. 
A larger environment usually gives better approximation of the partition function, but has higher computational costs. 
With the application in 3D in our mind, we choose a small environment to make the computational costs manageable. 
Let us focus on the $p_x p_x^\intercal = P_x$, whose approximation is chosen to be
\begin{align}
    \label{eq:projApprox}
    \includegraphics[scale=1.0, valign=c]{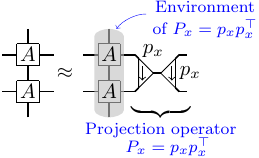}.
\end{align}
A natural way to quantify this RG approximation error is taking the norm of the difference between the two sides,
\begin{align}
    \label{eq:projError}
    \epsilon(p_x) \texteq{def}
    \includegraphics[scale=1.0, valign=c]{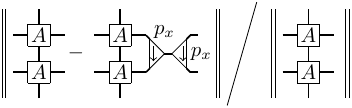},
\end{align}
where the norm $||T||$ is Frobenius norm $|| T || = \sqrt{\sum_{i_1 \ldots i_n} \left(T_{i_1 \ldots i_n}\right)^2}$, and the denominator is put there to normalize the error $0 \leq \epsilon(p_x) \leq 1$.

The isometric tensor should be such that the RG error is the smallest. 
We treat the environment of the projection operator as a matrix by regarding two legs contracted with $p_x p_x^\intercal = P_x$ as one matrix index and the remaining four legs as the other index,
\begin{align}
    \label{eq:envAA}
    \includegraphics[scale=1.0, valign=c]{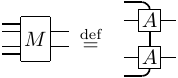}.
\end{align}
The error can be expanded as 
\begin{align}
    \label{eq:errExpand}
    \epsilon^2(p_x) 
    &=
    \frac{||M P_x - M||^2}{||M||^2} \nonumber\\
    &=
    \frac{\tr \left(\left(M P_x - M\right)^{\intercal}\left(M P_x - M\right)\right)}{\tr \left(M^\intercal M\right)} \nonumber\\
    &=
    1 - \frac{\tr\left(p_x^\intercal M^\intercal M p _x\right)}{\tr\left(M^\intercal M\right)}
    =
    1 - \tr\left(p_x^\intercal \rho p _x\right),
\end{align}
where in the last step we define $\rho \texteq{def} M^\intercal M / \tr\left(M^\intercal M\right)$, or pictorially
\begin{align}
    \label{eq:densityM}
    \includegraphics[scale=1.0, valign=c]{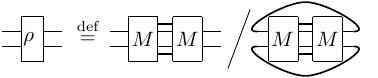}.
\end{align}
Since $\rho$ is positive semi-definite and $\tr(\rho)=1$ by construction, it can be interpreted as a density matrix.
The isometry $p_x$ that maximizes $\tr\left(p_x^\intercal \rho p_x\right)$ is a list of eigenvectors of $\rho$ corresponding to the $\chi'$ largest eigenvalues~\cite{EVP-Gen:2023}. 
Denote the eigenvalues as $\lambda_i, i=1,2,3,\cdots,\chi^2$, with $\lambda_1 \geq \lambda_2 \geq \cdots \geq \lambda_{\chi^2}$, the error of inserting $p_x p_x^\intercal$ is 
\begin{align}
    \label{eq:errorEig}
    \epsilon(p_x) = 
    \sqrt{\sum_{i = \chi'+1}^{\chi^2} \lambda_i}.
\end{align}
Notice that if the eigenvalue spectrum decays exponentially, the error can be well approximated by the $\chi'$-th eigenvalue, $\epsilon(p_x) \sim \sqrt{\lambda_{i=\chi'}}$.

After the isometric tensors $p_x, p_y$ are determined, the tensor network on the right-hand side of the tensor RG equation in Eq.~\eqref{eq:bkten} can be contracted to obtain the coarse-grained tensor.
By applying the tensor RG equation repeatedly, a RG flow in the space of 4-leg tensor is generated,
\begin{align}
    \label{eq:rgflowDef}
    A^{(0)} \mapsto A^{(1)} \mapsto \ldots \mapsto A^{(n)} \mapsto \ldots,
\end{align}
where $A^{(0)}$ denotes the initial tensor and $A^{(n)}$ is the coarse-grained tensor after $n$ RG steps.

\subsection{HOTRG-like block-tensor transformation\label{subsec:hotrg-like}}
Usually, the output bond dimension of $p_x$ and $p_y$ is set to be $\chi' = \chi$. 
The computational costs for the $A'$ contraction in Eq.~\eqref{eq:bkten} are $O(\chi^8)$;
they can be reduced by inserting another projection operator $p_i p_i^\intercal$ in the inner legs of the $2 \times 2$ block,
\begin{align}
    \label{eq:innerppt}
    \includegraphics[scale=1.0, valign=c]{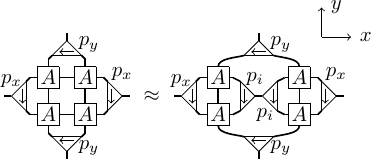}.
\end{align}
We call the transformation on the right-hand side a \emph{HOTRG-like block-tensor transformation}.
In general, the projection truncations acting on the inner leg can be different from those acting on the outer legs.
In the above example, however, we see that the environment of inner $p_i$ can be chosen to be the same as that of $p_x$ due to the translational symmetry in $x$ direction by one lattice constant. 
Therefore, it is legit to set $p_i = p_x$.
Then, the RG map becomes the usual HOTRG~\cite{HOTRG:2012}, which is a composition of two collapses in two directions,
\begin{align}
    \label{eq:hotrg2step}
    \includegraphics[scale=1.0, valign=c]{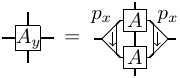},
    \includegraphics[scale=1.0, valign=c]{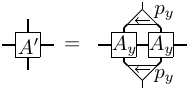}.
\end{align}
Here in 2D, the computation costs reduce from $O(\chi^8)$ to $O(\chi^7)$.
In 3D, the minimal costs of contracting a $2 \times 2 \times 2$ block of tensors are $O(\chi^{18})$.
The HOTRG reduces the costs to  $O(\chi^{11})$.
The order of the HOTRG collapses is arbitrarily chosen to be $y \rightarrow x$ here. 
Due to this arbitrary choice of the direction of collapses, the HOTRG explicitly breaks the lattice-rotation symmetry.
When entanglement filtering is incorporated, the inner and outer projective truncations will have different environments, in which case it is important to have the freedom to choose $p_i$ independently of the outer isometric tensors.

\section{Graph-independent entanglement filtering\label{sec:GIEF}}
We introduce an entanglement filtering (EF) scheme that is helpful for exploiting the lattice symmetries. 
One feature that makes an EF scheme works equally well in both 2D and 3D is that it is graph-independent~\cite{Hauru:2018}.
However, it is not clear how to exploit the lattice symmetries in the graph-independent EF scheme proposed in Ref.~\cite{Hauru:2018}.
The EF scheme proposed in this section combines the idea of being graph-independent emphasized in Ref.~\cite{Hauru:2018} and the optimization strategy developed in Ref.~\cite{Evenbly:2018};
this combination makes it possible to exploit lattice-reflection symmetry in both 2D and 3D.

At this stage, we focus on basic concepts of the EF, without considering any symmetry. 
Whenever a concept works in both 2D and 3D, we choose to expound in 2D for the ease of understanding. 
The generalization to 3D will be presented after these basic concepts are developed in 2D.

\subsection{Formulation of the entanglement filtering\label{subsec:EFformula}}
In 2D, the simplest redundant entanglement locates inside plaquettes as loop correlations, which can be explicitly demonstrated using a toy model called corner-double-line (CDL) tensors~\cite{Gu:Wen:2009}. 
Filtering of these loop correlations can be done by the following approximation,
\begin{align}
    \label{eq:2dEFapprox}
    \includegraphics[scale=0.8, valign=c]{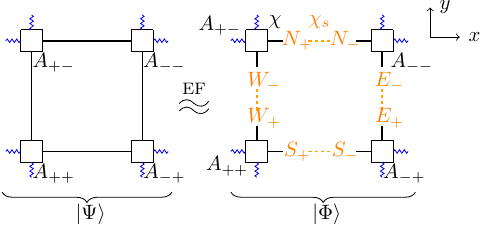}.
\end{align}
The left-hand side is the plaquette where the loop entanglement to be filtered locates. 
On the right-hand side, we insert 4 pairs of \emph{filtering matrices}, $N_-$, $N_+$, $\ldots$, with a squeezed bond dimension $\chi_s < \chi$. 
Here, the letters of the filtering matrices $E,S,W,N$ denotes the direction in the plaquette as East, South, West and North, while the subscript of the bond dimension $\chi_s$ means ``smaller'' or ``squeezed''.
The subscript of the 4-leg tensor, like the $++$ in $A_{++}$, indicates the relative position of the tensor in the plaquette.
This notation in Eq.~\eqref{eq:2dEFapprox}, along with the normal (black) and wavy (blue) lines for the tensor legs, is helpful for understanding how the EF is incorporated into a block-tensor map, which will be explained in~\autoref{subsec:assembly} below.
Moreover, this notation is also conducive to implementing the transposition trick for exploiting and lattice-reflection symmetry, which will be introduced in~\autoref{sec:lattsym}.
For the tensor network in Eq.~\eqref{eq:tn2Z}, all 4-leg tensors in Eq.~\eqref{eq:2dEFapprox} are the same: $A_{++} = A_{+-} = A_{-+} = A_{--} = A$.
The two tensor-network diagrams in Eq.~\eqref{eq:2dEFapprox} can be seen as two ket vectors $\ket{\psi}, \ket{\phi}$, which will be explained in Appendix~\ref{subsec:opts}.

If some loop correlations locate completely inside the plaquette, which means they become a single number after the contraction of the tensor network on the left-hand side, then one can truncate those states contributing to the loop correlations while leave the tensor network on the left-hand side invariant. 
If the localization of the correlations is not completely inside the loop, then one expects to find a good approximation.
Therefore, the loop filtering in Eq.~\eqref{eq:2dEFapprox} can be formulated as the following:

\emph{Given a tensor network forming a plaquette with bond dimension $\chi$, and a squeezed bond dimension $\chi_s < \chi$, determine the filtering matrices such that the filtered plaquette gives a good approximation of the original plaquette.}

More generally, a graph-independent entanglement filtering is the following procedure:

\emph{1. Identify a local patch of a tensor network, where the target entanglement to be cleaned is located. The choice of such patch can be guided by entanglement-entropy area laws~\cite{Levin:Nave:2007,Lyu:Kawashima:2023}.}
\par
\emph{2. By inserting pairs of filtering matrices, squeeze the bond dimension of the bonds where the target entanglement is involved.}
\par
\emph{3. Determine the filtering matrices such that the filtered patch gives a good approximation of the original patch.}

We will introduce a general scheme, which works in both 2D and 3D, for determining the filtering matrices in Appendix~\ref{app:findsmat}.
The squeezed bond dimension $\chi_s$ is expected to have a Goldilocks value. 
If $\chi_s$ is the same as the original bond dimension $\chi$, then no filtering happens. 
However, if $\chi_s$ is too small, the filtered patch may not be able to approximate the original patch. 
The strategy for determining a good $\chi_s$ might depend on a particular scenario where the EF is applied.

\begin{figure}[tb]
    \includegraphics[width=0.70\columnwidth,
    valign=c]{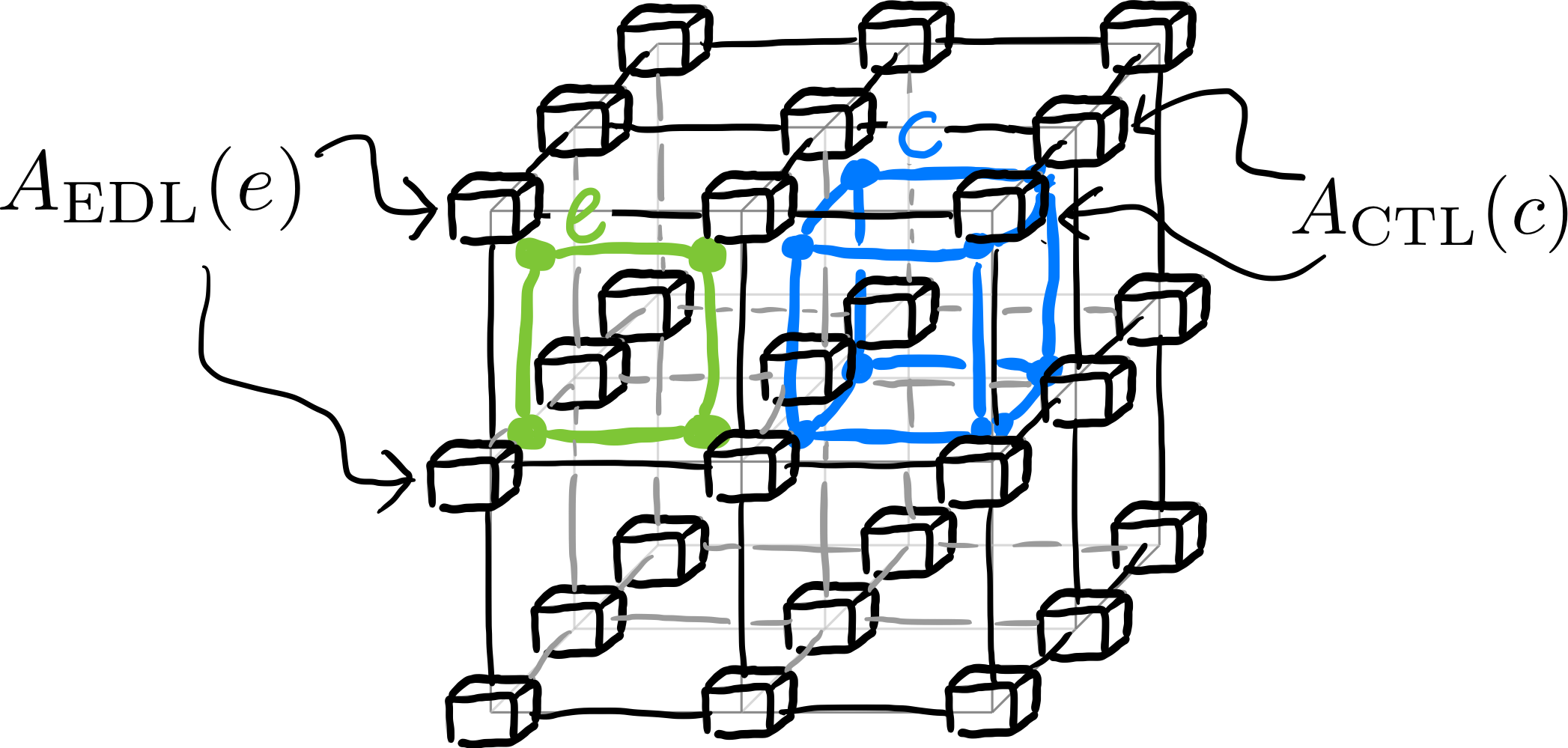}
    \caption{\label{fig:entanglement}
        Redundant entanglement structures in 3D
    }
\end{figure}

In a 3D tensor network that forms a cubic lattice, the redundant entanglement can locate inside 2D plaquettes and 3D cubes. 
In~\autoref{fig:entanglement}, tensors sit on the vertices of the cubic lattice. 
The loop correlations located in the plaquettes can be captured using a 2-leg edge matrix $e$, which is a reminiscence of the CDL toy model in 2D. 
We call the corresponding tensor in 3D an edge-double-line (EDL) tensor~\cite{Lyu:Kawashima:2023},
\begin{align}
    \label{eq:edlDef}
    \includegraphics[scale=1.0, valign=c]{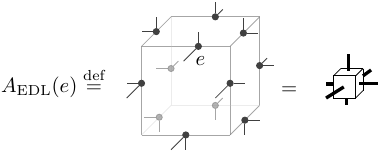}\quad.
\end{align}
The membrane correlations located in the cubes can be captured using a 3-leg corner matrix $c$, which is a new type of correlations in 3D~\cite{Hauru:2018}. 
We call the corresponding tensor a corner-triple-line (CTL) tensor,
\begin{align}
    \label{eq:ctlDef}
    \includegraphics[scale=1.0, valign=c]{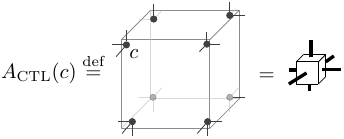}\quad.
\end{align}
These two toy models for the redundant entanglement are not fixed-point tensors of a simple block-tensor map in 3D.
This is different from the 2D case, where the CDL structure is fixed under a simple block-tensor map~\cite{Gu:Wen:2009}.
This difference can be understood as the manifestation of the scaling of the entanglement-entropy area law~\cite{Lyu:Kawashima:2023}.
The removal of these two 3D toy models corresponds to the renormalization of the linear growth term in the area law, which contains non-universal information.
Due to this linear growth, the removal of the redundant entanglement is more urgent in 3D than 2D.

Both the EDL and the CTL tensors can be filtered by choosing the following approximation for a $2 \times 2 \times 2$ cube of tensors,
\begin{align}
    \label{eq:cubeEFapprox}
    \includegraphics[width=0.95\columnwidth, valign=c]{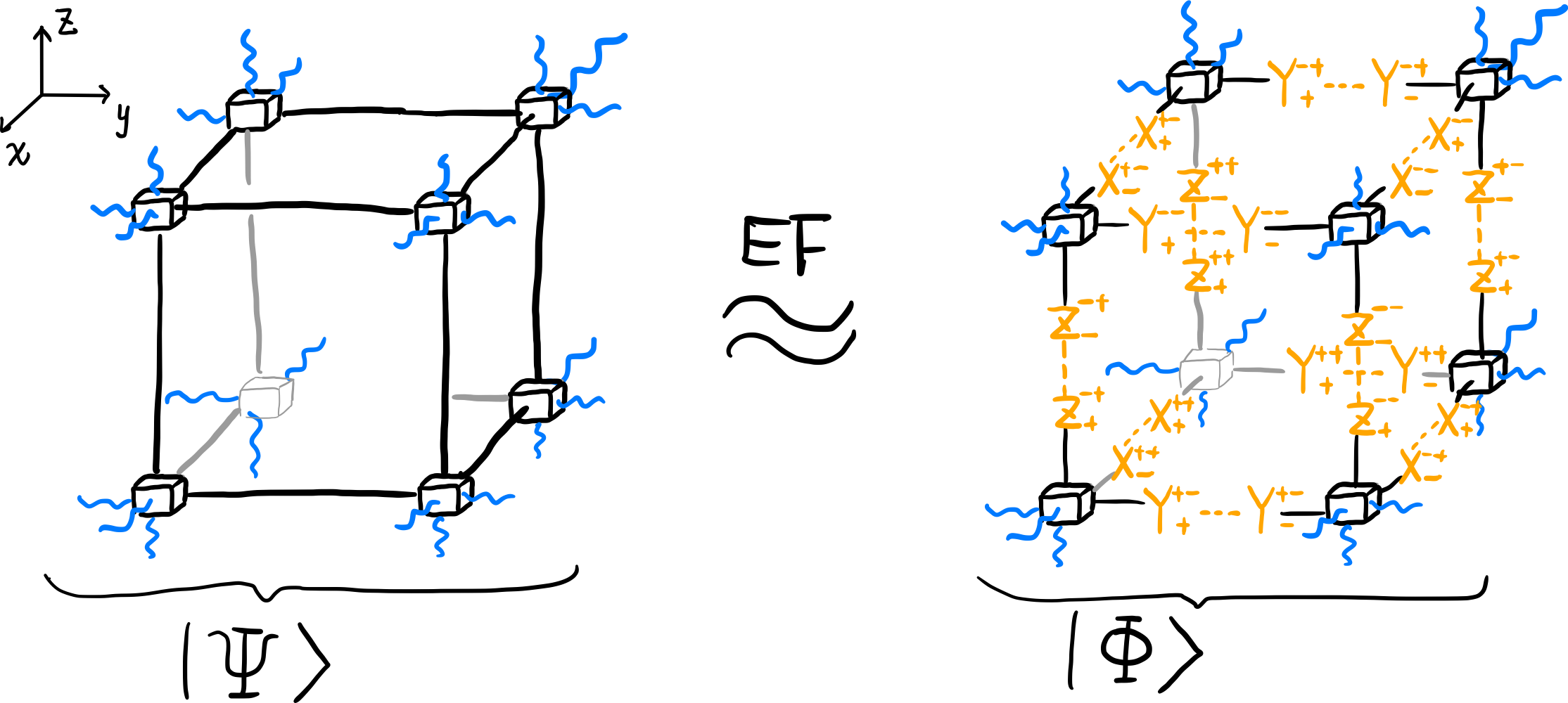}.
\end{align}
In total, there are 12 inner bonds (edges) in this cube; for each bond, a pair of filtering matrices like $X^{++}_{+}, X^{++}_{-}$ is inserted. 
In our notation for filtering matrices, $X$ denotes the direction of the bond is $x$ direction; the superscript denotes a specific bond among the four bonds in the same direction, while the subscript denotes the relative positive within the pair in a given bond. 
The bond dimension of the solid legs of the filtering matrices is $\chi$, while that of the dashed legs is $\chi_s$.
Without considering the lattice symmetry, there are $12 \times 2=24$ filtering matrices to be determined. 
Since the target patch of this 3D EF in Eq.~\eqref{eq:cubeEFapprox} is a $2 \times 2 \times 2$ cube of tensors, we call it a \emph{cube filtering}.

\subsection{Assembly of entanglement filtering and block-tensor map\label{subsec:assembly}}
\begin{figure}[tb]
    \includegraphics[width=0.95\columnwidth,
    valign=c]{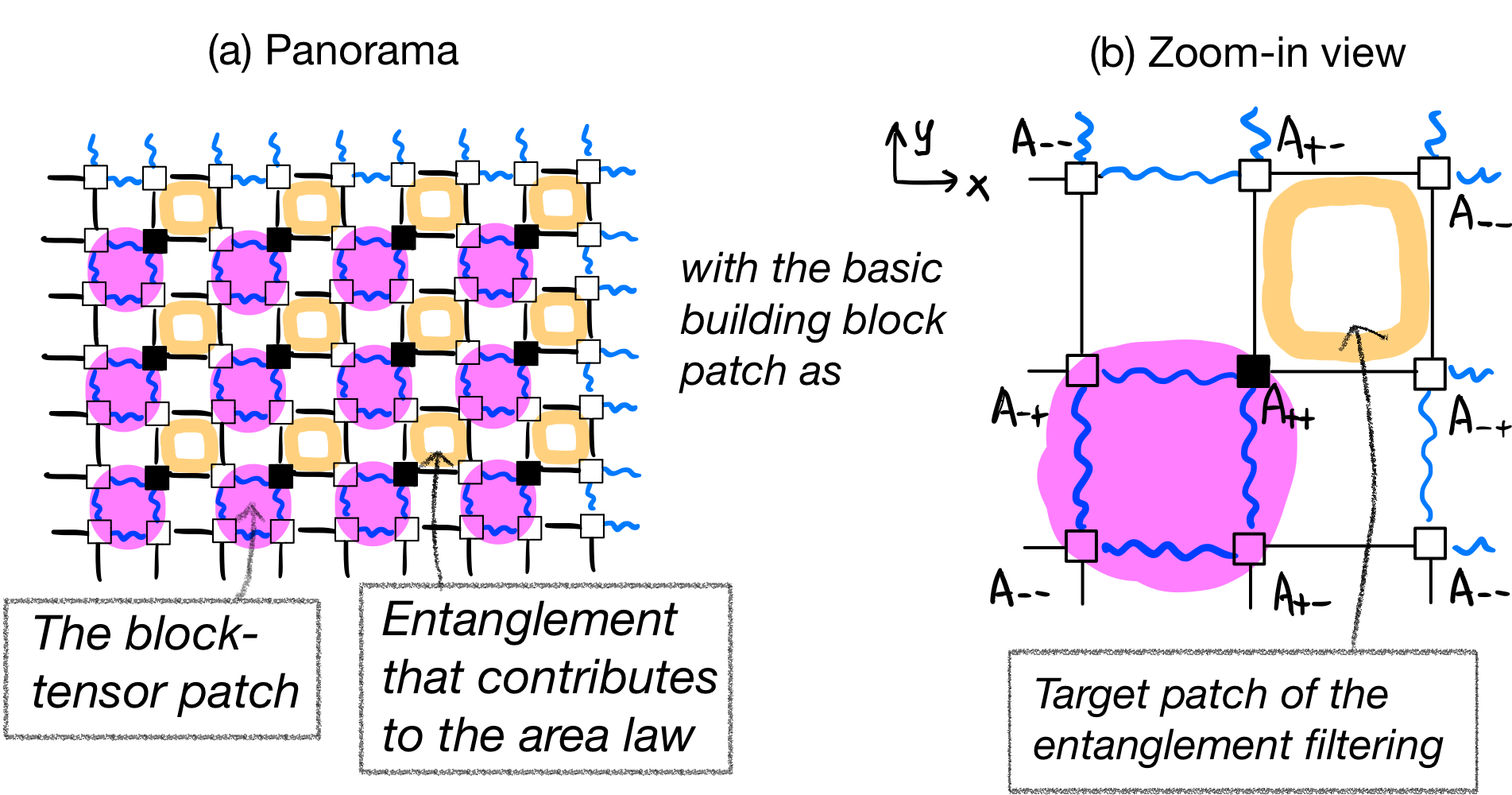}
    \caption{\label{fig:sec2-2dblockEE} Location of redundant entanglement for a simple block-tensor transformation and the principle for incorporating the EF process into a block-tensor map.
        The tensors marked with a black solid dot are the anchor points with their position labeled as $(++)$ in Eqs.~\eqref{eq:tsptrick2D} and~\eqref{eq:2dEFapproxSym}.
    }
\end{figure}

\begin{figure}[tb]
    \includegraphics[width=0.95\columnwidth,
    valign=c]{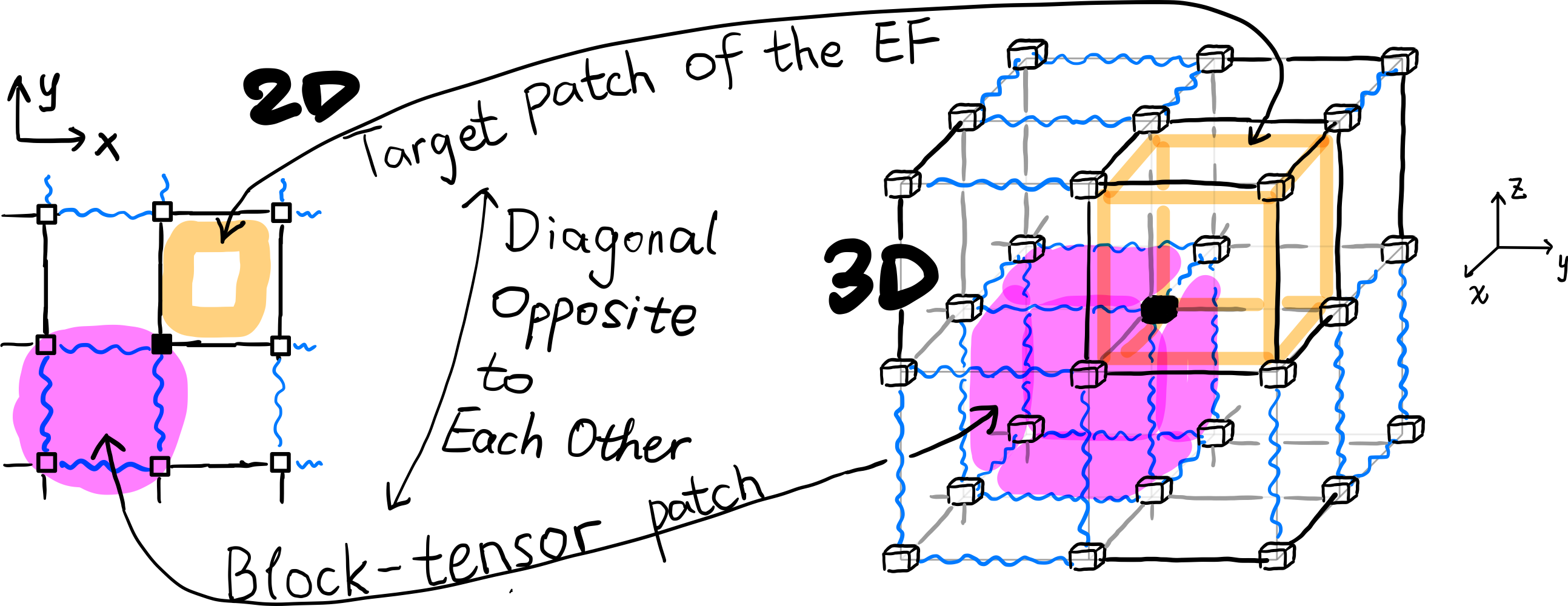}
    \caption{\label{fig:sec2-3dblockEE} The principle for incorporating the EF process into a block-tensor map in 3D.  }
\end{figure}

In this subsection, the principle of how to assemble the EF and the block-tensor map will be demonstrated in the 2D square-lattice tensor network in Eq.~\eqref{eq:tn2Z}. 
The generalization to 3D is straightforward and has already been laid out in our previous paper~\cite{Lyu:Kawashima:2024}.

The EF approximation in Eq.~\eqref{eq:2dEFapprox} should clean up the patch of tensor network where the short-range entanglement cannot be detected and eliminated by a simple block-tensor map; it is such short-range entanglement that is carried over to the next length scale under an RG transformation. 
For the block-tensor transformation in Eq.~\eqref{eq:insertppt}, the location of the redundant entanglement is suggested by the entanglement-entropy area law argument~\cite{Lyu:Kawashima:2023}, which is summarized in~\autoref{fig:sec2-2dblockEE}.
From the panorama of the relationship between the block tensor and the area-law contribution of the entanglement entropy, it becomes clear how to choose the target patch of the EF in Eq.~\eqref{eq:2dEFapprox}: the target patch of the EF locates at the four corners of the block-tensor patch.
The 3D generalization of how to combine the EF with the block-tensor map is summarized in~\autoref{fig:sec2-3dblockEE}.

After determining the eight filtering matrices in Eq.~\eqref{eq:2dEFapprox} according to the techniques explained in~\autoref{subsec:opts} and~\autoref{subsec:inits}, each 4-leg tensor in the block-tensor patch in~\autoref{fig:sec2-2dblockEE} absorbs two filtering matrices.
For example, the tensor $A_{++}$ absorbs the filtering matrices $S_+$ and $W_+$,
\begin{align}
    \label{eq:A2Af}
    \includegraphics[scale=1.0, valign=c]{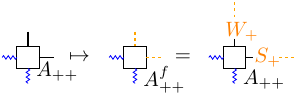}\quad.
\end{align}
The superscript $f$ on $A^f_{++}$ means ``filtered''. 
Afterwards, the block-tensor map is applied to the filtered block-tensor patch, leading to a tensor RG equation
\begin{align}
    \label{eq:efrg2dmap}
    \includegraphics[scale=0.8, valign=c]{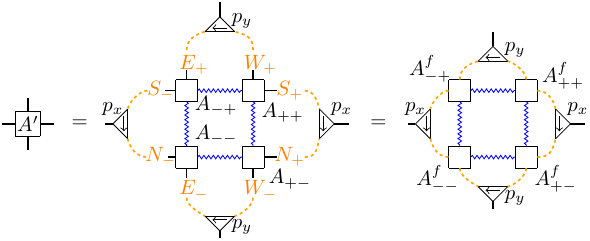}\;.
\end{align}
This is the RG equation of the block-tensor map enhanced by the graph-independent EF.

The computational costs of the block-tensor contraction in Eq.~\eqref{eq:efrg2dmap} can be reduced by performing an additional projective truncation in the inner legs of the block using a pair of a new isometric tensor $p_i$, as has been explained in~\autoref{subsec:hotrg-like}. 
After incorporating the EF, $p_i$ should be chosen independently of the outer isometric tensor $p_x$. 
The output bond dimension $\chi'$ of $p_i$ should be a slightly larger than $\chi$ to make sure that this inner projection truncation always has an error smaller than those of the two outer projective truncations using $p_x$ and $p_y$.

\section{Exploiting the lattice-reflection symmetry\label{sec:lattsym}}
In this section, we will explain how to exploit the lattice-reflection symmetry in TNRG.
Specifically, we will demonstrate
\begin{itemize}
    \item the appropriate definition of lattice-reflection symmetry in TNRG,
    \item how to preserve and impose the lattice-reflection symmetry using a transposition trick, and
    \item the symmetry properties of the isometric tensor in projective truncations and of the filtering matrices in the graph-independent entanglement filtering.
\end{itemize}

\subsection{Definition of the reflection symmetry\label{subsec:defSym}}
We write down the definition of the lattice-reflection symmetry in 1D, 2D and 3D TNRG, respectively.

\subsubsection{An 1D toy example}
The definition of the lattice-reflection symmetry in 1D TNRG is straightforward. 
The tensor-network representation of the partition function is nothing but a chain of copies of a transfer matrix $A$, while the block-tensor map is a matrix multiplication of $A$,
\begin{subequations}
    \label{eq:bkten1d}
\begin{align}
    \label{eq:bkten1dPic}
    \includegraphics[scale=1.0, valign=c]{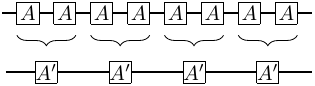}\quad,
\end{align}
and the tensor RG equation is simply
\begin{align}
    \label{eq:bkten1dEq}
    A' = AA.
\end{align}
\end{subequations}
Take the initial tensor of the nearest-neighbor (NN) Ising model as an example.
The initial tensor is 
\begin{align}
    \label{eq:ising1dTen}
    A^{(0)}_{\sigma \sigma'} =
    e^{\beta \sigma \sigma'}
    =
    \includegraphics[scale=1.0, valign=c]{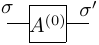}\quad,
\end{align}
where $\beta$ is the inverse temperature and $\sigma,\sigma' \in \{-1, +1\}$ are Ising spins.
The lattice-reflection symmetry, in this tensor-network language, becomes the fact that $A^{(0)}$ is symmetric,
\begin{align}
    \label{eq:1dTenSym}
    \left(A^{(0)}\right)^\intercal = A^{(0)}.
\end{align}
This symmetry is trivially preserved by the block-tensor map in Eq.~\eqref{eq:bkten1d}, since
\begin{align}
    \label{eq:1dTenSymP}
    \left(A^{(1)}\right)^{\intercal}
     &= \left(A^{(0)} A^{(0)}\right)^\intercal
     = \left(A^{(0)}\right)^\intercal \left(A^{(0)}\right)^\intercal \nonumber\\
     &= A^{(0)} A^{(0)} = A^{(1)}
\end{align}
Therefore, the definition of the lattice-reflection symmetry in 1D is that the transfer matrix is symmetric $A^\intercal = A$.

\subsubsection{In 2D}
In 2D TNRG, however, one feature appears that is not present in the above 1D example. 
A proper definition of lattice-reflection symmetry is
\begin{subequations}
    \label{eq:refl2d}
\begin{align}
    \includegraphics[scale=1.0, valign=c]{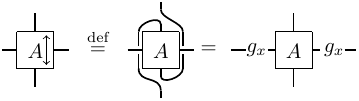}\quad,\\
    \includegraphics[scale=1.0, valign=c]{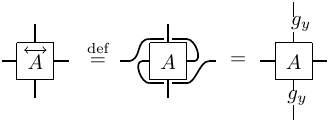}\quad.
\end{align}
\end{subequations}
The new feature is the SWAP-gauge matrices, $g_x$ and $g_y$, which have the following property,
\begin{align}
    \label{eq:2dSWAPmat2id}
    g_x g_x = g_y g_y = 1,
\end{align}
reflecting the $\mathbb{Z}_2$ nature of the lattice-reflection symmetry.

In 2D, it is trickier to see how this symmetry is preserved under an RG transformation.
We will develop a transposition trick in~\autoref{subsec:tsptrick} to make it less tricky.
In~\autoref{subsec:algo2d}, we propose a 2D TNRG algorithm using the transposition trick and prove that the lattice-reflection symmetry of $A$ in Eq.~\eqref{eq:refl2d} is preserved for the coarse-grained tensor $A'$ in Eq.~\eqref{eq:2dEFhotRGeq} with two coarse-grained SWAP-gauge matrices $g_x'$ and $g_y'$.

\subsubsection{In 3D}
The 3D definition is similar to that of the 2D.
Take the reflection across the $z$-plane (we refer to a plane using its normal direction) as an example, the definition of this lattice-reflection symmetry is
\begin{align}
    \label{eq:refl3d}
    \includegraphics[width=0.85\columnwidth, valign=c]{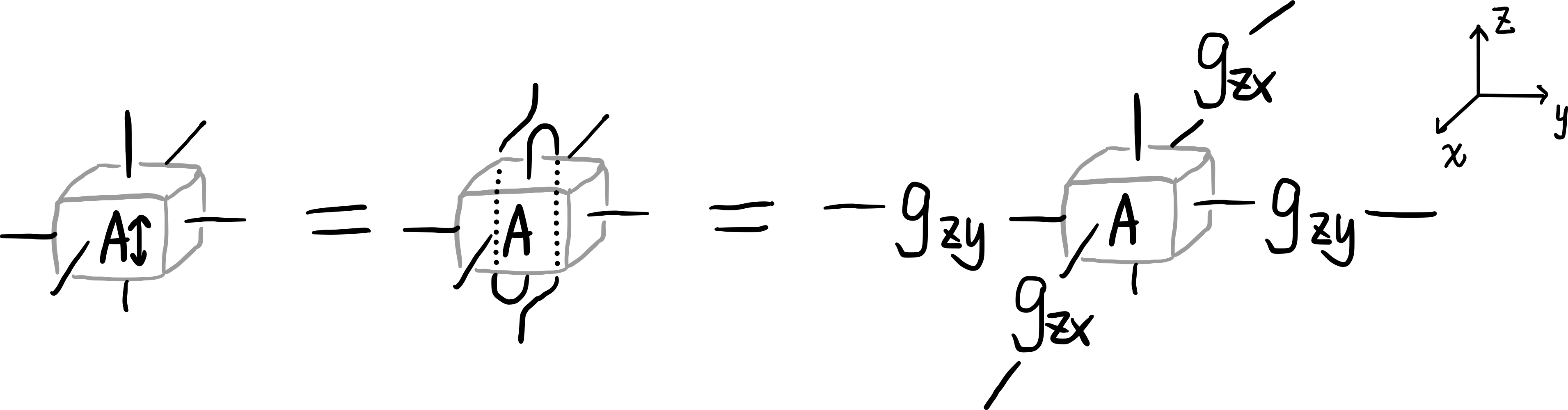}.
\end{align}
There are two SWAP-gauge matrices $g_{zx}, g_{zy}$, where the first index of $g$ denotes the reflection plane, while the second denotes the leg the SWAP-gauge matrix acts on. 
These SWAP-gauge matrices also have the $\mathbb{Z}_2$ property,
\begin{align}
    \label{eq:3dSWAPmat2id}
    g_{zx} g_{zx} = g_{zy} g_{zy} = 1.
\end{align}
The lattice-reflection symmetry for reflection across the $x$- and $y$-plane is defined similarly. 
In total, there are six SWAP-gauge matrices $(g_{zx}, g_{zy}, g_{yz}, g_{yx}, g_{xy}, g_{xz})$, two for each direction.
Like the 2D case, we will postpone the proof that this lattice-reflection symmetry can be preserved under an RG transformation in~\autoref{subsec:3dalgo}.

For the cubic-lattice Ising model with the nearest-neighbor-interaction at inverse temperature $\beta$, the tensor-network representation of its partition function can be constructed according to the procedure described in Ref.~\cite{Hauru:2018}.
The tensor network is also a cubic lattice consisting of copies of six-leg initial tensor $A$ whose components are
\begin{subequations}
\begin{align}
    \label{eq:3dIsingA0}
    A_{i_x i_{x'} i_y i_{y'} i_y i_{y'}}^{(0)}
    =
    \sum_{\sigma}
    W_{\sigma i_x} W_{\sigma i_{x'}}
    W_{\sigma i_y} W_{\sigma i_{y'}}
    W_{\sigma i_z} W_{\sigma i_{z'}},
\end{align}
where
\begin{align}
    \label{eq:3dIsingA0W}
    W = 
\begin{pmatrix}
\sqrt{\cosh{\beta}} & \sqrt{\sinh{\beta}} \\
\sqrt{\cosh{\beta}} & -\sqrt{\sinh{\beta}} \\
\end{pmatrix}
\end{align}
\end{subequations}
and the three legs $i_{x}, i_{y}, i_{z}$ of $A$ point towards the positive $x,y,z$ directions respectively, while the other three legs point towards negative directions.
It is easy to see that this initial tensor satisfies the definition of lattice-reflection symmetry  with all the six SWAP-gauge matrices being the identity matrix.

\subsection{Origin of the SWAP-gauge matrix\label{subsec:originSWAP}}
It was Evenbly~\cite{Evenbly:2017:algo} who first pointed out the necessity of the SWAP-gauge matrices in the definition of symmetry in terms of the tensors in Eqs.~\eqref{eq:refl2d} and~\eqref{eq:refl3d}.
However, it was not clearly understood whence these matrices arise.
In this subsection, a simple example in 2D is used to demonstrate the origin of these SWAP-gauge matrices.
Take the initial tensor of the 2D Ising model with the NN interaction,
\begin{align}
    \label{eq:ising2dTen}
    A^{(0)}_{\sigma_x \sigma_{x'} \sigma_{y} \sigma_y'} 
    &=
    e^{\beta (\sigma_{x} \sigma_{y} + \sigma_{y} \sigma_{x'} + \sigma_{x'} \sigma_{y'} + \sigma_{y'} \sigma_{x})} \nonumber\\
    &=
    \includegraphics[scale=1.0, valign=c]{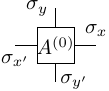}\quad.
\end{align}
The tensor legs represent the Ising spins and the tensor itself encodes a local Boltzmann weight containing four NN interactions.
It is easy to see that this initial tensor satisfies the lattice-reflection symmetry in Eq.~\eqref{eq:refl2d} with trivial SWAP-matrices $g_x = g_y = 1$.

To see how the SWAP-gauge matrix arises, we perform an \emph{exact} coarse graining of two tensors along the $x$ direction, 
\begin{align}
    \label{eq:xblockA0}
    \includegraphics[width=0.85\columnwidth, valign=c]{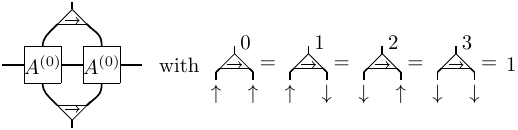},
\end{align}
where all other components of the isometric tensor vanish.
The isometric tensor simply relabel the two indices into one.
We can study the symmetry property of this coarse-grained tensor by swapping its legs in $x$ direction and check how it is related to the original one without the swapping,
\begin{align}
    \label{eq:swap2A0}
    \includegraphics[scale=1.0, valign=c]{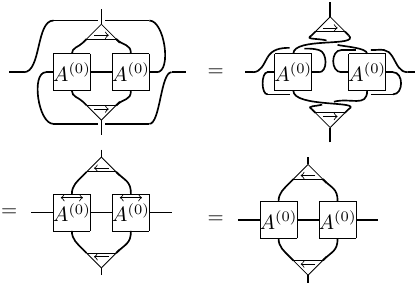}.
\end{align}
Two copies of $A^{(0)}$ are exchanged in the first equality. 
The second equality uses another way to represent the transposition of $x$ legs, as well as the fact that the SWAP operation in Eq.~\eqref{eq:swapOp} changes the arrow direction of the 3-leg tensor. 
In the last equality, the symmetry of the initial tensor under the reflection in Eq.~\eqref{eq:refl2d} (with trivial $g_y = 1$) is used. 
The last expression in Eq.~\eqref{eq:swap2A0} is almost the same as the tensor network in Eq.~\eqref{eq:xblockA0} except that the arrow of the isometric tensor is reversed.

To derive the relationship between the two isometric tensor with opposite arrow, recall that when the output dimensionality in Eq.~\eqref{eq:isom221} $\chi' = \chi^2$, the isometric tensor contains a complete set of orthonormal basis, and the projection operator in Eq.~\eqref{eq:ppt} becomes identity,
\begin{align}
    \label{eq:ppIden}
    \includegraphics[scale=1.0, valign=c]{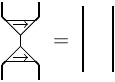}\quad.
\end{align}
This equation is clearly satisfied by the isometric tensor in the exact coarse graining in Eq.~\eqref{eq:xblockA0}.
Apply this identity to the two legs of the isometric tensor with arrow pointing to the left, we have
\begin{align}
    \label{eq:isomChDir}
    \includegraphics[width=0.83\columnwidth, valign=c]{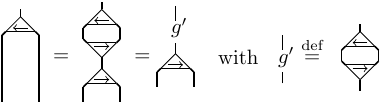}\;,
\end{align}
which can be used to change the direction of the arrow of the isometric tensor in the last diagram in Eq.~\eqref{eq:swap2A0},
\begin{align}
    \label{eq:xblockA0sym}
    \includegraphics[scale=1.0, valign=c]{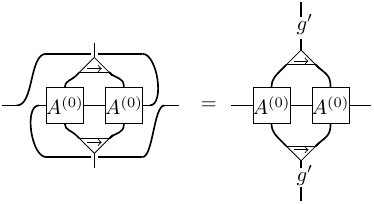}\;.
\end{align}
Therefore, we see that the reflection across the $y$ axis not only transposes two $x$ legs of the tensor, but it also acts on the two $y$ legs as an SWAP operator. 
This is reason why the SWAP-gauge matrix is necessary in the definition of the lattice-reflection symmetry in 2D and 3D in Eqs.~\eqref{eq:refl2d} and~\eqref{eq:refl3d}.
Notice that the SWAP-gauge matrix $g'$ appears in Eqs.~\eqref{eq:isomChDir} and~\eqref{eq:xblockA0sym} because it is associated with the coarse-grained tensor $A'$ after the RG map, instead of the original tensor $A$.

Although in this subsection, we demonstrate the origin of the SWAP-gauge matrix using an exact coarse graining, where the isometric tensor satisfies Eq.~\eqref{eq:ppIden}, the result in Eq.~\eqref{eq:isomChDir} about how an isometric tensor transforms when its arrow changes direction remains the same even when truncation happens ($\chi' < \chi^2$) in projective truncations.
We will prove this claim in~\autoref{subsec:symIT}.

\subsection{Transposition trick\label{subsec:tsptrick}}
After the definition of the lattice-reflection symmetry is written down, we explain, in this subsection, how to preserve and impose this symmetry in TNRG using a transposition trick. 
It is helpful to have the option to impose the symmetry, which can restrain the possible interactions in the tensor RG space. 
The projective truncations and EF will be applied after the transposition trick. 
We will see below in~\autoref{subsec:symFM},~\autoref{subsec:symIT} and~\autoref{sec:algo} that the transposition trick streamlines the implementation of both the projective truncations and the EF process, and reduces the necessity of considering the SWAP-gauge matrices in the symmetry argument.

\subsubsection{The 1D toy example}
The basic intuition of the transposition trick comes from imposing the matrix to be symmetry under multiplication in 1D. 
In the 1D block-tensor RG in Eq.~\eqref{eq:bkten1d}, $A' = A A$, the 1D lattice-reflection symmetry $A^\intercal = A$ is preserved, but not imposed. 
Specifically, if the symmetry condition of the origin tensor $A$ is broken by some numerical errors or artifacts, the lattice-reflection symmetry of the coarse-grained tensor $A'$ is also broken.
To impose it in numerical calculation, one way is transposing half the matrices before the block-tensor map,
\begin{align}
    \label{eq:tsptrick1D}
    \includegraphics[scale=1.0, valign=c]{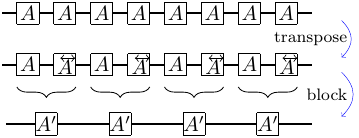},
\end{align}
where the double-arrow notation in the diagram, like Eq.~\eqref{eq:refl2d}, means the transposition of the two legs of the tensor.
After this transposition trick, the tensor RG equation in Eq.~\eqref{eq:bkten1d} becomes
\begin{align}
    \label{eq:bkten1d-sym}
    A' = A A^\intercal.
\end{align}
In this way, the lattice-reflection symmetry is not only preserved, but is also imposed. 
The reason is that the coarse-grained tensor $A'$ in Eq.~\eqref{eq:1dTenSymP} is symmetry by construction regardless of the symmetry of the input tensor $A$.

\subsubsection{In 2D}
In 2D, this transposition trick becomes transposing every other line of tensors. 
Focusing on a $2 \times 2$ block-tensor patch in~\autoref{fig:sec2-2dblockEE}, the process looks like
\begin{align}
    \label{eq:tsptrick2D}
    \includegraphics[width=0.85\columnwidth, valign=c]{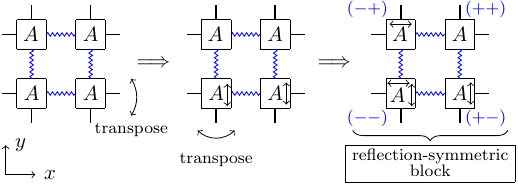}.
\end{align}
In the first step, tensors in the lower row of the $2 \times 2$ block are transposed for their $y$ legs. 
Due to the lattice-reflection symmetry of the tensor in Eq.~\eqref{eq:refl2d}, the partition function remains invariant under this transposition,
\begin{align}
    \label{eq:tsptrick2D-zinv}
    \includegraphics[width=0.83\columnwidth, valign=c]{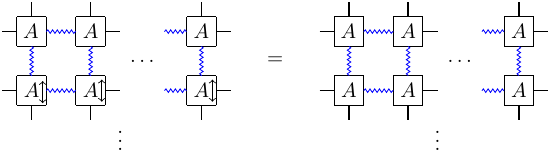}\;,
\end{align}
since the SWAP-gauge matrix $g_x$ in Eq.~\eqref{eq:refl2d} squares to identity, $g_x g_x = 1$. 
For the same reason, in the second step of Eq.~\eqref{eq:tsptrick2D}, the transposition of the $x$ legs of tensors in the left column of the $2 \times 2$ block also preserves the partition function. 
In summary, the partition function is invariant under the transposition trick in Eq.~\eqref{eq:tsptrick2D} if the tensor $A$ has the lattice-reflection symmetry defined in Eq.~\eqref{eq:refl2d}.
After the transposition trick, the last $2 \times 2$ block in Eq.~\eqref{eq:tsptrick2D}, which we will refer to as the \emph{reflection-symmetric block}, is the analogy of the 1D case in Eq.~\eqref{eq:bkten1d-sym}, $A A^\intercal$.

\subsubsection{In 3D}
In 3D, the transposition trick is transposing every other layer of tensors. 
Focusing on a $2 \times 2 \times 2$ block of tensors, the process looks like
\begin{align}
    \label{eq:tsptrick3D}
    \includegraphics[width=0.85\columnwidth, valign=c]{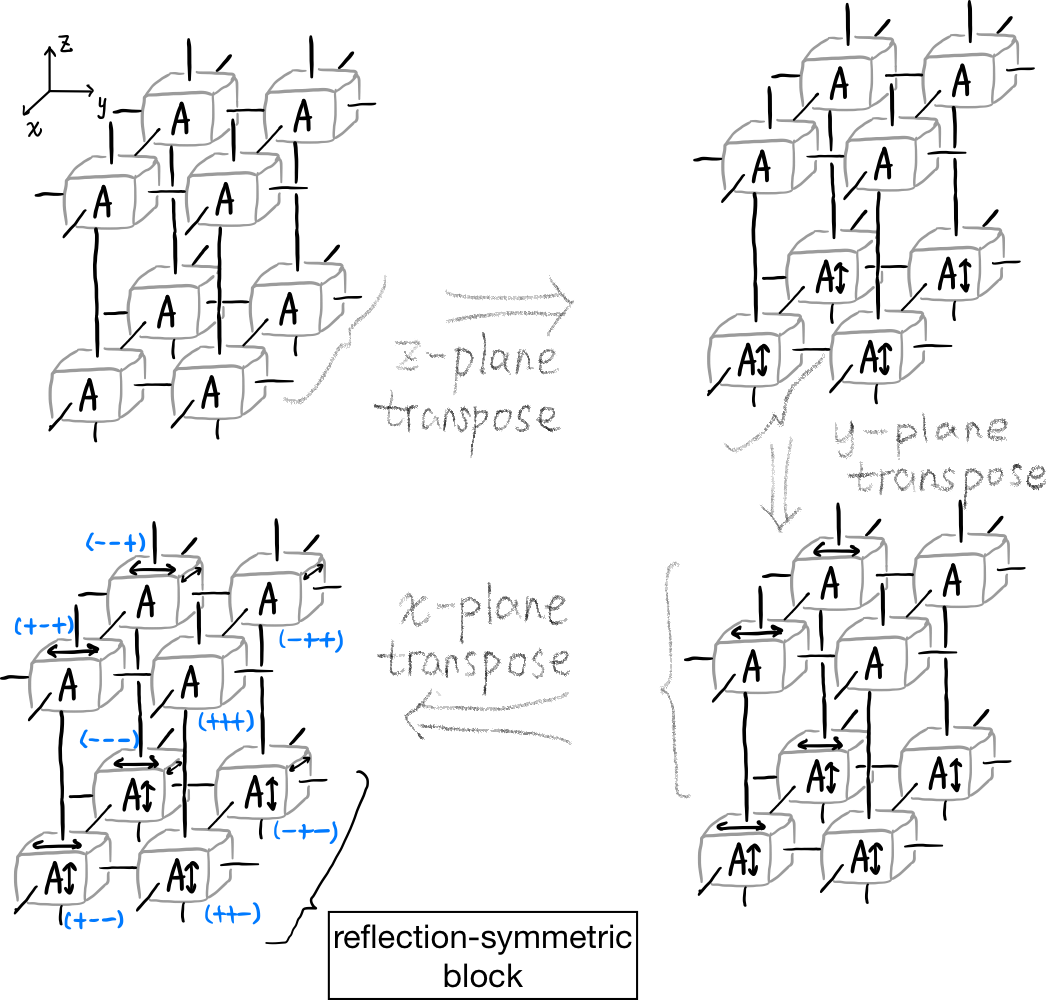}.
\end{align}
Similarly to the 2D case, one can show that the partition function is invariant under this transposition trick.

\subsubsection{The rules of the transposition trick in a reflection-symmetric block}
We can write down the rules of how the tensors in the reflection-symmetric blocks are transposed in Eqs.~\eqref{eq:tsptrick2D} and~\eqref{eq:tsptrick3D}.
Use the 2D case in Eq.~\eqref{eq:tsptrick2D} as an example:
\begin{itemize}
    \item The tensor located at $x+$ and $y+$ corner has no transposition. 
        We label this position $(++)$ and refer to this corner as an \emph{anchor point}.
    \item Other three positions are labeled according to its relative position to the anchor point. 
        For example, $(-+)$ indicates the $x-$ and $y+$ corner of the block.
    \item The legs of the tensor are transposed according to the minus sign of its relative position in the block.
        For example, the tensor located at $(-+)$ corner has its $x$ legs transposed.
\end{itemize}
It is straightforward to generalize these rules to the 3D case in Eq.~\eqref{eq:tsptrick3D}.

\subsection{Symmetry property of the filtering matrices\label{subsec:symFM}}
In this subsection, we study the symmetry properties of the filtering matrices after the transposition trick.
To keep the diagrams clean and easy to understand, we focus on our proof in 2D.

\subsubsection{In 2D}
Without considering the lattice-reflection symmetry, the EF approximation in 2D is shown in Eq.~\eqref{eq:2dEFapprox}.
After the transposition trick, we will show that the number of independent filtering matrices reduces from eight to two, and the EF approximation becomes
\begin{align}
    \label{eq:2dEFapproxSym}
    \includegraphics[width=0.85\columnwidth, valign=c]{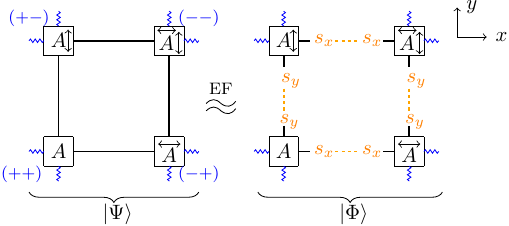}.
\end{align}
Notice that the target patch of the EF in this equation is a different $2 \times 2$ block from the block-tensor patch in Eq.~\eqref{eq:tsptrick2D}, with the EF block locating at the four corner of the block-tensor patch, as has been demonstrated in~\autoref{fig:sec2-2dblockEE}.

To show how the number of independent filtering matrices reduces, our strategy is showing the $\Upsilon_0$ tensor in Eq.~\eqref{eq:UpsilonP0} for initialization of $E_+$ and $E_-$ is the same as that of $W_+$ and $W_-$ (see Eq.~\eqref{eq:diagram2state}).
Recall that the $E_+, E_-$ pair is initialized by treating their product as a low-rank matrix $L_E$ in Eq.~\eqref{eq:LEdef}, whose $\Upsilon_0$ tensor, according to Eq.~\eqref{eq:UpsilonP0}, is
\begin{align}
    \label{eq:UpsilonLE}
    \includegraphics[width=0.85\columnwidth, valign=c]{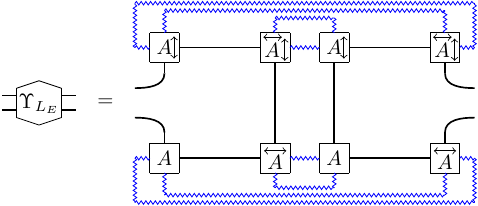}\;.
\end{align}
Meanwhile, the $\Upsilon_0$ tensor for $L_W$ is
\begin{align}
    \label{eq:UpsilonLW}
    \includegraphics[width=0.85\columnwidth, valign=b]{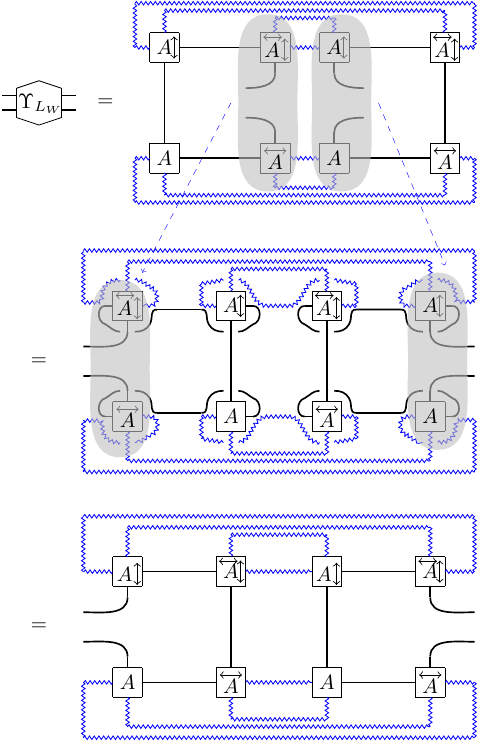}\;.
\end{align}
By comparing the last tensor-network diagram in Eq.~\eqref{eq:UpsilonLW} with Eq.~\eqref{eq:UpsilonLE}, we conclude that
\begin{align}
    \label{eq:UpsilonLWeqLE}
    \Upsilon_{L_W} = \Upsilon_{L_E}.
\end{align}
Therefore, the initialization of the $L_W$, according to the scheme in Appendix~\ref{subsec:inits}, is the same as that of $L_E$ (see Eq.~\eqref{eq:2dEFapprox}),
\begin{align}
    \label{eq:EeqW}
    W_+ = E_+, W_- = E_-.
\end{align}

At this point, we want to comment about one advantage of the transposition trick.
Without the transposition trick, the $\Upsilon_0$ tensors for $L_W$ and $L_E$ are no longer identical.
There will be related to each other by the SWAP-gauge matrix $g_y$ of the tensor $A$.
Therefore, the filtering matrices corresponding to $L_W$ are related to those of $L_E$ by an appropriate multiplication of the SWAP-gauge matrix $g_y$.
We see that the transposition trick streamlines the implementation of the EF process when the lattice-reflection symmetry is exploited---the SWAP-gauge matrix rarely appears in the algorithm and the proof.

By dragging the tensor-network diagram of $\Upsilon_{L_E}$ in Eq.~\eqref{eq:UpsilonLE} around like Eq.~\eqref{eq:UpsilonLW}, it is easy to see the following symmetry of $\Upsilon_{L_E}$,
\begin{align}
    \label{eq:UpsilonLEsym}
    \includegraphics[scale=1.0, valign=c]{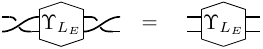}\quad,
\end{align}
whose consequence is the $L_E$ determined according to Eq.~\eqref{eq:LEsol} is symmetric,
\begin{align}
    \label{eq:LEsym}
    \includegraphics[scale=1.0, valign=c]{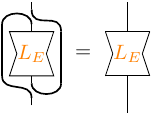}\quad.
\end{align}
Here, we need \emph{an additional assumption that we can choose $L_E$ to be positive semidefinite}, so that $L_E$ can be split using eigenvalue decomposition,
\begin{align}
    \label{eq:LEsplit}
    \includegraphics[scale=0.9, valign=c]{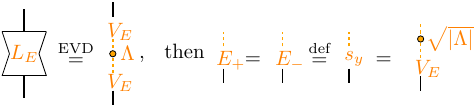}\; .
\end{align}
Therefore, we have shown that only a single filtering matrix $s_y$ is needed for the two vertical bonds in the EF approximation. 
Similarly, one can show the same result for horizontal bonds, whose filtering matrix is $s_x$ in Eq.~\eqref{eq:2dEFapproxSym}.
The assumption that the $L_E$ might be taken as positive semidefinite can be justified in numerical calculation by checking whether the fidelity of the EF approximation in Eq.~\eqref{eq:fidelityDef} is high.

\subsubsection{In 3D}
In 3D, when the EF is applied to the reflection-symmetric $2 \times 2 \times 2$ target patch of the EF, the number of independent filtering matrices reduces from $24$ (see Eq.~\eqref{eq:cubeEFapprox}) to three, one for each direction. 
Using an argument similar to the 2D case, one can show that the EF approximation can be taken to be
\begin{subequations}
    \label{eq:3dEFapproxSym}
\begin{align}
    \label{eq:3dEFapproxSym1}
    \includegraphics[width=0.95\columnwidth, valign=c]{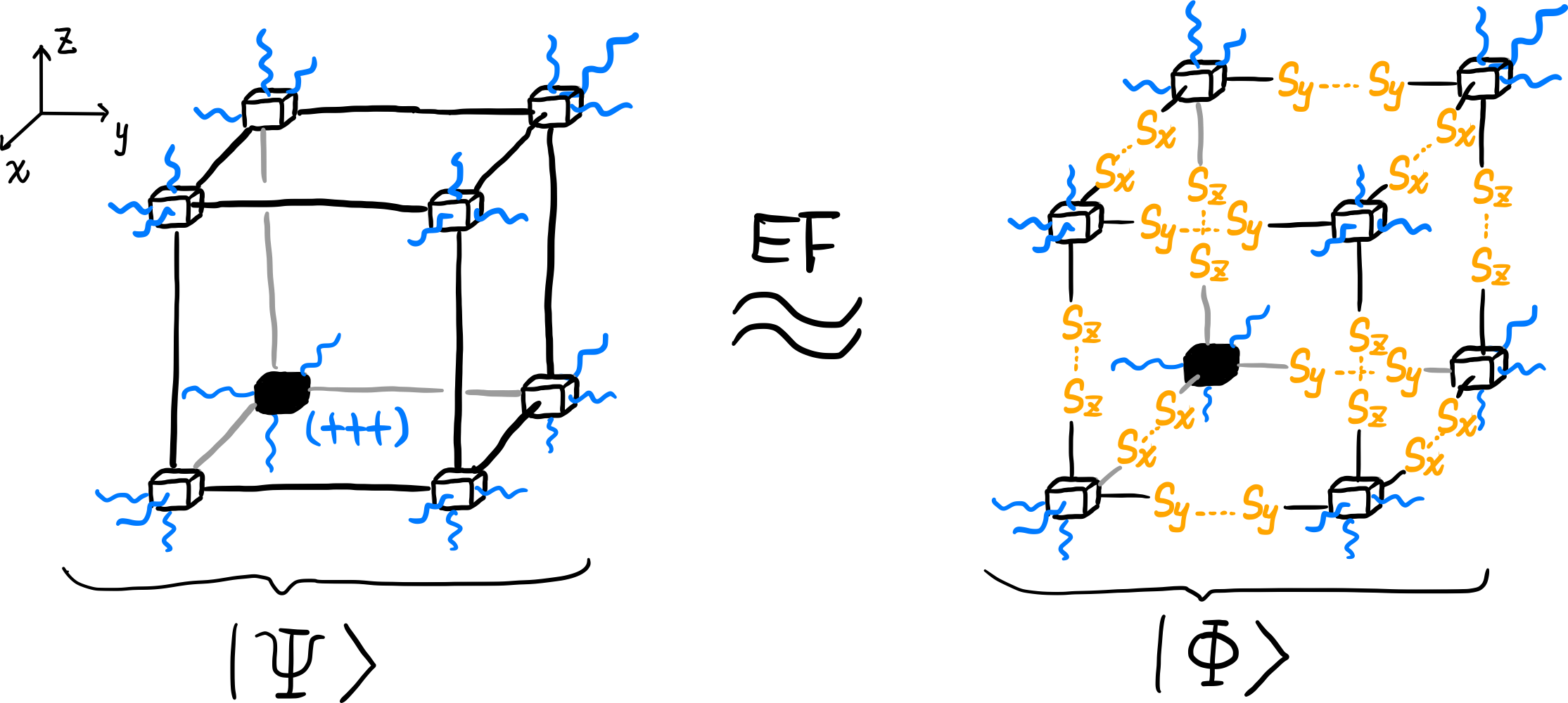},
\end{align}
where the $2 \times 2 \times 2$ cube is the reflection-symmetric block in Eq.~\eqref{eq:tsptrick3D},
\begin{align}
    \label{eq:3dEFapproxSym2}
    \includegraphics[width=0.95\columnwidth, valign=c]{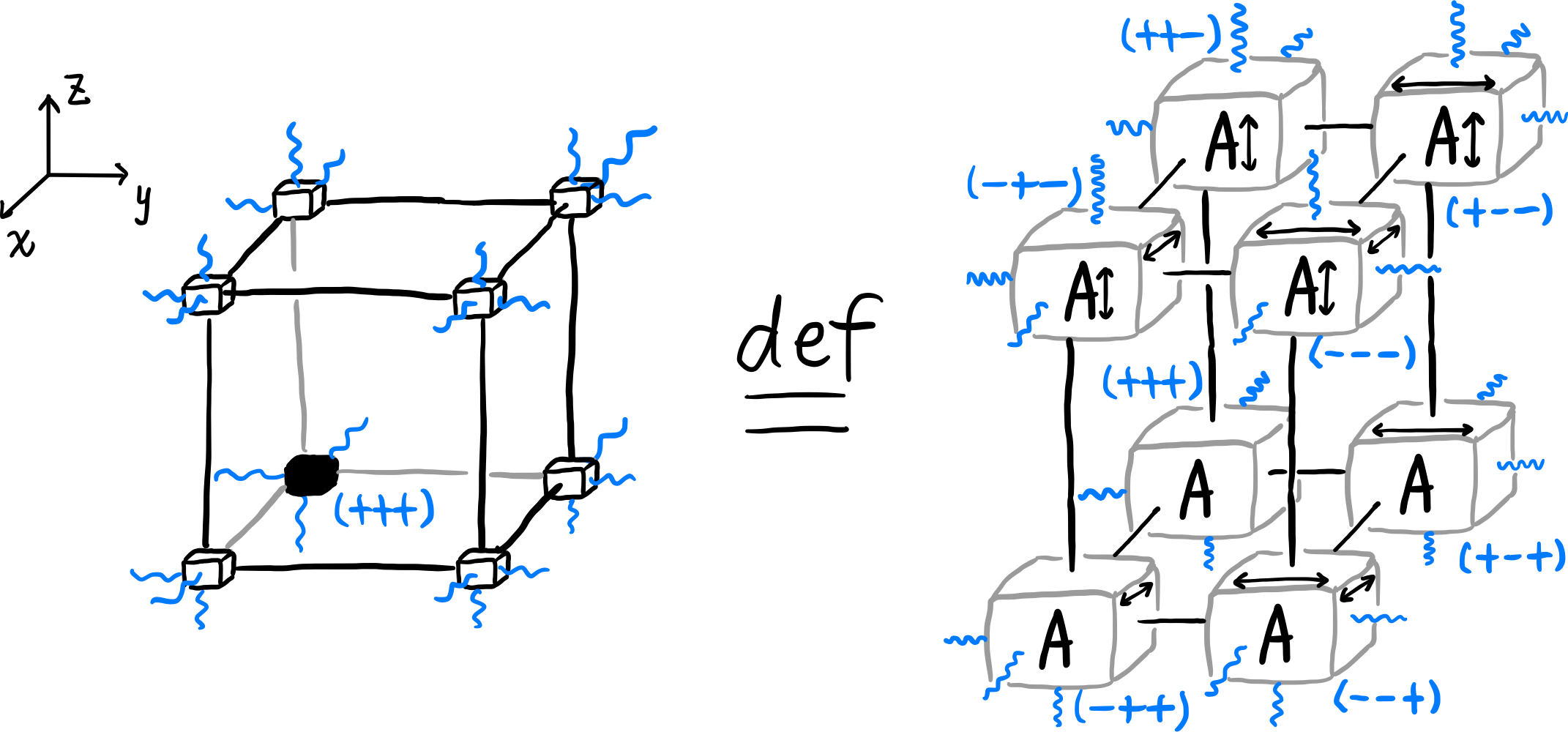}.
\end{align}
\end{subequations}

\subsection{Symmetry property of the isometric tensors\label{subsec:symIT}}
In this subsection, we show the symmetry property of the isometric tensors when the projective truncations are applied to the reflection-symmetric block after the transposition trick. 
The result is that an isometric tensor in projective truncations has the same symmetry property as the exact coarse-graining example we saw before in Eq.~\eqref{eq:isomChDir},
\begin{align}
    \label{eq:isomSym}
    \includegraphics[width=0.85\columnwidth, valign=c]{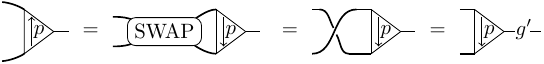},
\end{align}
where the first two equal signs come from the definition of the SWAP operator in Eq.~\eqref{eq:swapOp}, and the $g'$ is the SWAP-gauge matrix similar to that in Eq.~\eqref{eq:isomChDir}.
However, in the projective truncations, we will show that the SWAP-gauge matrix $g'$ is diagonal with diagonal entries $+1$ or $-1$.

\subsubsection{In 2D}
We will provide a proof of this property in 2D, whose generalization to 3D is easy to see.
After the transposition trick and the entanglement filtering, the $2 \times 2$ block-tensor transformation gives the following tensor RG equation,
\begin{subequations}
    \label{eq:2dEFRGeq}
\begin{align}
    \label{eq:2dEFRGeq1}
    \includegraphics[width=0.85\columnwidth, valign=c]{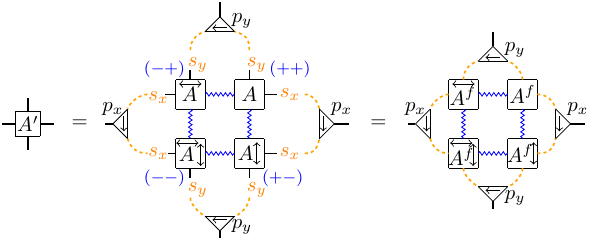},
\end{align}
where the filtered tensor $A^f$ in the last diagram is defined as
\begin{align}
    \label{eq:2dEFRGeq2}
    \includegraphics[scale=1.0, valign=c]{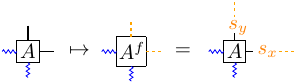}\quad.
\end{align}
\end{subequations}
When there is no EF, the two filtering matrices $s_x$ and $s_y$ are both the identity matrix. 
It suffices to show the proof for $p_x$, since the proof for $p_y$ is the same.
According to the method of projective truncations explained in~\autoref{sec:projtrunc}, the isometric tensor $p_x$ contains the eigenvectors with the first few largest eigenvalues of the following density matrix according to Eqs.~\eqref{eq:projApprox},~\eqref{eq:envAA} and~\eqref{eq:densityM},
\begin{align}
    \label{eq:rhopx}
    \includegraphics[scale=1.0, valign=c]{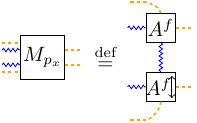},\nonumber\\
    \includegraphics[scale=1.0, valign=c]{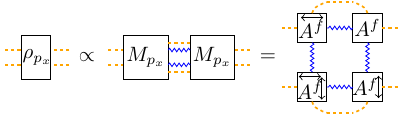},
\end{align}
where we drop the overall normalization factor in Eq.~\eqref{eq:densityM}. 
By dragging the tensor-network diagram around like the proof in Eq.~\eqref{eq:UpsilonLW}, one can show that the density matrix in Eq.~\eqref{eq:rhopx} has the following lattice-reflection symmetry:
\begin{align}
    \label{eq:rhopxSym}
    \includegraphics[width=0.85\columnwidth, valign=c]{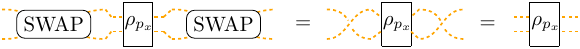}\;.
\end{align}
Since the SWAP matrix commutes with the density matrix $\rho_{p_x}$, the eigenvectors of the density matrix can be made to be the eigenvectors of the SWAP matrix. 
Moreover, eigenvalues of the SWAP matrix can only be $\pm 1$ since it squares to the identity.
This concludes the proof of the reflection symmetry of the isometric tensor in Eq.~\eqref{eq:isomSym}.

An easy way to determine the SWAP-gauge matrix $g'$ corresponding to an isometry $p$ (see Eq.~\eqref{eq:isomSym}) in numerical calculation is\footnote{\label{fn:EVDpSWAP}
    For this equation to be valid, it is necessary to simultaneously diagonalize the SWAP operator and the density matrix $\rho_p$ of the isometric tensor $p$ numerically.
    One concern is when the spectrum of $\rho_p$ contains degenerate eigenvalues.
    To deal with this concern, one numerical trick is diagonalizing $\rho_p + \epsilon \cdot \text{SWAP}$ for a number $\epsilon$ small enough so that the perturbation $\epsilon \cdot \text{SWAP}$ does not change the order of the eigenvalue spectrum of $\rho_p$.
}
\begin{align}
    \label{eq:p2g}
    \includegraphics[width=0.85\columnwidth, valign=c]{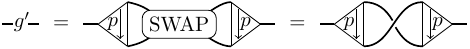}\;.
\end{align}
In 2D, there are two independent isometric tensors for two directions.
Therefore, there are two SWAP-gauge matrices: $g_x'$ and $g_y'$.

We would like to point out the advantage of the transposition trick in the projective truncations.
What happens to the above proof if there is no transposition trick?
If the EF process is turned off, the matrix that commutes with the density matrix $\rho_{p_x}$ in Eq.~\eqref{eq:rhopxSym} would be the SWAP operator multiplied by the SWAP-gauge $g_x$ associated with the tensor $A$.
The SWAP-gauge matrix associated with the coarse-grained tensor $A'$ would be determined by having two additional $g_x$ inserted between two isometric tensors in Eq.~\eqref{eq:p2g}, which was conjectured as the proper way to renormalize the SWAP operator for the HOTRG when people studied how to implement the boundary condition of non-orientable surfaces in TNRG~\cite{Shimizu:2024}.
However, when the EF process is turned on, it becomes necessary to first study some additional symmetry property of the $s_x$ filtering matrix in order to show the symmetry of an isometric tensor in Eq.~\eqref{eq:isomSym}.
Therefore, the transposition trick greatly simplifies the implementation and the proof of the lattice-reflection symmetry for the isometric tensors.

\subsubsection{In 3D}

In 3D, after the transposition trick, the $2 \times 2 \times 2$ block-tensor transformation gives the following tensor RG equation (we refrain from incorporating the EF at this point to make the tensor-network diagrams less clumsy), if we use an HOTRG-like block-tensor map with an arbitrary choice of order of collapses in the HOTRG to be $z \rightarrow y \rightarrow x$,
\begin{subequations}
    \label{eq:3dRGeq}
\begin{align}
    \label{eq:3dRGeq1}
    \includegraphics[width=0.83\columnwidth, valign=c]{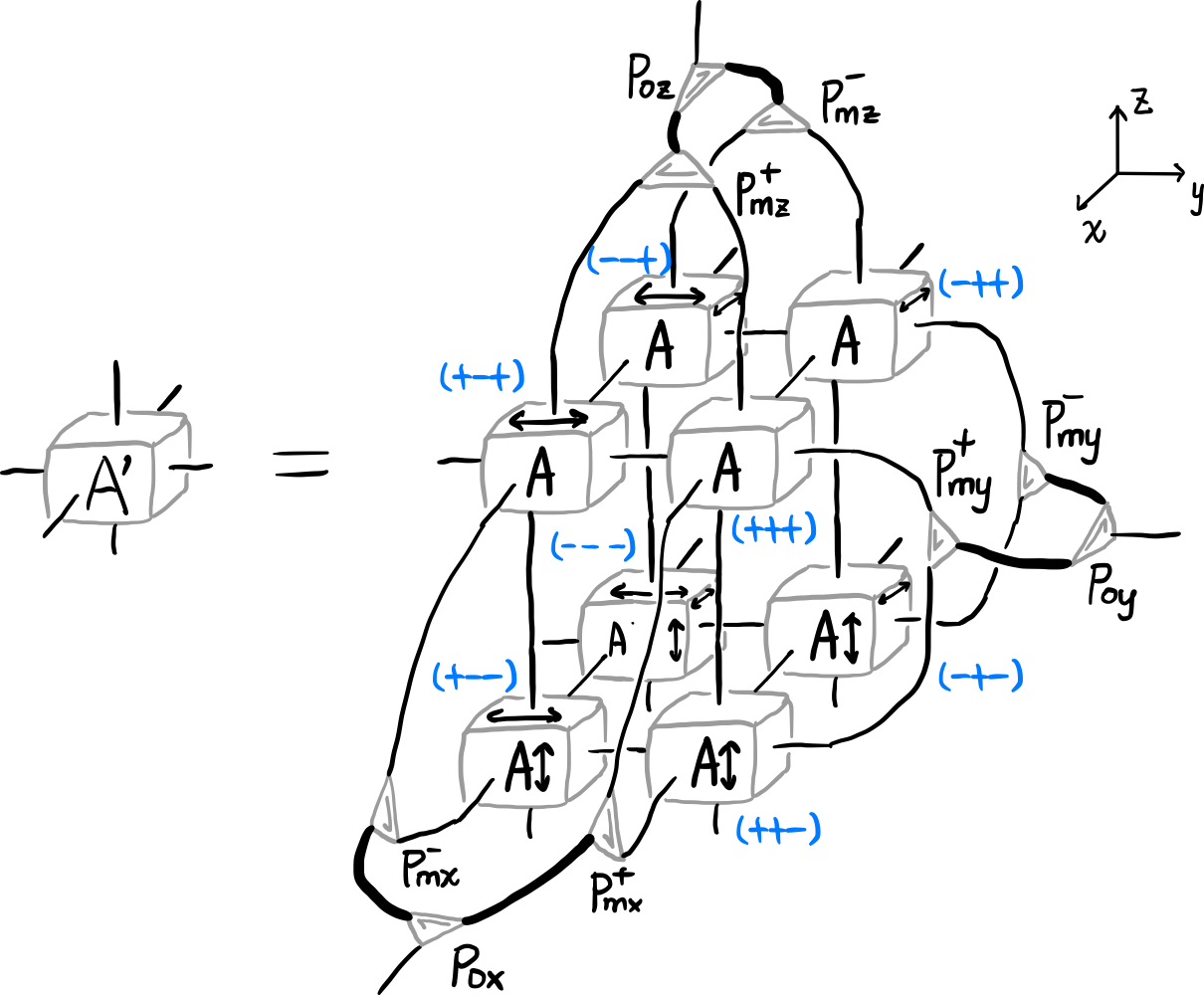},
\end{align}
with
\begin{align}
    \label{eq:3dRGeq2}
    p_{mx}^- = p_{mx}^+ \texteq{def} p_{mx}, \nonumber\\
    p_{my}^- = p_{my}^+ \texteq{def} p_{my},\\
    p_{mz}^- = p_{mz}^+ \texteq{def} p_{mz}.\nonumber
\end{align}
where we refrain from drawing the isometric tensors of inner legs like $p_i$ in Eq.~\eqref{eq:innerppt}.
\end{subequations}
We can impose $p_{mx}^-=p_{mx}^+$ because the following reason. 
It is enough to consider how their density matrices are related to each other. 
Notice that the only difference between two density matrices is that $y$ legs are transposed. 
But all $y$ legs are dummy indices in both density matrices, so two density matrices are identical, leading to $p_{mx}^- = p_{mx}^+$. 
The same is true for the intermediate isometric tensors in the other two directions: $p_{my}^-=p_{my}^+$ and $p_{mz}^-=p_{mz}^+$.
Therefore, we can drop the $+-$ subscript on them.

All the isometric tensors in Eq.~\eqref{eq:3dRGeq} have the lattice-reflection symmetry in Eq.~\eqref{eq:isomSym} due to the same reason that the corresponding density matrices have the lattice-reflection symmetry as $\rho_{p_x}$ in Eq.~\eqref{eq:rhopxSym}. 

The SWAP-gauge matrices of the coarse-grained tensor are determined from these isometric tensors in a slighted different way from the 2D case. 
To be concrete, we focus on the two SWAP-gauge matrices associated with the $x$ leg of the coarse-grained tensor $A'$. 
Just as before, in our notation, $g_{zx}'$ arises due to the reflection across the $z$-plane, while $g_{yx}'$ arises due to the reflection across the $y$-plane. 
They are determined from the isometric tensors according to
\begin{subequations}
    \label{eq:p2g3D}
\begin{align}
    \label{eq:p2gzx}
    &\includegraphics[width=0.83\columnwidth, valign=c]{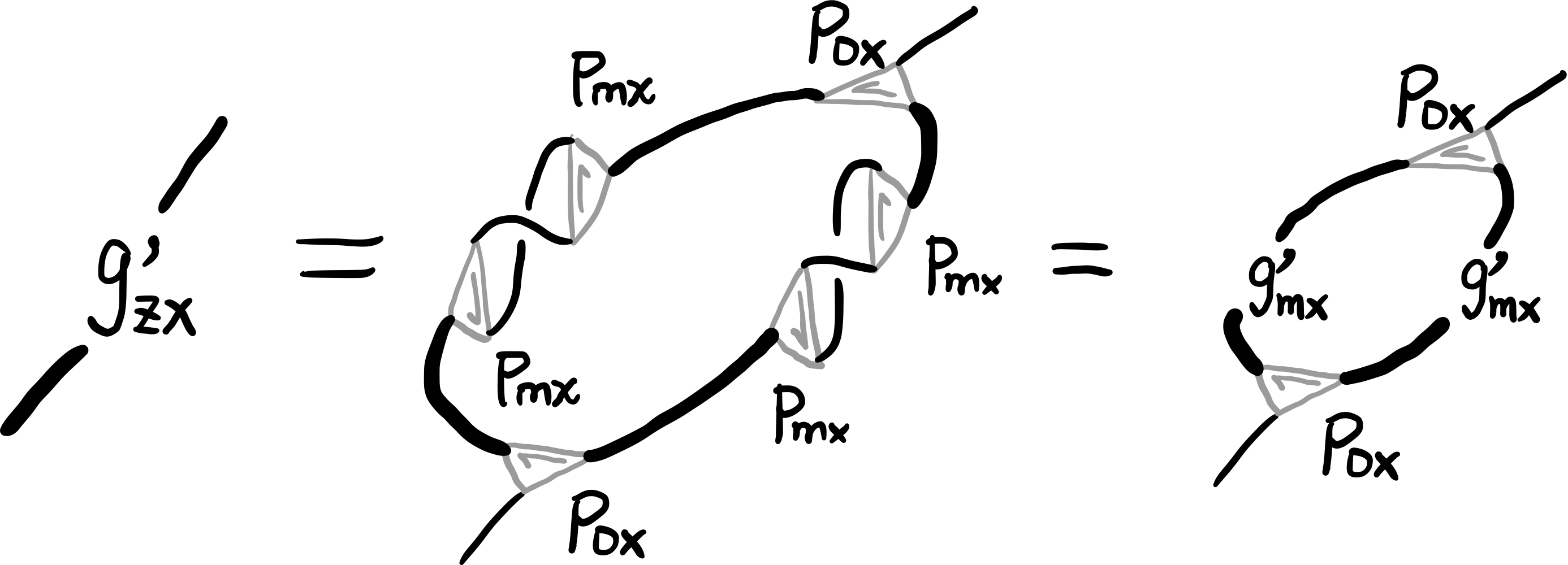} \nonumber\\
    \text{with}\quad
    &\includegraphics[width=0.30\columnwidth, valign=c]{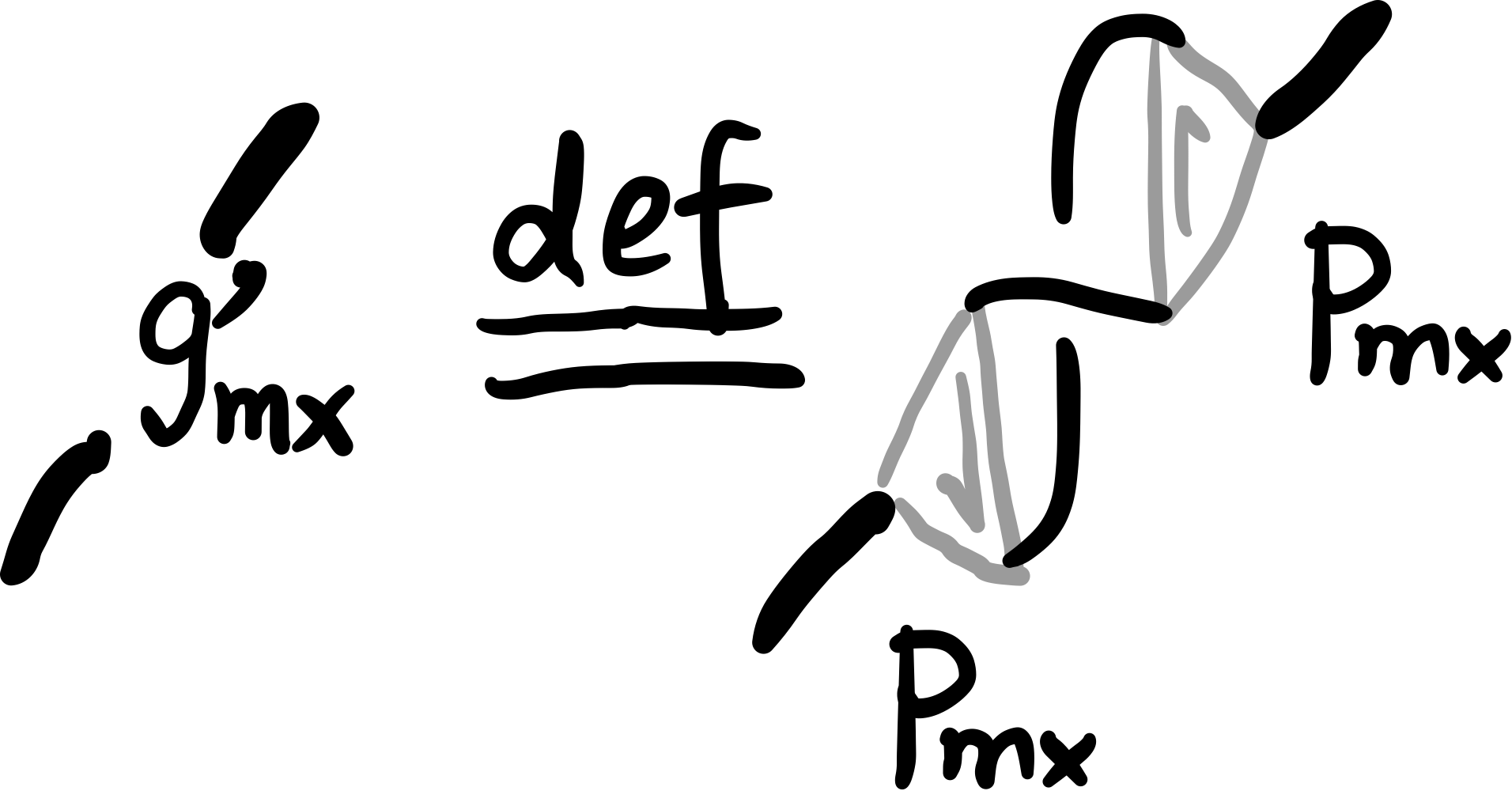}\quad,
\end{align}
and
\begin{align}
    \label{eq:p2gyx}
    \includegraphics[width=0.83\columnwidth, valign=c]{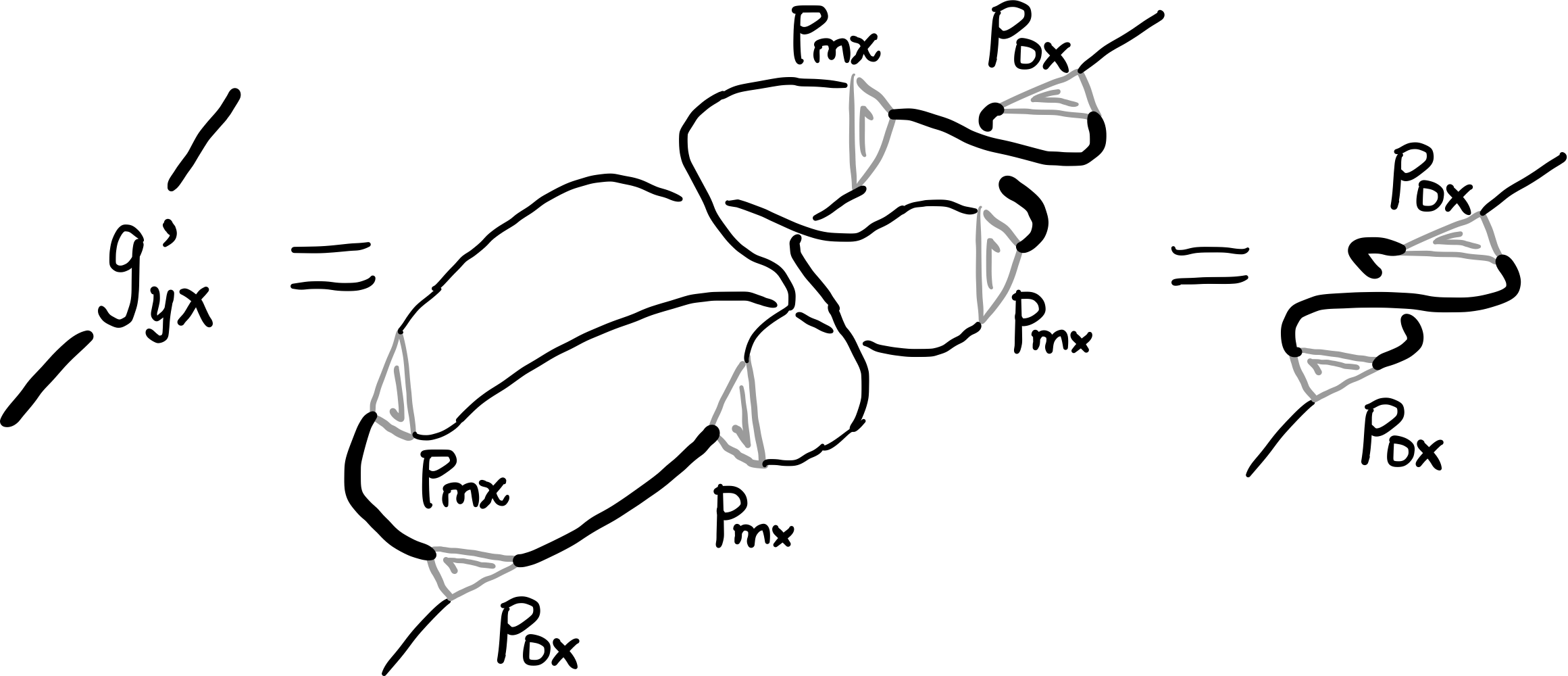},
\end{align}
\end{subequations}
Although the $g_{yx}'$ is determined from $p_{ox}$ in a same way as the 2D case in Eq.~\eqref{eq:p2g}, the $g_{zx}'$ is determined differently. 
The tricky point is how to implement the $z$-plane reflection for $p_{ox}$, since $p_{ox}$ fuses two legs in the $y$ direction, not the $z$ direction. 
The transposition of the two input legs of $p_{ox}$ can only implement the $y$-plane reflection.
The answer is that the SWAP-gauge matrix $g_{mx}'$ associated with the isometric tensor $p_{mx}$ in Eq.~\eqref{eq:p2gzx} implements this $z$-plane reflection. 
The above results can be derived using the same diagrammatic manipulation when we show the origin of the SWAP-gauge matrix in~\autoref{subsec:originSWAP}, as well as the lattice-reflection symmetry of the isometric tensor in Eq.~\eqref{eq:isomSym}.

When the EF is incorporated, the only change is that the tensor A in the $2 \times 2 \times 2$ block becomes the filtered $A^f$,
\begin{align}
    \label{eq:3dEFAf}
    \includegraphics[width=0.83\columnwidth, valign=c]{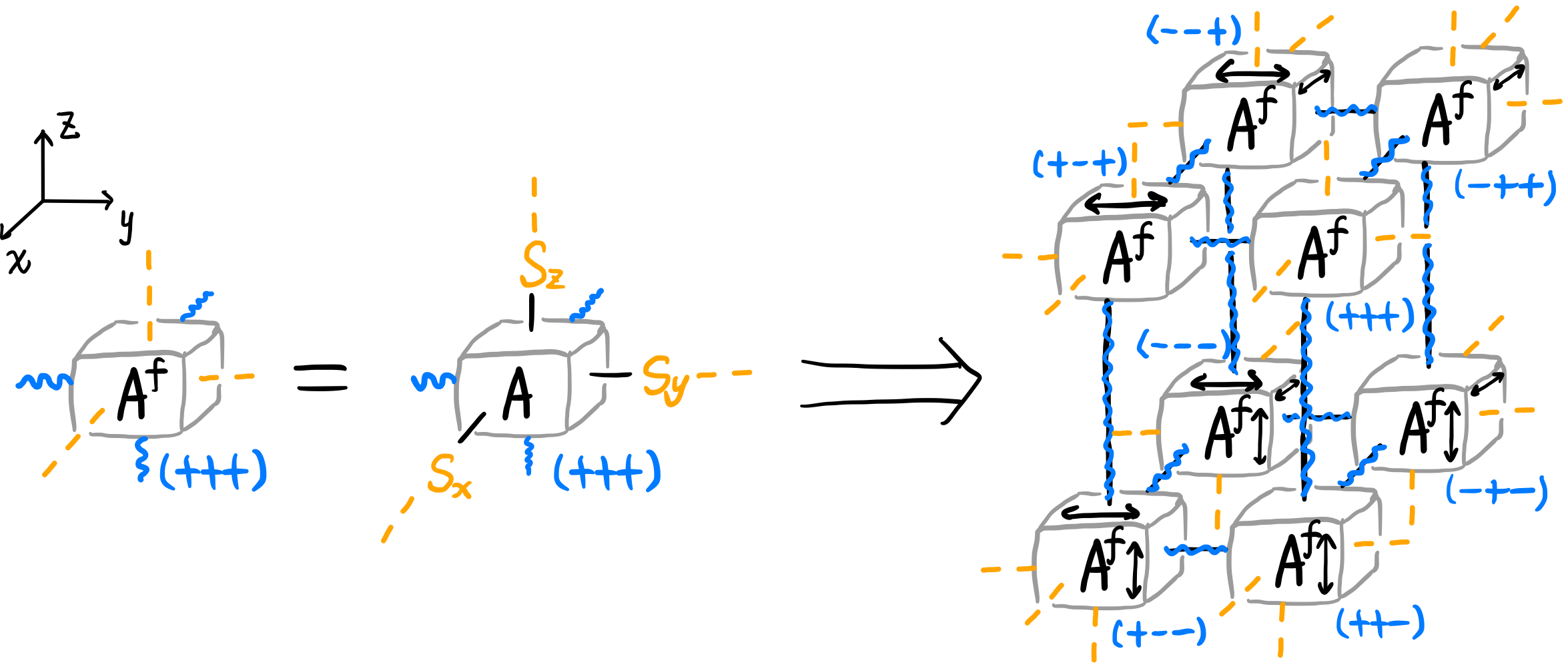}.
\end{align}
The symmetry of the isometric tensors and how to determine the SWAP-gauge matrices of the coarse-grained tensor remain the same as the case without the EF, except the change from $A$ to $A^f$.

\section{Algorithms of EF-enhanced TNRG that preserves lattice-reflection symmetry\label{sec:algo}}
Using the techniques explained and developed from~\autoref{sec:projtrunc} to~\autoref{sec:lattsym}, we write down the algorithms of the EF-enhanced TNRG with lattice-reflection symmetry exploited in both the 2D and 3D.

\subsection{The algorithm in 2D\label{subsec:algo2d}}
The partition function is the full contraction of a square-lattice tensor network consisting of copies of tensor $A$ as in Eq.~\eqref{eq:tn2Z}.
As has been presented in~\autoref{subsec:defSym} and~\autoref{subsec:originSWAP}, the lattice-reflection symmetry is defined in Eq.~\eqref{eq:refl2d}.

\begin{center}
    \underline{\emph{Step 1: Transposition trick}}
    \par
\end{center}
The following transposition trick is performed for the entire tensor network in Eq.~\eqref{eq:tn2Z}, which leaves the partition function invariant due to the symmetry property of $A$ in Eq.~\eqref{eq:refl2d},
\begin{align}
    \label{eq:2dtspZbig}
    Z =
    \includegraphics[scale=0.75, valign=c]{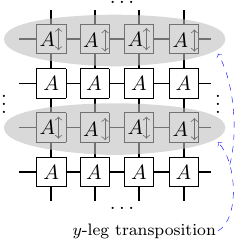}
    =
    \includegraphics[scale=0.75, valign=c]{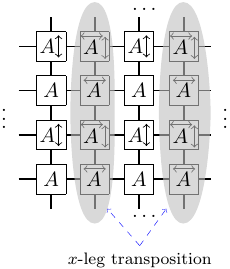}\;.
\end{align}
According to the principle of how to assemble the EF and the block-tensor map explained in~\autoref{subsec:assembly}, we choose the block-tensor patch and the target of the EF in the last tensor network in the above equation as
\begin{align}
    \label{eq:2dbkEF}
    \includegraphics[scale=0.8, valign=c]{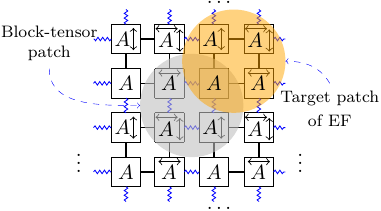}.
\end{align}
Notice that this \emph{Step 1} is done ``on paper'' and nothing needs to be done in numerical calculation. 
This step makes reflection-symmetric both the $2 \times 2$ block-tensor patch and the $2 \times 2$ target patch of the EF, which has been expounded in~\autoref{subsec:symFM} and~\autoref{subsec:symIT}.

\begin{center}
    \underline{\emph{Step 2: Entanglement filtering}}
    \par
\end{center}
The EF is applied to its target patch in Eq.~\eqref{eq:2dbkEF}.
The approximation for the EF is Eq.~\eqref{eq:2dEFapproxSym}.
We use the schemes developed in Appendix~\ref{app:findsmat} to determine the two filtering matrices $s_x$ and $s_y$ in Eq.~\eqref{eq:2dEFapproxSym}.

\underline{\emph{Step 2.1 is the initialization of $s_x$ and $s_y$}}
\par
Take $s_y$ as an example:
\begin{itemize}
    \item \underline{\emph{Step 2.1a:}}
        construct the $\Upsilon_0$ tensor defined in Eq.~\eqref{eq:UpsilonP0} for the initialization of $s_y$ using the target patch of the EF after the transposition trick shown in Eq.~\eqref{eq:2dbkEF}; 
        the resultant $\Upsilon_0$ tensor is shown in Eq.~\eqref{eq:UpsilonLE}. 
    \item \underline{\emph{Step 2.1b:}}
        the low-rank matrix $L_E$ related to $s_y$ is determined according to Eqs.~\eqref{eq:pinvDef} and~\eqref{eq:LEsol}.
    \item \underline{\emph{Step 2.1c:}}
        the low-rank matrix $L_E$ is split according to Eq.~\eqref{eq:LEsplit} to initialize $s_y$.
\end{itemize}
\underline{\emph{Remark:}} 
the other filtering matrix $s_x$ is initialized in the same way, as has been explained in Appendix~\ref{subsec:inits}.

\underline{\emph{Step 2.2 is the optimization of $s_x$ and $s_y$}}
\par
Take $s_y$ as an example.
The fidelity, according to its definition in Eq.~\eqref{eq:fidelityDef}, of the EF approximation in Eq.~\eqref{eq:2dEFapproxSym} is
\begin{subequations}
\begin{align}
   \label{eq:2dEFfide}
   F = 
   \includegraphics[scale=1.0, valign=c]{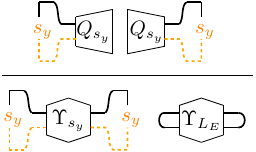}\quad,
\end{align} 
where the tensor $\Upsilon_{L_E}$ is constructed according to Eq.~\eqref{eq:UpsilonLE}, and the other two tensors $\Upsilon_{s_y}$ and $Q_{s_y}$, according to Appendix~\ref{subsec:opts}, are
\begin{align}
   \label{eq:Upsilonsy}
   \includegraphics[width=0.83\columnwidth, valign=c]{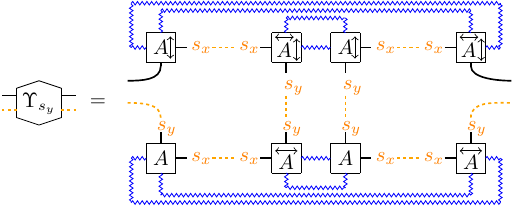}\;,
\end{align} 
\begin{align}
   \label{eq:Qsy}
   \includegraphics[width=0.83\columnwidth, valign=c]{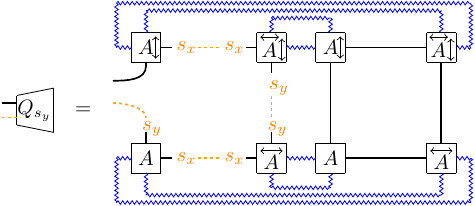}\;.
\end{align} 
\end{subequations}
The $s_y$ is then updated using
\begin{align}
   \label{eq:syUpdate}
   \includegraphics[scale=1.0, valign=c]{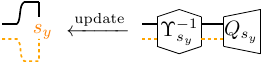}\quad.
\end{align} 
The usual inverse matrix can be used here for $\Upsilon_{s_y}^{-1}$. 
Since both $\Upsilon_{s_y}$ and $Q_{s_y}$ tensors change after the update of $s_y$, one can choose to update $s_y$ iteratively for several times, and then go on updating $s_x$ in the same manner. 
Numerically, the validity of this optimization process comes from checking whether the fidelity in Eq.~\eqref{eq:2dEFfide} increases with iteration. 
When the fidelity converges, one can stop the iteration.

\underline{\emph{Remark (better convergence of the optimization):}} 
Here we explain a trick buried in Evenbly’s codes~\cite{Evenbly:code} for a better convergence of the optimization of the filtering matrices when the lattice-reflection symmetry is exploited. 
Since the $\Upsilon_{s_y}$ and $Q_{s_y}$ tensors depend on the filtering matrix $s_y$, the updated $s_y$ in Eq.~\eqref{eq:syUpdate} cannot guarantee the growth of the fidelity. 
The following numerical trick can make sure that the fidelity does not decrease during the optimization:
\begin{itemize}
    \item Use Eq.~\eqref{eq:syUpdate} to propose a candidate $s_y'$.
    \item Build several convex combinations of this candidate and the old $s_y$,
        \begin{align}
            \label{eq:sytry}
            s_y^{\text{try}} = (1 - p) s_y' + p s_y,
        \end{align}
        where $p$ increases from 0 to 1.
    \item When $p$ increases, once a trial combination in Eq.~\eqref{eq:sytry} results in a growth of the fidelity, $s_y $ is updated to be this combination.
\end{itemize}

\begin{center}
    \underline{\emph{Step 3: Projective truncations}}
    \par
\end{center}
After the two filtering matrices $s_x$ and $s_y$ are determined, they are absorbed into the $2 \times 2$ block-tensor patch in Eq.~\eqref{eq:2dbkEF}, and the following projective truncations are performed,
\begin{align}
   \label{eq:2dEFRGproj}
   \includegraphics[width=0.83\columnwidth, valign=c]{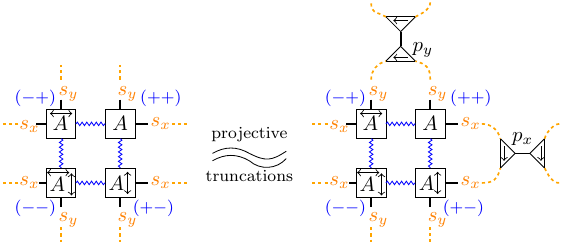}\;.
\end{align} 
The two isometric tensors $p_x$ and $p_y$ in the above equation  are determined according to~\autoref{subsec:proj}.
Again, let us use $p_x$ as an example.
\begin{itemize}
    \item \underline{\emph{Step 3.1:}} 
        act the two filtering matrices $s_x$ and $s_y$ on $A$ according to Eq.~\eqref{eq:2dEFRGeq2} to obtain the filtered tensor $A^f$.
    \item \underline{\emph{Step 3.2:}} 
        construct the density matrix, from which $p_x$ is obtained, according to Eq.~\eqref{eq:rhopx} and use the eigenvalue decomposition for a hermitian matrix to determine its eigenvalues and eigenvectors.
        The isometric tensor $p_x$ consists of the eigenvectors with the first $\chi$ largest eigenvalues.
\end{itemize}
The other isometric tensor $p_y$ is determined in the same way.

\begin{center}
    \underline{\emph{Step 4: Contraction of the tensor RG equation}}
    \par
\end{center}
Now, both the filtering matrices $s_x, s_y$ and the isometric tensors $p_x, p_y$ are determined, the final step is contracting the tensor RG equation in Eq.~\eqref{eq:2dEFRGeq} to obtain the coarse-grained tensor $A'$.
The bottleneck of the computational costs is this step, which are $O(\chi^8)$.

\underline{\emph{Remark (reduce the computational costs):}} 
\par
The ideas in the HOTRG-like block-tensor transformation in~\autoref{subsec:hotrg-like} can be used to reduce the computational costs of the contraction in Eq.~\eqref{eq:2dEFRGeq} by inserting an additional projection operation $p_i p_i^\intercal$ into the inner legs of the $2 \times 2$ block, with the bond dimension of the third leg of $p_i$ to be $\chi_i$.
The resultant tensor RG equation is
\begin{align}
   \label{eq:2dEFhotRGeq}
   \includegraphics[scale=1.0, valign=c]{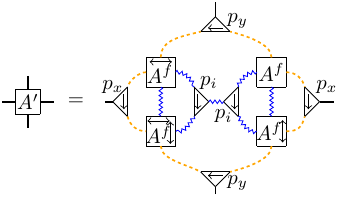}\;,
\end{align} 
where $A^f$ is obtained by acting filtering matrices $s_x, s_y$ on $A$, as is shown in Eq.~\eqref{eq:2dEFRGeq2}.
The environment $M$ in the density matrix $\rho$ (see Eqs.~\eqref{eq:projApprox},~\eqref{eq:envAA} and~\eqref{eq:densityM}) for the $p_i$ can be chosen to be
\begin{align}
   \label{eq:M4pi}
   \includegraphics[scale=1.0, valign=c]{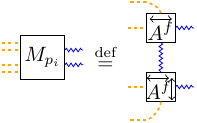}.
\end{align} 
Due to the transposition trick, the isometry $p_i$ also has the lattice-reflection symmetry in Eq.~\eqref{eq:isomSym}; 
therefore the direction of its arrow is immaterial. 
Notice that inserting $p_i p_i^\intercal$ does not change the lattice-reflection symmetry property of the coarse-grained tensor $A'$. 
When the bond dimension $\chi_i$ of $p_i$ is equal to $\chi^2$, the scheme becomes the full block-tensor scheme in Eq.~\eqref{eq:2dEFRGeq}.

The tensor contraction in Eq.~\eqref{eq:2dEFhotRGeq} can be performed as a composition of two collapses in two directions, $A^f \to A^f_{y} \to A'$,
\begin{subequations}
   \label{eq:2dEFhotRGeq2cols}
\begin{align}
   \label{eq:2dEFhotRGycol}
   \includegraphics[scale=1.0, valign=c]{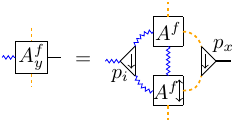}\quad,
\end{align} 
\begin{align}
   \label{eq:2dEFhotRGxcol}
   \includegraphics[scale=1.0, valign=c]{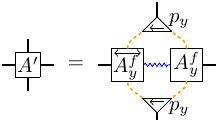}\quad.
\end{align}  
\end{subequations}
The inner bond dimension $\chi_i$ should be large enough to make sure that the error of its projective truncation is smaller than those of $p_x$ and $p_y$.
The choice of $\chi_i = \chi$ makes the computational costs of the contraction in Eq.~\eqref{eq:2dEFhotRGeq2cols} be $O(\chi^7)$, the same as those of the usual HOTRG.
If $\chi_i = \chi^2$, the computational costs go back to those of the full contraction in Eq.~\eqref{eq:2dEFRGeq}, which are $O(\chi^8)$.

\begin{center}
    \underline{\emph{The lattice-reflection symmetry is preserved}}
    \par
\end{center}
By inspecting the tensor RG equation in Eq.~\eqref{eq:2dEFhotRGeq}, along with the symmetry property of the isometric tensors $p_x, p_y$ shown in Eqs.~\eqref{eq:isomSym} and~\eqref{eq:p2g}, it is easy to see that the coarse-grained tensor $A'$ satisfies the same lattice-reflection symmetry as the original tensor $A$ in Eq.~\eqref{eq:refl2d}, with the renormalized SWAP-gauge matrices $g'_x$ and $g'_y$ determined from $p_x$ and $p_y$ according to
Eq.~\eqref{eq:p2g}.
Since the symmetry property of the $A'$ is true regardless of the symmetry of the original tensor $A$ in Eqs.~\eqref{eq:2dEFhotRGeq} and~\eqref{eq:2dEFRGeq2}, the lattice-reflection symmetry is also imposed.

\subsection{The algorithm in 3D\label{subsec:3dalgo}}
In this subsection, we will expound the 3D algorithm first proposed in Ref.~\cite{Lyu:Kawashima:2024}.
The partition function is a full contraction of a cubic-lattice tensor network consisting of copies of tensor $A$.
This tensor network can be generated by repeating the first $2 \times 2 \times 2$ block in Eq.~\eqref{eq:tsptrick3D} in all three directions of the space.
The lattice-reflection symmetry is manifested in the tensor $A$ as Eq.~\eqref{eq:refl3d} for the reflection across the $z$-plane. 
The reflections across the $y$-plane and $x$-plane have similar definitions. 

The 3D algorithm is a straightforward generalization of the 2D algorithm in~\autoref{subsec:algo2d}. 
For this reason, we lay out the big picture of the algorithm and avoid too much detailed explanation. 
The readers are encouraged to understand the 2D algorithm first to develop intuition for understanding the 3D one.

\begin{center}
    \underline{\emph{Step 1: Transposition trick}}
    \par
\end{center}
Since a panorama of the transposition trick like the 2D one in Eq.~\eqref{eq:2dtspZbig} would look clumsy in 3D, we focus on the basic $2 \times 2 \times 2$ building block of the tensor network. 
The transposition trick acts on this block as in Eq.~\eqref{eq:tsptrick3D}, which we choose as the block-tensor patch. 
According to the principle for assembling the EF process and a block-tensor map in~\autoref{fig:sec2-3dblockEE}, the target patch of the EF is shown in Eq.~\eqref{eq:3dEFapproxSym2} after this transposition trick.

\begin{center}
    \underline{\emph{Step 2: Entanglement filtering}}
    \par
\end{center}
The EF is applied to the target patch of the EF shown in~\autoref{fig:sec2-3dblockEE}, whose detailed view is in Eq.~\eqref{eq:3dEFapproxSym2}. 
The approximation of the EF is Eq.~\eqref{eq:3dEFapproxSym1}. 
We use the schemes developed in Appendix~\ref{app:findsmat} to determine the three filtering matrices $s_x$, $s_y$ and $s_z$ in Eq.~\eqref{eq:3dEFapproxSym}. 

\underline{\emph{Step 2.1 is the initialization of $s_x, s_y$ and $s_z$}}
\par
The only difference from the \emph{Step 2.1} of the 2D algorithm in~\autoref{subsec:algo2d} is how to construct the $\Upsilon_0$ tensors for these three filtering matrices in \emph{Step 2.1a}. 
Once a $\Upsilon_0$ tensor is constructed, the same \emph{Step 2.1b} and \emph{Step 2.1c} as the 2D algorithm can be applied.
We will briefly describe how to construct the $\Upsilon_0$ tensor for the initialization of $s_y$ in \emph{Step 2.2} below.

\underline{\emph{Step 2.2 is the optimization of $s_x, s_y$ and $s_z$}}
\par
Take $s_y$ as an example. 
The fidelity $F$ is built according to its definition in Eq.~\eqref{eq:fidelityDef} and the EF approximation in Eq.~\eqref{eq:3dEFapproxSym}. 
For optimization of $s_y$, we rewrite the fidelity in the same form as Eq.~\eqref{eq:2dEFfide}, with different expressions\footnote{
Here, $\Upsilon_{L_E}$ is the $\Upsilon_0$ tensor for the initialization of $s_y$.
}
for $\Upsilon_{s_y}$, $Q_{s_y}$ and $\Upsilon_{L_E}$. 
Among these three tensors, $\Upsilon_{s_y}$ and $Q_{s_y}$ can be read off from the expression of the $F$, while $\Upsilon_{L_E}$ is obtained by setting all $s_x, s_y$ and $s_z$ in $\Upsilon_{s_y}$ to the identity matrix\footnote{
To see why it is so, try comparing them in the 2D algorithm, where $\Upsilon_{s_y}$ is in Eq.~\eqref{eq:Upsilonsy} and $\Upsilon_{L_E}$ is in Eq.~\eqref{eq:UpsilonLE}. 
}.

Once $\Upsilon_{s_y}$ and $Q_{s_y}$ are constructed, the $s_y$ is updated according to Eq.~\eqref{eq:syUpdate}, the same process as the 2D algorithm. 
The same remark and trick apply for a better convergence of the optimization process as the 2D algorithm (see the explanation around Eq.~\eqref{eq:sytry}).

\begin{center}
    \underline{\emph{Step 3: Projective truncations}}
    \par
\end{center}
After the three filtering matrices $s_x, s_y$ and $s_z$ are determined, they are absorbed into the block-tensor patch, which changes from the last diagram in Eq.~\eqref{eq:tsptrick3D} to the block in Eq.~\eqref{eq:3dEFAf}, leading to a map $A \mapsto A^f$.
Then, an HOTRG-like block-tensor transformation (see~\autoref{subsec:hotrg-like}) is applied to the filtered block in Eq.~\eqref{eq:3dEFAf}, whose order of the HOTRG collapses is chosen arbitrarily to be $z \rightarrow y \rightarrow x$.

Take the first collapse in $z$ direction as an example. 
Let us focus on the two tensors located at $(+++)$ and $(++-)$ position in Eq.~\eqref{eq:3dEFAf}. 
The approximation of the projective truncation is
\begin{align}
   \label{eq:3dprojz}
   \includegraphics[width=0.80\columnwidth, valign=c]{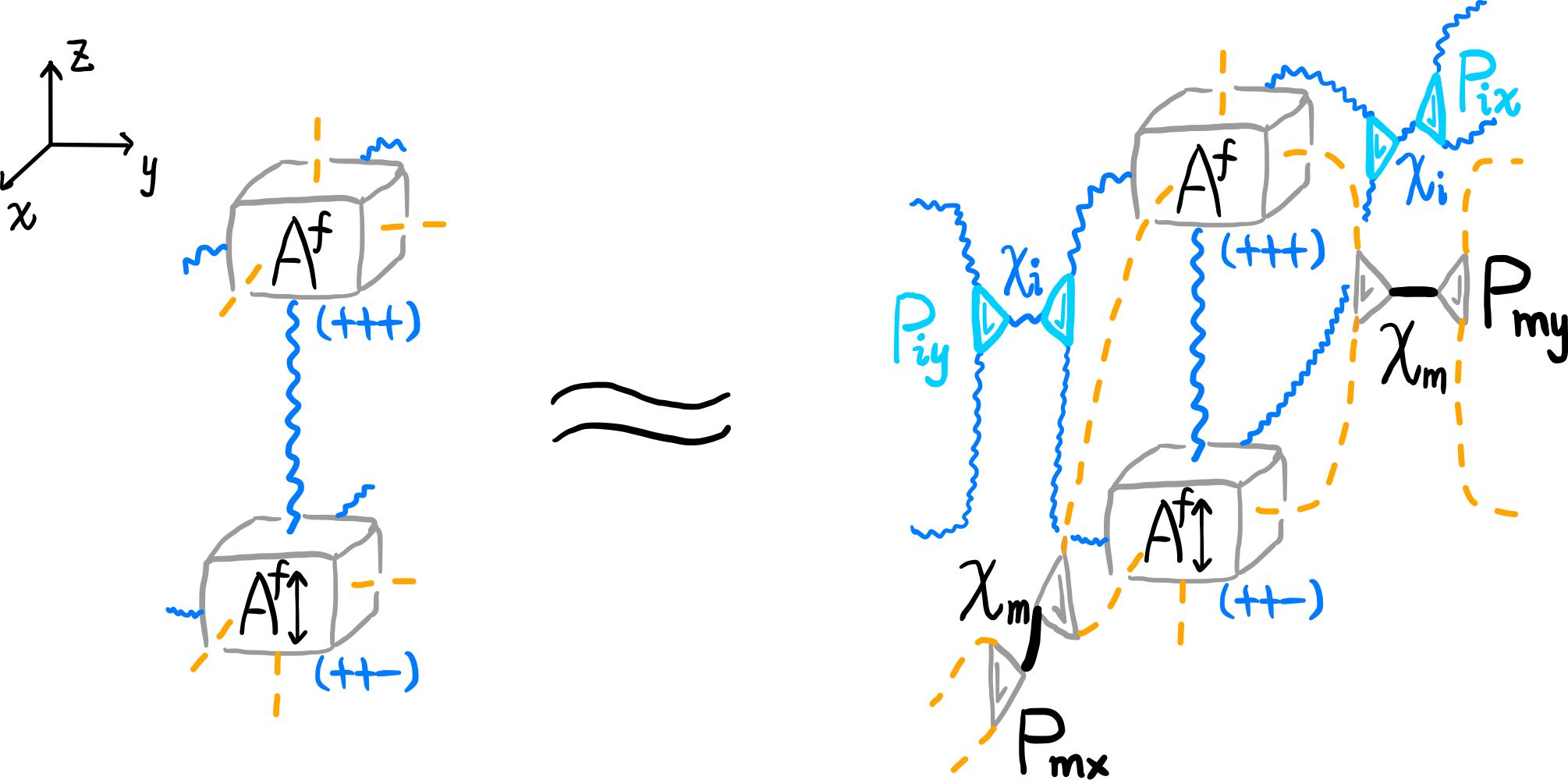}.
\end{align}  
There are four isometric tensors $p_{mx}, p_{my}, p_{ix}, p_{iy}$, where the first two are for fusing outer legs of the block (represented by dashed lines in Eq.~\eqref{eq:3dprojz}) into intermediate legs, while the last two are for inner legs of the block (represented by wavy lines in Eq.~\eqref{eq:3dprojz}). 
The bond dimension of the third leg of $p_{mx}$ and $p_{my}$ is $\chi_m$, while that of $p_{ix}$ and $p_{iy}$ is $\chi_{i}$.
Here, the subscripts $m,i$ denote ``intermediate'' and ``inner''.

The four isometric tensors in Eq.~\eqref{eq:3dprojz} are determined using the scheme explained in~\autoref{subsec:proj}. 
Take $p_{my}$ as an example. 
It is determined by constructing the following density matrix,
\begin{align}
   \label{eq:3drho4pmy}
   \includegraphics[width=0.50\columnwidth, valign=c]{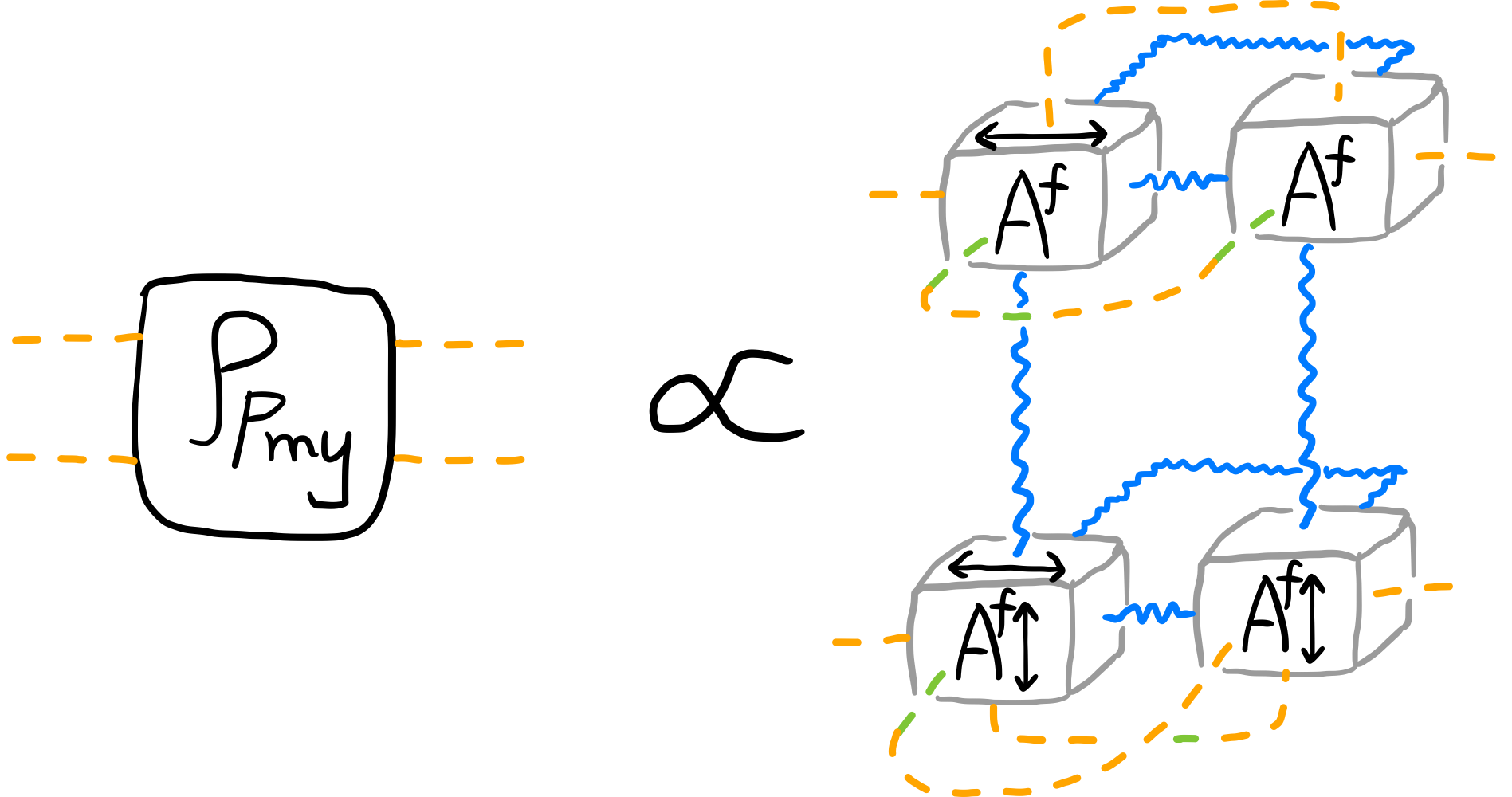}.
\end{align}
The isometric tensor $p_{my}$ consists of the eigenvectors of this density matrix with the first $\chi_m$ largest eigenvalues. 
The other three isometric tensors are determined in the same way.

\begin{center}
    \underline{\emph{Step 4: Contraction of the tensor RG equation}}
    \par
\end{center}

\underline{\emph{Step 4.1 is the $z$ collapse}}
\par
After the four isometric tensors $p_{mx}, p_{my}, p_{ix}, p_{iy}$ are determined, the following contraction of the $z$ collapse is carried out,
\begin{subequations}
\begin{align}
   \label{eq:3dzcol}
   \includegraphics[width=0.80\columnwidth, valign=c]{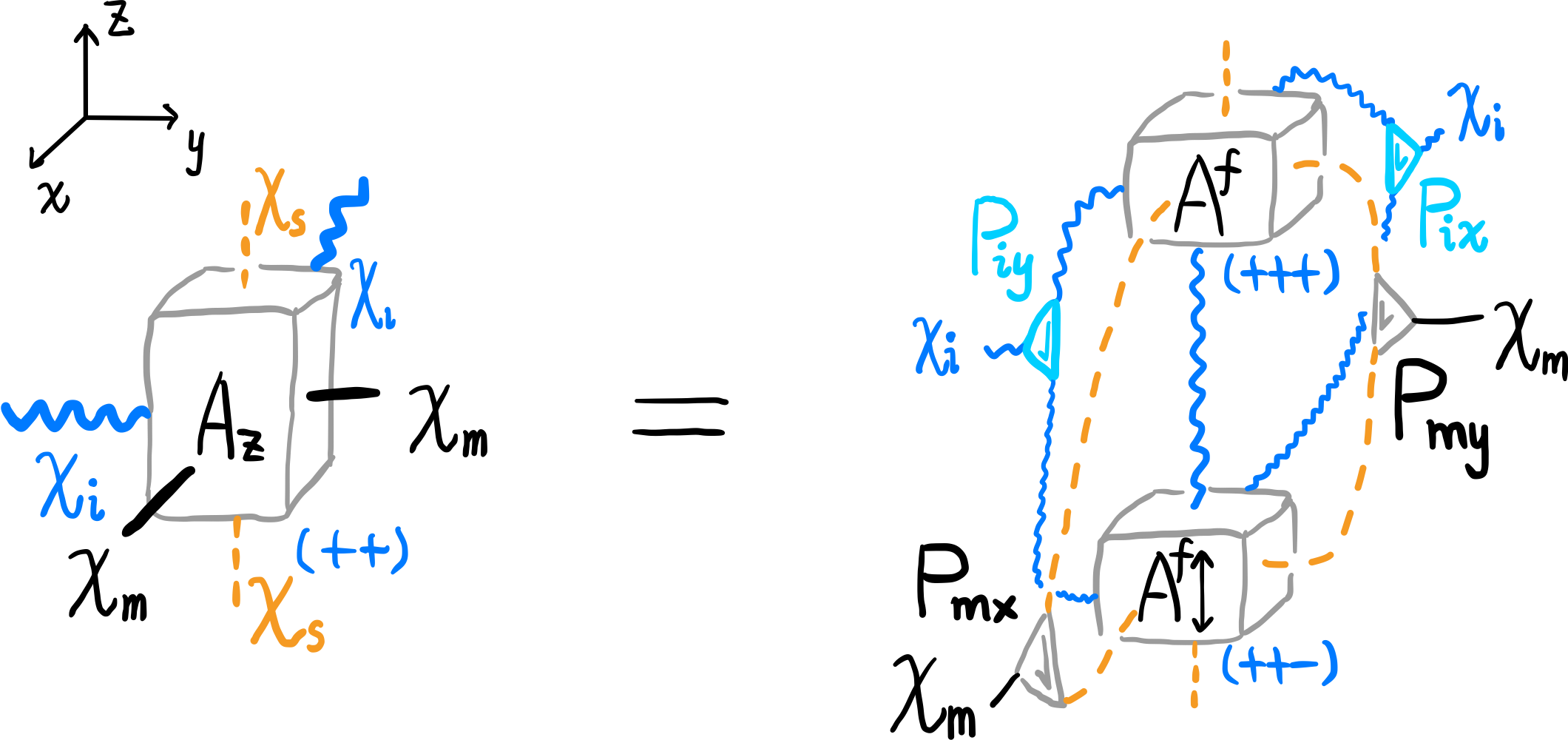}\quad,
\end{align}
which is a map $A^f \mapsto A_z$.
Due to the $z$ collapse, the $2 \times 2 \times 2$ block-tensor patch in Eq.~\eqref{eq:3dEFAf} changes as
\begin{align}
   \label{eq:3dzcolBlock}
   \includegraphics[width=0.85\columnwidth, valign=c]{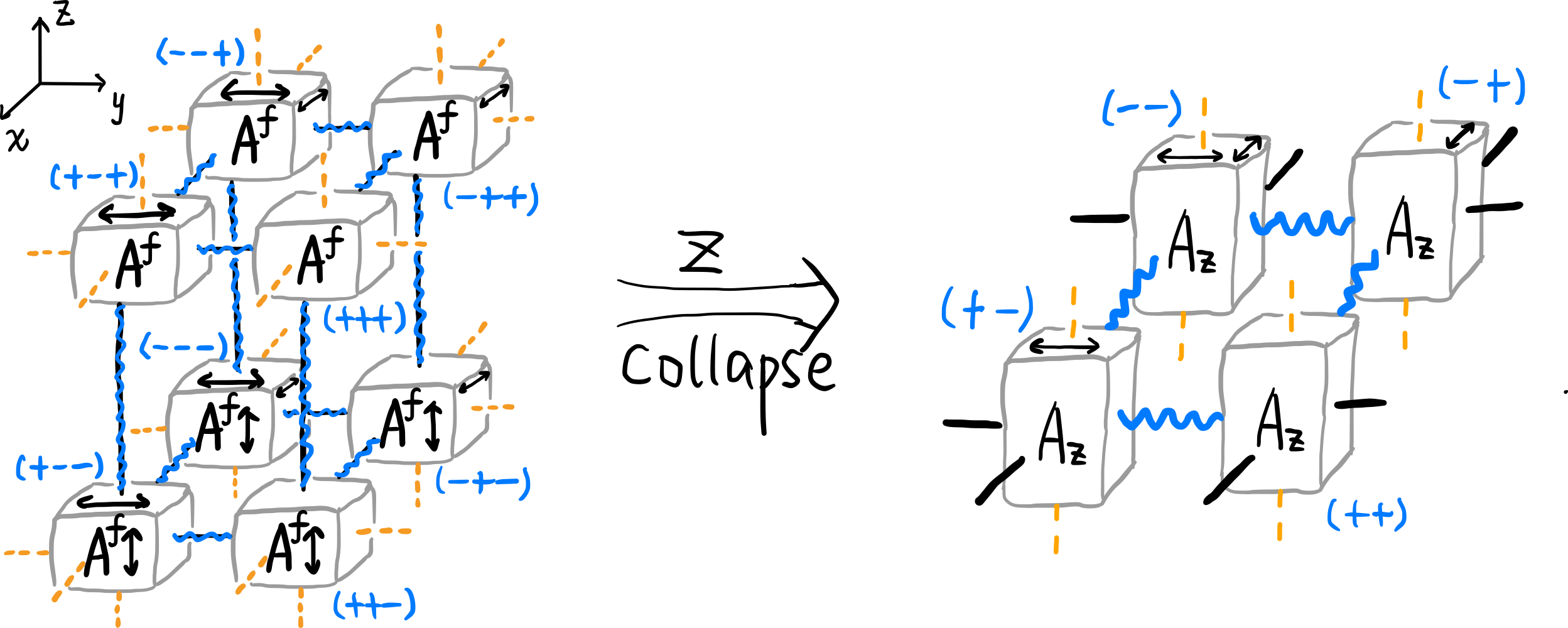}.
\end{align}
\end{subequations}

\underline{\emph{Step 4.2 is the $y$ collapse}}
\par
After the $z$ collapse, the $y$ collapse is performed to $A_z$ in the last diagram of Eq.~\eqref{eq:3dzcolBlock},
\begin{subequations}
\begin{align}
   \label{eq:3dycol}
   \includegraphics[width=0.80\columnwidth, valign=c]{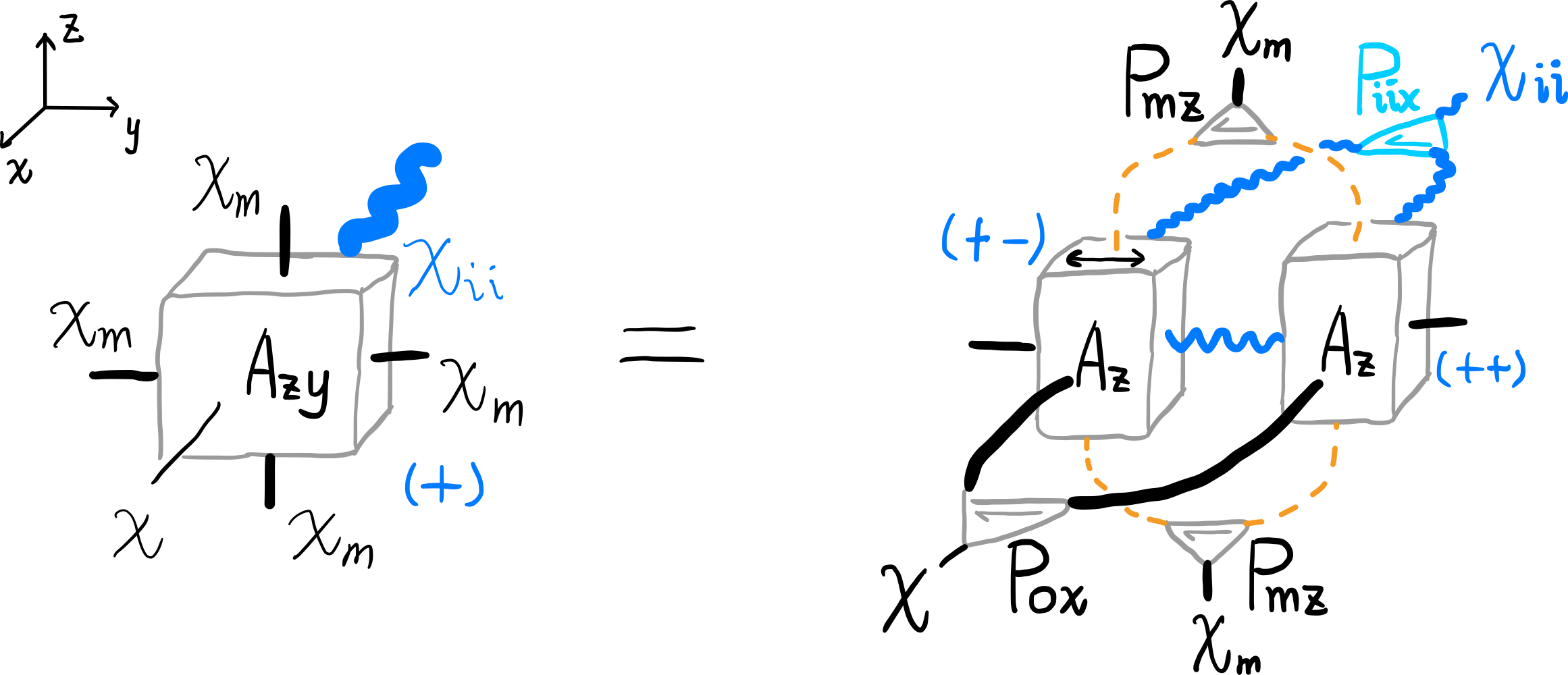}\quad,
\end{align}
which is a map $A_z \mapsto A_{zy}$.
Three isometric tensors $p_{mz}, p_{ox}$ and $p_{iix}$ are involved, and they can be determined in the same way as the \emph{Step 3} of the 3D algorithm by constructing proper density matrices from $A_z$.
Due to the $y$ collapse, the $2 \times 2$ tensor network in the last diagram of Eq.~\eqref{eq:3dzcolBlock} changes as
\begin{align}
   \label{eq:3dycolBlock}
   \includegraphics[width=0.85\columnwidth, valign=c]{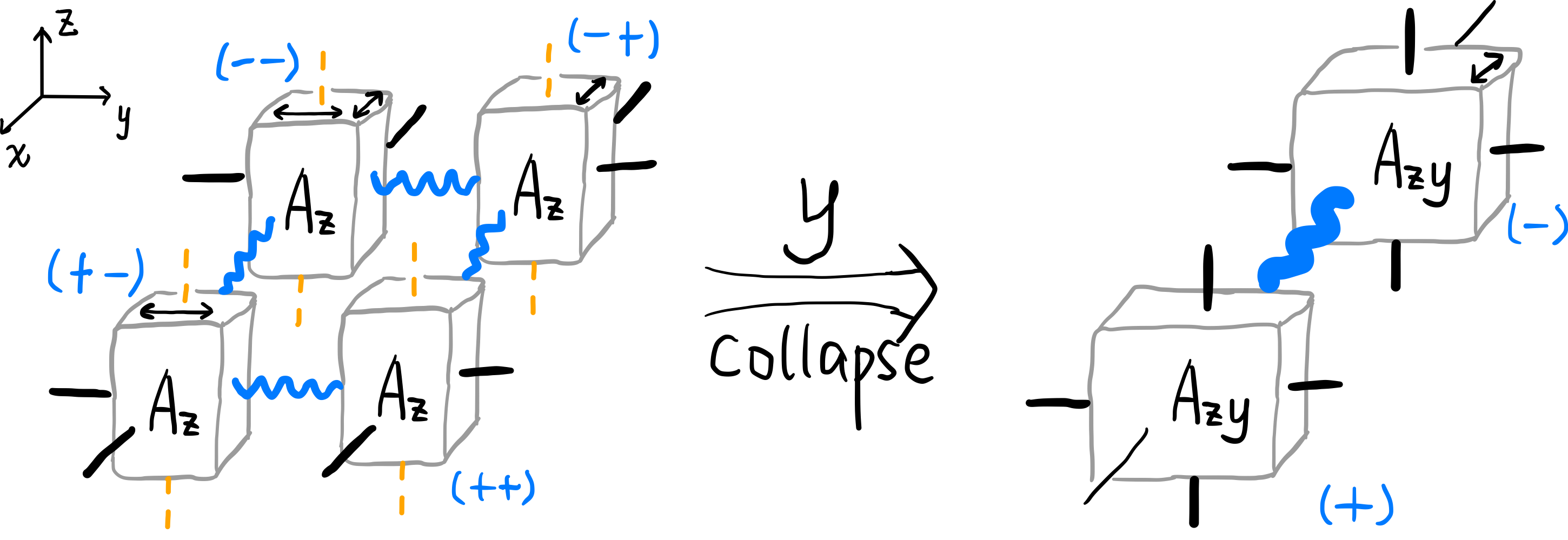}.
\end{align}
\end{subequations}

\underline{\emph{Step 4.3 is the $x$ collapse}}
\par
The last step is the $x$ collapse:
\begin{align}
   \label{eq:3dxcol}
   \includegraphics[width=0.80\columnwidth, valign=c]{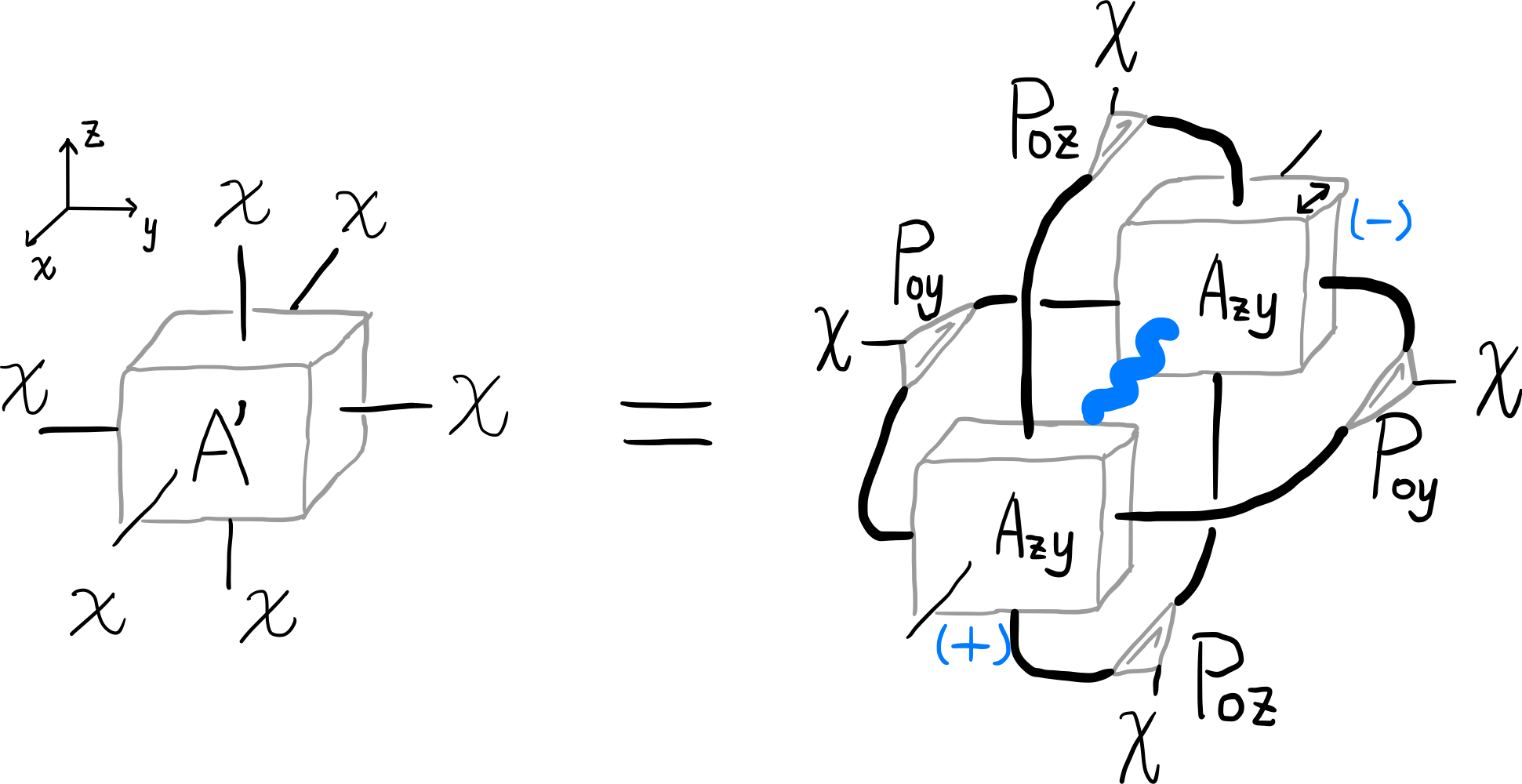}\quad,
\end{align}
which is a map $A_{zy} \mapsto A'$.
Two isometric tensors $p_{oy}$ and $p_{oz}$ are involved, and they can be determined in the same way as the \emph{Step 3} of the 3D algorithm by constructing proper density matrices using $A_{zy}$.

The composition of $A \mapsto A^f$ in Eq.~\eqref{eq:3dEFAf} and the three collapses $A^f \mapsto A_z \mapsto A_{zy} \mapsto A'$ in Eqs.~\eqref{eq:3dzcol},~\eqref{eq:3dycol}, and~\eqref{eq:3dxcol} gives the tensor RG equation $A \mapsto A'$. 
A big picture of the tensor RG equation is shown in Eq.~\eqref{eq:3dRGeq}, with all $A$ tensors replaced by $A^f$ in Eq.~\eqref{eq:3dEFAf} and with all isometric tensors, $p_{ix}, p_{iy}, p_{iix}$, for inner legs not drawn.

\begin{center}
    \underline{\emph{The lattice-reflection symmetry is preserved}}
    \par
\end{center}
By inspecting the tensor RG equation in Eq.~\eqref{eq:3dRGeq} (with $A$ replaced with $A^f$ in Eq.~\eqref{eq:3dEFAf}), along with the symmetry property of the isometric tensors $p_x, p_y$ and $p_z$ shown in Eqs.~\eqref{eq:isomSym} and~\eqref{eq:p2g}, it is easy to see that the coarse-grained tensor $A'$ satisfies the same lattice-reflection symmetry as the original tensor $A$ in Eq.~\eqref{eq:refl3d}.
The renormalized SWAP-gauge matrices are determined from $p_x, p_y$ and $p_z$, several examples of which have been shown in Eq.~\eqref{eq:p2g3D}.
Since the symmetry property of the $A'$ is true regardless of the symmetry of the original tensor $A$ in Eq.~\eqref{eq:3dRGeq}, the lattice-reflection symmetry is also imposed.

\section{Linearization of the RG map in separate lattice-reflection charge sectors\label{sec:linRG}}
According to the Wilsonian RG framework, the critical exponents characterizing a universality class can be extracted from the linearized RG map around the corresponding fixed point of the RG map.
This canonical RG prescription has already been established in the context of TNRG~\cite{Lyu:Xu:Kawashima:2021}, 
which was later reinterpreted as a lattice dilatation operator in Ref.~\cite{Ebel:2025}.
Compared with the more conventional approach of extracting scaling dimensions by building a transfer matrix~\cite{Gu:Wen:2009}, this linearized RG method has advantage in 3D.
It is technically difficult to build a transfer matrix on a sphere from a fixed-point tensor in 3D TNRG, while the linearized RG method works equally well in 2D and 3D~\cite{Lyu:Xu:Kawashima:2021,Lyu:Kawashima:2023}.
With the lattice-reflection exploited, it is expected that the linearized RG map should have various lattice-reflection sectors as its invariant subspace. 
Therefore, a natural question arises about how to linearize the RG map in separate lattice-reflection sectors.

We first explore this question in 1D to develop intuitions that can help one write down the answers in 2D and 3D directly.
Then, we develop a chain rule for the linearization. 
The chain rule facilitates the generalization of the intuition developed in the 1D toy example to higher dimensions.
With the 1D intuition and the chain rule, we will write down the linearized RG map in separate lattice-reflection sectors for the 2D and 3D algorithms in~\autoref{sec:algo}.
Justification of our claim of the linearization in 2D is given in Appendix~\ref{app:2dLinProof}, whose generalization to 3D is straightforward.

\subsection{The 1D toy example\label{subsec:linRG1D}}
We explore, using the 1D toy example in~\autoref{subsec:defSym}, the expectation that the lattice-reflection symmetry can be exploited to construct the linearized RG map in different lattice-reflection symmetry separately.

Without the transposition trick, the tensor RG equation is $A' = A A$ (see Eq.~\eqref{eq:bkten1d}).
Denote a fixed-point tensor of this RG map $A_*$, which is invariant under the RG transformation $A_* = A_* A_*$. 
The linearized RG around this fixed point is obtained by collecting all terms in $(A_* + \delta A) (A_* + \delta A)$ that are first-order in $\delta A$,
\begin{align}
    \label{eq:linearbkten1D}
    \mathcal{R}\bigr|_{A_*}  : \delta A \mapsto
    \delta A' \bigr|_{A_*} = \delta A A_* + A_* \delta A,
\end{align}
Suppose the transfer matrix $A$, as well as $\delta A$, is an $n$-by-$n$ real matrix and denote the vector space of the these matrices by $\MnnF$.
The linearized map above maps $\MnnF$ to itself,
\begin{align}
    \label{eq:linearmapM2M}
    \mathcal{R}\bigr|_{A_*}:
    \MnnF \to \MnnF.
\end{align}
Here, $\delta A \in \MnnF $ can be regarded as an $n^2$-dimensional vector, and the linear map $\mathcal{R}$ an $n^2$-by-$n^2$ matrix.

When the lattice-reflection symmetry is presented, the transfer matrix $A$ is symmetric, $A^\intercal = A$.
Denote the vector space of all $n$-by-$n$ real symmetric matrices as $\Mnn{0}$ and that of all $n$-by-$n$ real antisymmetric matrices $\Mnn{1}$ (if $M \in \Mnn{1}$, then $M^\intercal = - M$); 
the direct sum of the two is $\MnnF$,
\begin{align}
    \label{eq:MeqMpMm}
    \MnnF =
    \Mnn{0} \oplus \Mnn{1}.
\end{align}
We call $c=0,1$ in $\Mnn{c}$ \emph{the charge of the lattice-reflection symmetry}.

\begin{theorem}
    \label{theo:R1Dbreak}
    When $A_*$ is symmetric, the linearization $\mathcal{R}\bigr|_{A_*}$ in Eq.~\eqref{eq:linearbkten1D} has two invariant subspaces: $\Mnn{0}$ and $\Mnn{1}$.
\end{theorem}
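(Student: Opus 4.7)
The plan is to verify the two subspace claims by direct computation using the transpose. Since $\mathcal{R}\bigr|_{A_*}$ is manifestly linear in $\delta A$, it suffices to show that applying $\mathcal{R}\bigr|_{A_*}$ to a symmetric $\delta A$ produces a symmetric output, and applying it to an antisymmetric $\delta A$ produces an antisymmetric output. The entire argument rests on the symmetry hypothesis $A_*^\intercal = A_*$ together with the elementary identity $(XY)^\intercal = Y^\intercal X^\intercal$.

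First I would pick $\delta A \in \Mnn{0}$, so that $\delta A^\intercal = \delta A$, and compute
\begin{align}
    \bigl(\mathcal{R}\bigr|_{A_*}(\delta A)\bigr)^\intercal
    &= \bigl(\delta A\, A_* + A_*\, \delta A\bigr)^\intercal \nonumber\\
    &= A_*^\intercal\, \delta A^\intercal + \delta A^\intercal\, A_*^\intercal \nonumber\\
    &= A_*\, \delta A + \delta A\, A_*
    = \mathcal{R}\bigr|_{A_*}(\delta A),
\end{align}
which shows the image lies in $\Mnn{0}$. Next I would take $\delta A \in \Mnn{1}$, so that $\delta A^\intercal = -\delta A$, and run the same computation; the two minus signs from transposing $\delta A$ combine with the unchanged transpose of $A_*$ to yield $-\mathcal{R}\bigr|_{A_*}(\delta A)$, placing the image in $\Mnn{1}$.

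Combined with the decomposition $\MnnF = \Mnn{0} \oplus \Mnn{1}$ stated in Eq.~\eqref{eq:MeqMpMm}, these two checks establish that both subspaces are invariant under $\mathcal{R}\bigr|_{A_*}$, and the theorem follows. I do not anticipate any obstacle here: the statement is essentially the observation that conjugation/anticonjugation by a symmetric operator preserves the parity of a matrix under transposition, and the proof is a two-line calculation for each sector. The more interesting content is the conceptual one set up earlier in the paper, namely that this splitting is the 1D precursor of the lattice-reflection charge decomposition that the subsequent sections will extend to 2D and 3D linearized RG maps.
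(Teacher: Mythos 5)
Your proof is correct and follows essentially the same route as the paper's: both compute the transpose of $\mathcal{R}\bigr|_{A_*}(\delta A)$ using $(XY)^\intercal = Y^\intercal X^\intercal$ together with $A_*^\intercal = A_*$, and then read off the sign from the (anti)symmetry of $\delta A$. No gaps.
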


\begin{proof}
The theorem can be proven by studying the symmetry property of $\delta A'$ in Eq.~\eqref{eq:linearbkten1D},
\begin{align}
    \label{eq:deltaApT}
    \left(\delta A'\right)^\intercal &=
    \left( \delta A A_* \right)^\intercal + 
    \left(A_* \delta A\right)^\intercal \nonumber\\
    &=
    A_* \left(\delta A\right)^\intercal
    +
    \left(\delta A\right)^\intercal A_*.
\end{align}
If $\delta A \in \Mnn{0}$, then $\left(\delta A'\right)^\intercal = \delta A'$, which means $\delta A' \in \Mnn{0}$.
If $\delta A \in \Mnn{1}$, then $\left(\delta A'\right)^\intercal = -\delta A'$, which means $\delta A' \in \Mnn{1}$.
\end{proof}

\begin{corollary}
    \label{corol:R1DeigvSym}
   When $A_*$ is symmetric, the eigenvectors of $\mathcal{R}\bigr|_{A_*}$ in Eq.~\eqref{eq:linearbkten1D} can be made to be either symmetric or antisymmetric.
\end{corollary}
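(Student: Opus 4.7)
The plan is to derive the corollary as a direct consequence of Theorem~\ref{theo:R1Dbreak} together with the orthogonal decomposition $\MnnF = \Mnn{0} \oplus \Mnn{1}$ from Eq.~\eqref{eq:MeqMpMm}. The theorem tells us that $\mathcal{R}\bigr|_{A_*}$ leaves both $\Mnn{0}$ and $\Mnn{1}$ invariant, so in a basis adapted to this decomposition the operator is block-diagonal. Each diagonal block has its own set of eigenvectors lying entirely in the corresponding sector, and together they exhaust the spectrum of $\mathcal{R}\bigr|_{A_*}$.

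Concretely, I would argue as follows. Pick any eigenvalue $\lambda$ of $\mathcal{R}\bigr|_{A_*}$ and let $E_\lambda$ be its eigenspace. For an arbitrary $v \in E_\lambda$, decompose uniquely as $v = v_0 + v_1$ with $v_c \in \Mnn{c}$ for $c = 0, 1$, using $v_0 = (v + v^\intercal)/2$ and $v_1 = (v - v^\intercal)/2$. Invariance gives $\mathcal{R}\bigr|_{A_*} v_0 \in \Mnn{0}$ and $\mathcal{R}\bigr|_{A_*} v_1 \in \Mnn{1}$, while the eigenvalue equation $\mathcal{R}\bigr|_{A_*} v = \lambda v_0 + \lambda v_1$ combined with the uniqueness of the symmetric/antisymmetric splitting forces $\mathcal{R}\bigr|_{A_*} v_c = \lambda v_c$ for each $c$. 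Hence $E_\lambda = (E_\lambda \cap \Mnn{0}) \oplus (E_\lambda \cap \Mnn{1})$, and one can always pick a basis of $E_\lambda$ consisting of purely symmetric and purely antisymmetric eigenvectors. Taking the union over all eigenvalues produces a full eigenbasis of $\mathcal{R}\bigr|_{A_*}$ in which every eigenvector belongs to one of the two charge sectors.

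The main (and really only) obstacle is the possibility of eigenvalue degeneracy between the two sectors: a given $\lambda$ might be shared by both $\Mnn{0}$ and $\Mnn{1}$, in which case a generic eigenvector of $\mathcal{R}\bigr|_{A_*}$ is neither symmetric nor antisymmetric. The argument above handles exactly this case by showing that the symmetric and antisymmetric parts of any such generic eigenvector are separately eigenvectors of the same eigenvalue, so one still has the freedom to choose the eigenvectors with definite lattice-reflection charge. No genuine diagonalizability assumption on $\mathcal{R}\bigr|_{A_*}$ is needed for the statement as written, since the claim is only about those eigenvectors that do exist.
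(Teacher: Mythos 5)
Your proposal is correct and follows essentially the same route as the paper: decompose an eigenvector into its symmetric and antisymmetric parts and use the invariance of $\Mnn{0}$ and $\Mnn{1}$ (Theorem~\ref{theo:R1Dbreak}) together with the uniqueness of the splitting to conclude that each part is separately an eigenvector with the same eigenvalue. The paper phrases the final step as a rearranged identity whose two sides lie in complementary sectors and must therefore both vanish, which is the same argument as your appeal to uniqueness of the direct-sum decomposition.
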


\begin{proof}
    Suppose that $\mathcal{R}\bigr|_{A_*}$ has an eigenvector $\delta A_{\lambda}$ with eigenvalue $\lambda$,
    \begin{align}
        \label{eq:1dRprfEigv}
        \mathcal{R}\bigr|_{A_*}(\delta A_{\lambda}) = 
        \lambda \cdot \delta A_{\lambda},
    \end{align}
    where $\delta A_{\lambda}$ is neither symmetric nor antisymmetric.
    We can decompose $\delta A_{\lambda}$ into symmetric and antisymmetric parts as
    \begin{align}
        \label{eq:1dRprfDecomp1}
        \delta A_{\lambda} = 
        \delta A_{\lambda}^{(0)} + \delta A_{\lambda}^{(1)}
        \text{ where} \nonumber\\
        \delta A_{\lambda}^{(0)} \in \Mnn{0} \text{ and }
        \delta A_{\lambda}^{(1)} \in \Mnn{1}.
    \end{align}
    Then, the eigenvalue problem in Eq.~\eqref{eq:1dRprfEigv} becomes
    \begin{subequations}
        \begin{align}
            \label{eq:1dRprfEigvDecomp1}
            \mathcal{R}\bigr|_{A_*}(\delta A_{\lambda}^{(0)} + \delta A_{\lambda}^{(1)}) 
            = 
            \lambda (\delta A_{\lambda}^{(0)} + \delta A_{\lambda}^{(1)}).   
        \end{align}
        With some rearrangement of the terms, the above equation becomes
        \begin{align}
            \label{eq:1dRprfEigvDecomp2}
            \mathcal{R}\bigr|_{A_*}(\delta A_{\lambda}^{(0)})
            - \lambda \cdot \delta A_{\lambda}^{(0)}
            = 
            \lambda \cdot \delta A_{\lambda}^{(1)} - 
            \mathcal{R}\bigr|_{A_*}(\delta A_{\lambda}^{(1)}).
        \end{align}
    \end{subequations}
    Due to Theorem~\ref{theo:R1Dbreak}, the left-hand side (LHS) of the above equation is symmetric, while the right-hand side (RHS) is antisymmetric.
    Therefore, both sides of the above equation should vanish, leading to
    \begin{subequations}
    \begin{align}
        \label{eq:1dRprfEigvbreak}
        \mathcal{R}\bigr|_{A_*}(\delta A_{\lambda}^{(0)}) = 
        \lambda \cdot \delta A_{\lambda}^{(0)},\\
        \mathcal{R}\bigr|_{A_*}(\delta A_{\lambda}^{(1)}) = 
        \lambda \cdot \delta A_{\lambda}^{(1)}.
    \end{align}
    \end{subequations}
    This means that if $\mathcal{R}\bigr|_{A_*}$ has an eigenvector with no definite symmetry, we can take its symmetric and antisymmetric parts, and those two parts are eigenvectors of $\mathcal{R}\bigr|_{A_*}$ with the same eigenvalue.
\end{proof}

Since the linearization is usually evaluated at a fixed point, we will often drop $\bigr|_{A_*}$ and simply write $\mathcal{R}$ to simplify the notation.
We will be concerned with nonzero eigenvalues of a linearized RG map.
The following corollary follows immediately from Corollary~\ref{corol:R1DeigvSym}.
\begin{corollary}
Denote the collection of all nonzero eigenvalues of $\mathcal{R}$ in Eq.~\eqref{eq:linearbkten1D} as $\spec{\mathcal{R}}$.
This collection can be divided into two parts:
\begin{align}
    \label{eq:R1DspecDecomp}
    \spec{\mathcal{R}} = 
    \spec^{(0)}{\mathcal{R}} +
    \spec^{(1)}{\mathcal{R}},
\end{align}
where $\spec^{(0)}$ means the collection of eigenvalues whose eigenvectors are symmetric, while $\spec^{(1)}$ for antisymmetric ones.
\end{corollary}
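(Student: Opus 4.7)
The plan is to deduce the spectral decomposition as an immediate bookkeeping consequence of the invariant-subspace decomposition established in Theorem~\ref{theo:R1Dbreak}, with Corollary~\ref{corol:R1DeigvSym} supplying the correct labeling of eigenvectors by symmetry type.

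First I would observe that $\MnnF = \Mnn{0} \oplus \Mnn{1}$ is the standard splitting of a real matrix into its symmetric and antisymmetric parts via $M = \tfrac{1}{2}(M + M^\intercal) + \tfrac{1}{2}(M - M^\intercal)$. Next, Theorem~\ref{theo:R1Dbreak} guarantees that both $\Mnn{0}$ and $\Mnn{1}$ are $\mathcal{R}$-invariant. Hence, choosing any basis of $\MnnF$ adapted to this direct-sum decomposition (i.e.\ the union of a basis of $\Mnn{0}$ and a basis of $\Mnn{1}$), the matrix representing $\mathcal{R}$ is block diagonal, with blocks $\mathcal{R}^{(0)} = \mathcal{R}\bigr|_{\Mnn{0}}$ and $\mathcal{R}^{(1)} = \mathcal{R}\bigr|_{\Mnn{1}}$.

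From this block-diagonal form the characteristic polynomial factors as $\det(\lambda \mathbbm{1} - \mathcal{R}) = \det(\lambda \mathbbm{1} - \mathcal{R}^{(0)}) \cdot \det(\lambda \mathbbm{1} - \mathcal{R}^{(1)})$, so the multiset of eigenvalues of $\mathcal{R}$ (with algebraic multiplicity) is the disjoint multiset union of those of the two blocks. Restricting to nonzero eigenvalues and defining $\spec^{(c)}(\mathcal{R})$ as the collection of nonzero eigenvalues of $\mathcal{R}^{(c)}$, Eq.~\eqref{eq:R1DspecDecomp} follows. To verify that these are precisely the eigenvalues whose eigenvectors are respectively symmetric and antisymmetric, I would appeal to Corollary~\ref{corol:R1DeigvSym}: any nonzero eigenvalue $\lambda$ of $\mathcal{R}$ admits eigenvectors that can be chosen entirely within $\Mnn{0}$ or $\Mnn{1}$, and conversely every eigenvector lying in $\Mnn{c}$ contributes to $\spec^{(c)}(\mathcal{R})$ by definition.

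There is no real obstacle here; this corollary is essentially a restatement of Theorem~\ref{theo:R1Dbreak} and Corollary~\ref{corol:R1DeigvSym} at the level of spectra. The only care required is to interpret ``$+$'' in Eq.~\eqref{eq:R1DspecDecomp} as a multiset union rather than a set union, so that an eigenvalue $\lambda$ shared between the two sectors (for instance when there is a degenerate eigenspace containing both a symmetric and an antisymmetric eigenvector) is counted with the appropriate multiplicity on both sides. This multiplicity accounting is precisely what the block-diagonal factorization of the characteristic polynomial delivers, so the proof is complete in a few lines.
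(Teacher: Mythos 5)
Your proof is correct and matches the paper's intent: the paper states this corollary without proof as an immediate consequence of Corollary~\ref{corol:R1DeigvSym}, and your block-diagonal argument (which the paper itself spells out later in Appendix~\ref{app:2dLinMatrix}, Eq.~\eqref{eq:R1Dblockform}) is exactly the bookkeeping that makes that ``immediately'' precise. The only caution is notational: you use $\mathcal{R}^{(c)}$ for the restriction $\mathcal{R}\bigr|_{\Mnn{c}}$, whereas the paper reserves that symbol in Theorem~\ref{theo:Rc1Dbuild} for a different map defined on all of $\MnnF$, so a reader should not conflate the two.
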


Using the idea of the transposition trick, we discover a way to obtain the spectrum of $\mathcal{R}$ in Eq.~\eqref{eq:linearbkten1D} separately in its symmetric and antisymmetric invariant subspaces.
\begin{theorem}
    \label{theo:Rc1Dbuild}
    When $A_*$ is symmetric, one can build two linear map $\mathcal{R}^{(c)}, c=0,1$:
\begin{align}
    \label{eq:linearbkten1Dc}
    \mathcal{R}^{(c)}: \delta A \mapsto
    \delta A' \bigr|_{A_*}^c = 
    \delta A (A_*)^\intercal + A_* (\delta A)^\intercal \cdot (-1)^c,
\end{align}
whose spectra are related to that of the linearization $\mathcal{R}\bigr|_{A_*}$ in Eq.~\eqref{eq:linearbkten1D}.
All nonzero eigenvalues of $\mathcal{R}^{(c)}$ are those of $\mathcal{R}$ with symmetric eigenvectors if $c=0$, or with antisymmetric eigenvectors if $c=1$:
\begin{align}
    \label{eq:R1D2Rc1Dspec}
    \spec{\mathcal{R}^{(c)}} =
    \spec^{(c)} \mathcal{R}.
\end{align}
\end{theorem}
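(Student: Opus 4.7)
The plan is to establish two structural facts about the auxiliary map $\mathcal{R}^{(c)}$ that together pin down its nonzero spectrum. \emph{Fact 1:} the restriction of $\mathcal{R}^{(c)}$ to the charge-$c$ subspace $\Mnn{c}$ coincides with the linearized RG $\mathcal{R}\bigr|_{A_*}$ from Eq.~\eqref{eq:linearbkten1D}. \emph{Fact 2:} the image of $\mathcal{R}^{(c)}$ always lies inside $\Mnn{c}$, regardless of the input. Once these are in hand, any nonzero-eigenvalue eigenvector of $\mathcal{R}^{(c)}$ is forced into $\Mnn{c}$, where the action matches $\mathcal{R}$; this gives one inclusion of Eq.~\eqref{eq:R1D2Rc1Dspec}. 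The reverse inclusion is immediate from Corollary~\ref{corol:R1DeigvSym}, which supplies $c$-symmetric $\mathcal{R}$-eigenvectors that Fact 1 then certifies as $\mathcal{R}^{(c)}$-eigenvectors with the same eigenvalue.

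To execute this, I would first use $A_*^\intercal = A_*$ to rewrite Eq.~\eqref{eq:linearbkten1Dc} as
\begin{equation*}
    \mathcal{R}^{(c)}(\delta A) = \delta A\, A_* + (-1)^c A_*\, (\delta A)^\intercal .
\end{equation*}
For $\delta A \in \Mnn{c}$ the defining relation $(\delta A)^\intercal = (-1)^c \delta A$ collapses the sign factor and yields $\delta A\, A_* + A_*\, \delta A = \mathcal{R}\bigr|_{A_*}(\delta A)$, which is Fact 1. Transposing the same rewritten expression gives
\begin{equation*}
    \bigl(\mathcal{R}^{(c)}(\delta A)\bigr)^\intercal = A_*\, (\delta A)^\intercal + (-1)^c\, \delta A\, A_* = (-1)^c\, \mathcal{R}^{(c)}(\delta A),
\end{equation*}
which is Fact 2.

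With these in place, the spectrum identity follows in two lines. If $\mathcal{R}^{(c)}(\delta A) = \lambda\, \delta A$ with $\lambda \neq 0$, then $\delta A = \lambda^{-1}\mathcal{R}^{(c)}(\delta A) \in \Mnn{c}$ by Fact 2, and Fact 1 upgrades the equation to $\mathcal{R}(\delta A) = \lambda\, \delta A$, placing $\lambda \in \spec^{(c)}\mathcal{R}$. Conversely, a $c$-symmetric $\mathcal{R}$-eigenvector furnished by Corollary~\ref{corol:R1DeigvSym} is automatically a $\mathcal{R}^{(c)}$-eigenvector by Fact 1. To upgrade this to a multiset identity, I would choose a basis of $\MnnF$ adapted to $\MnnF = \Mnn{c} \oplus \Mnn{1-c}$; Facts 1 and 2 then show that the matrix of $\mathcal{R}^{(c)}$ in such a basis is block upper triangular with diagonal blocks $\mathcal{R}\bigr|_{\Mnn{c}}$ and $0$, so its nonzero eigenvalues with multiplicities reproduce those of $\mathcal{R}\bigr|_{\Mnn{c}}$, i.e.\ $\spec^{(c)}\mathcal{R}$. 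The only real subtlety I anticipate is this multiplicity bookkeeping in the presence of degeneracies; no analytic obstacle arises, since the whole argument reduces to transpose identities together with the symmetry $A_*^\intercal = A_*$.
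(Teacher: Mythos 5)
Your proposal is correct and follows essentially the same route as the paper's proof: the paper's two steps are exactly your Fact 1 (agreement of $\mathcal{R}^{(c)}$ with $\mathcal{R}$ on $\Mnn{c}$) and Fact 2 (the range of $\mathcal{R}^{(c)}$ is $\Mnn{c}$), combined in the same way to trap nonzero-eigenvalue eigenvectors inside $\Mnn{c}$. Your explicit transpose computations and the block-triangular multiplicity remark simply supply details the paper states without verification or defers to its Appendix~\ref{app:2dLinMatrix}.
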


\begin{proof}
    The first step of the proof is to notice that in the subspace $\Mnn{c}$, $\mathcal{R}^{(c)}$ in Eq.~\eqref{eq:linearbkten1Dc} is the same linear map as $\mathcal{R}$ in Eq.~\eqref{eq:linearbkten1D},
\begin{align}
    \label{eq:R1D2RcinMnnc}
    \forall \delta A \in \Mnn{c},
    \mathcal{R}^{(c)}(\delta A)
    =
    \mathcal{R}(\delta A).
\end{align}
This indicates that $\spec^{(c)}{\mathcal{R}^{(c)}} = \spec^{(c)}{\mathcal{R}}$.
The second step is to notice that the range of the linear map $\mathcal{R}^{(c)}$ is $\Mnn{c}$,
\begin{subequations}
    \label{eq:linearmapM2Mp}
\begin{align}
    \mathcal{R}^{(c)}:
    \MnnF \to \Mnn{c}, \text{ or}\\
    \forall \delta A \in \MnnF,
    \mathcal{R}^{(c)}(\delta A) \in \Mnn{c}.
\end{align} 
\end{subequations}
The indication is that $\spec{\mathcal{R}^{(c)}} = \spec^{(c)}{\mathcal{R}^{(c)}}$.
To see the reason of this indication, suppose $\delta A_{\lambda}$ is an eigenvector of $\mathcal{R}^{(c)}$ with eigenvalue $\lambda \neq 0$, so $\mathcal{R}^{(c)}(\delta A_{\lambda}) = \lambda \cdot \delta A_{\lambda}$.
The LHS of this equation must be in the subspace $\Mnn{c}$ due to Eq.~\eqref{eq:linearmapM2Mp}, which shows that the eigenvector $\delta A_{\lambda}\in \Mnn{c}$.
Equations~\eqref{eq:R1D2RcinMnnc} and~\eqref{eq:linearmapM2Mp} together prove this theorem.
\end{proof}

\begin{remark}
    The domain of $\mathcal{R}^{(c)}$ in Eq.~\eqref{eq:linearbkten1Dc} is $\MnnF$, the same as that of $\mathcal{R}$ in Eq.~\eqref{eq:linearbkten1D}.
\end{remark}

\begin{remark}
    The linear map $\mathcal{R}^{(0)}$ can be understood as the linearization of the RG equation $A' = A A^\intercal$ in Eq.~\eqref{eq:bkten1d-sym} after the transposition trick in Eq.~\eqref{eq:tsptrick1D}.
\end{remark}

In Appendix~\ref{app:2dLinMatrix}, we will provide some intuition about the proofs in this subsection by treating the linear maps $\mathcal{R}$ and $\mathcal{R}^{(c)}$ as matrices.

\begin{center}
    \underline{\emph{Rules for the linearization}}
    \par
\end{center}
By inspecting the tensor RG equation with the transposition trick $A' = A A^\intercal$ in Eq.~\eqref{eq:bkten1d-sym} and the linearized RG map in Eq.~\eqref{eq:linearbkten1Dc}, we can summarize some rules for writing down the linearization once the RG map is given. 
To this end, we define a following bilinear map for two matrices with the same dimension, $M_1$ and $M_2$, to be a matrix multiplication of the two,
\begin{align}
    \label{eq:1dcontrDef}
    \mathcal{C}(M_1, M_2) \texteq{def} M_1 M_2.
\end{align}
The tensor RG equation $A' = A A^\intercal = \mathcal{T}(A)$ in Eq.~\eqref{eq:bkten1d-sym}, seen as a nonlinear map $\mathcal{T}: A \mapsto A'$, can be rewritten using the bilinear map $\mathcal{C}$ as
\begin{align}
    \label{eq:1dC2T}
    A' = \mathcal{T}(A) = A A^\intercal =
    \mathcal{C}(A, A^\intercal).
\end{align}
The linearized RG map $\mathcal{R}^{(c)}$ in Eq.~\eqref{eq:linearbkten1Dc} can be expressed using this bilinear map $\mathcal{C}$ as
\begin{align}
    \label{eq:1dC2R}
    \mathcal{R}^{(c)}:
    &\delta A' \bigr|_{A_*}^c 
    =
    \mathcal{L}\bigr|_{A_*}^c(\delta A) \nonumber\\
    &=
    \mathcal{C}(\delta A, A_*^\intercal).
    +
    \mathcal{C}(A_*, \left(\delta A\right)^\intercal) \cdot (-1)^c.
\end{align}
The notation $\mathcal{L}\bigr|_{A_*}^c$ means a linearization of $\mathcal{T}$ around the tensor $A_*$ in the lattice-reflection sector with charge $c$.
Although this notation looks redundant since we already have $\mathcal{R}^{(c)}$, it will be convenient in higher dimensions.
The rules can be summarized by comparing Eqs.~\eqref{eq:1dC2T} and~\eqref{eq:1dC2R} as follows,
\begin{itemize}
    \item Set all copies of tensor $A$ in the RG equation with the transposition trick to be the fixed-point tensor $A_*$, including all its transposition; specifically, in the 1D toy example, it means $\mathcal{C}(A_*, A_*^\intercal)$.
    \item Write down all possible terms by replacing each argument of the expression in the previous step with $\delta A$; if an argument is transposed in $\mathcal{C}$, $\delta A$ also gets transposition.
        In the 1D toy example, there are two such terms $\mathcal{C}(\delta A, A_*^\intercal)$ and $\mathcal{C}(A_*, \left(\delta A\right)^\intercal)$.
    \item If the $\delta A$ is transposed in a term, this term should be multiplied by a phase factor $(-1)^c$, according to the charge of the sector $c$;
        in the 1D toy example, this means the second term in the previous step becomes $\mathcal{C}(A_*, \left(\delta A\right)^\intercal) \cdot (-1)^c$.
    \item The sum of all these terms gives a linearized RG map in the lattice-reflection sector with charge $c$.
\end{itemize}

\subsection{Chain rule\label{subsec:chainrule}}
For a generalization to higher dimensions of the rules summarized from the 1D toy example in~\autoref{subsec:linRG1D}, we will develop a trick to break the total linearization into linearized RG maps in different directions of the HOTRG-like block-tensor map.
It is nothing but the chain rule of the total derivative of a composite function in calculus, applied in the context of TNRG\@.
For a certain direction, the rules in~\autoref{subsec:linRG1D} can be used to write down the linearized RG map.
For simplicity, we demonstrate this chain rule for the RG equation of \emph{the usual HOTRG in 2D in Eq.~\eqref{eq:hotrg2step}, without the transposition trick}.

The tensor RG equation of the 2D HOTRG in Eq.~\eqref{eq:hotrg2step} is a composition of two collapses in two directions. 
Write the tensor-network contraction of the $y$ collapses $\mathcal{T}_{\text{ntt}}^y$ (the first equation in Eq.~\eqref{eq:hotrg2step}) as a bilinear map $\mathcal{C}^y$,
\begin{align}
    \label{eq:2dC2TyNoSym}
    A_y = \mathcal{T}_{\text{ntt}}^y(A)
    =
    \mathcal{C}^y(A, A; \{p^{\text{ntt}}\}).
\end{align}
The subscript and superscript ``ntt'' stands for ``no transposition trick''.
Just like Eqs.~\eqref{eq:1dcontrDef} and~\eqref{eq:1dC2T}, the notation $T_{\text{ntt}}^y$ means a nonlinear map $A \to A_y$, while $C^y$ is bilinear in its two arguments.
The collection of isometric tensors  $\{p^{\text{ntt}}\}$ in $\mathcal{C}^y$, which represents just a single $p_x$ for the case in Eq.~\eqref{eq:hotrg2step}, parametrizes this bilinear map.
The first argument of $\mathcal{C}^y$ stands for the upper tensor $A$, while the second for the lower one in first equation in Eq.~\eqref{eq:hotrg2step}.
The collapse in $x$ direction $\mathcal{T}_{\text{ntt}}^x$ (the second equation in Eq.~\eqref{eq:hotrg2step}) can be written similarly as
\begin{align}
    \label{eq:2dC2TxNoSym}
    A' = \mathcal{T}_{\text{ntt}}^x(A_y)
    =
    \mathcal{C}^x(A_y, A_y; \{p^{\text{ntt}}\}),
\end{align}
where the first argument of $\mathcal{C}^x$ stands for the right tensor $A_y$, while the second for the left one in the second equation in Eq.~\eqref{eq:hotrg2step}.
The tensor RG equation in Eq.~\eqref{eq:hotrg2step} $\mathcal{T}_{\text{ntt}}: A \mapsto A'$ is a composition of the two collapses,
\begin{align}
    \label{eq:2dTxTyTNoSym}
    \mathcal{T}_{\text{ntt}} =
    \mathcal{T}_{\text{ntt}}^x \circ \mathcal{T}_{\text{ntt}}^y.
\end{align}
The linearization of $\mathcal{T}_{\text{ntt}}$ around $A$ can be obtained by the following chain rule for total derivatives,
\begin{align}
    \label{eq:2dLinRGNoSym}
    \mathcal{R}_{\text{2D}}^{\text{ntt}}\bigr|_{A}
    =
    \mathcal{L}^x_{\text{ntt}}{} \bigr|_{A_y} \circ
    \mathcal{L}^y_{\text{ntt}} \bigr|_{A}\quad.
\end{align}
$\mathcal{L}^y_{\text{ntt}} \bigr|_{A}$ is the linearization of $\mathcal{T}^y_{\text{ntt}}$ around $A$, which can be determined from the $\mathcal{C}^y$ in Eq.~\eqref{eq:2dC2TyNoSym},
\begin{align}
    \label{eq:2dLinYNoSym}
    \delta A_y 
    &= \mathcal{L}^y_{\text{ntt}} \bigr|_{A}(\delta A) \nonumber\\
    &=
    \mathcal{C}^y(\delta A, A; \{p^{\text{ntt}}\})
    +
    \mathcal{C}^y(A, \delta A; \{p^{\text{ntt}}\}),
\end{align}
and $\mathcal{L}^x_{\text{ntt}}\bigr|_{A_y}$ means the linearization of $\mathcal{T}^x_{\text{ntt}}$ around $A_y = \mathcal{T}^y_{\text{ntt}}(A)$, which can be determined from the $\mathcal{C}^x$ in Eq.~\eqref{eq:2dC2TxNoSym},
\begin{align}
    \label{eq:2dLinXNoSym}
    \delta A' 
    &= \mathcal{L}^x_{\text{ntt}} \bigr|_{A_y}(\delta A_y) \nonumber\\
    &=
    \mathcal{C}^x(\delta A_y, A_y; \{p^{\text{ntt}}\})
    +
    \mathcal{C}^x(A_y, \delta A_y; \{p^{\text{ntt}}\}).
\end{align}
The composition of the two Eqs.~\eqref{eq:2dLinYNoSym} and~\eqref{eq:2dLinXNoSym} is the $\mathcal{R}_{\text{2D}}^{\text{ntt}}\bigr|_{A}: \delta A \mapsto \delta A'$. 

An important remark in the linearization scheme here is that \emph{we treat all isometric tensors for the coarse graining as constants during the linearization}. 
This treatment works well for the Ising model in 2D and 3D~\cite{Lyu:Xu:Kawashima:2021, Lyu:Kawashima:2023, Lyu:Kawashima:2024,Guo:Wei:2024}. 
The same treatment is used in the linearization of Entanglement Renormalization~\cite{Vidal:2010-book}, which also indicates its validity considering the high accuracy of the scaling dimensions estimations. 
In the context of TNRG, this numerical practice has been explored in more details in Ref.~\cite{Ebel:2025}, where it is shown that the linearized RG map obtained by freezing the isometric tensors is equivalent to the lattice dilatation operator.

Without exploiting the lattice-reflection symmetry using the transposition trick, the EF process leads to an RG equation in Eq.~\eqref{eq:efrg2dmap}.
The eight filtering matrices, acting on the $4$-leg tensors in the block-tensor patch, make the tensors at different positions in the $2 \times 2$ block different from each other.
While the chain rule still applies, it complicates a lot.
However, as will be demonstrated in the following subsections, when the lattice-reflection symmetry is exploited using the transposition trick, the chain rule for the linearization remains simple even after the EF is incorporated.
This is another advantage of using the transposition trick to exploit the lattice-reflection symmetry in TNRG\@.

\subsection{The linearization in 2D\label{subsec:2dLinRG}}
With the rules summarized from the intuition of the 1D toy example in~\autoref{subsec:linRG1D} and the chain rule of the linearization in 2D in~\autoref{subsec:chainrule}, we can write down the linearization in separate lattice-reflection sectors of the 2D and 3D EF-enhanced TNRG maps in~\autoref{sec:algo}.
In this subsection, we write down the answer in 2D.
Since the chain rule in 2D has only been developed without exploiting the lattice-reflection symmetry and without incorporating the EF, the claimed linearized RG maps are educated guesses at this point.
The proof of our claim will be given in Appendix~\ref{app:2dLinProof}.

First, we generalize the notation of symmetric and antisymmetric matrices to tensors. 
We focus on 4-leg tensors in 2D, and denote the linear space where they live as $\mathfrak{T}_4$. 
As a subset of $\mathfrak{T}_4$, a lattice-reflection sector $\mathfrak{T}_4^{(c_x,c_y)}$, with charge set $(c_x, c_y)$ and SWAP-gauge matrix set $(g_x, g_y)$, is defined as
\begin{align}
    \label{eq:T4cxcyDef}
    \mathfrak{T}_4^{(c_x, c_y)}
    \texteq{def}
    \left\{ \delta A \in \mathfrak{T}_4 \left|
    \includegraphics[width=0.55\columnwidth, valign=c]{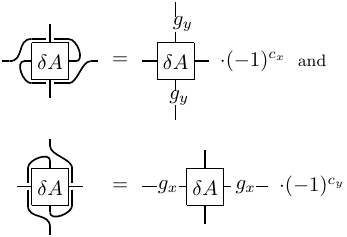}\right.
    \right\},
\end{align}
where $g_x$ and $g_y$ are SWAP-gauge matrices that square to identity $g_x g_x = g_y g_y = 1$ and $c_x, c_y \in \{0, 1\}$. 
This definition reduces to the symmetric and antisymmetric matrices when the bond dimension of two legs of a certain direction is set to be $1$. 
It is easy to see that $\mathfrak{T}_4^{(c_x, c_y)}$ is also a subspace of $\mathfrak{T}_4$. 
According to the definition of the lattice-reflection symmetry in Eq.~\eqref{eq:refl2d}, the tensor $A$ has lattice-reflection charges $c_x=c_y=0$.

The tensor RG map of the 2D algorithm in~\autoref{subsec:algo2d} is a composition of the following three transformations.
The first transformation is acting the filtering matrices $s_x$ and $s_y$ on the 4-leg tensor $A$, as is shown in Eq.~\eqref{eq:2dEFRGeq2}. 
We rewrite this EF transformation as
\begin{subequations}
\begin{align}
    \label{eq:2dA2Af}
    \mathcal{F}^{\square}: A \mapsto A^f =
    \mathcal{F}^{\square}\left( A; \{s\} \right),
\end{align}
where $\{s\}$ denotes  $s_x, s_y$ collectively and the subscript $\square$ means the target patch of the EF is a $2 \times 2$ square tensor network.
The remaining two transformations are two collapses of the HOTRG-like block tensor transformation in Eq.~\eqref{eq:2dEFhotRGeq}. 
The second transformation is the $y$ collapse $\mathcal{T}^y_{\text{2D}}: A^f \mapsto A^f_y$ in Eq.~\eqref{eq:2dEFhotRGycol}, which can be rewritten using a bilinear map $\mathcal{C}^y_{\text{2D}}$ as
\begin{align}
    \label{eq:2dAf2Afy}
    \mathcal{T}^y_{\text{2D}}: A^f \mapsto 
    A^f_y 
    &=
    \mathcal{T}^y_{\text{2D}}(A^f)  \nonumber\\
    &=
    \mathcal{C}^y_{\text{2D}}\left(
        A^f, \left(A^f\right)^{T y}; \{p\}
    \right).
\end{align}
The notation $\left(A^f\right)^{T y}$ means transposing the two $y$ legs of the tensor $A^f$, as is shown in Eq.~\eqref{eq:2dEFhotRGycol}.
The third transformation is $\mathcal{T}^x_{\text{2D}}: A^f_y \mapsto A'$ in Eq.~\eqref{eq:2dEFhotRGxcol}, which can be rewritten using a bilinear map $\mathcal{C}^x_{\text{2D}}$ as
\begin{align}
    \label{eq:2dAfy2Ap}
    \mathcal{T}^x_{\text{2D}}: A^f_y \mapsto A' =
    \mathcal{T}^x_{\text{2D}}(A^f_y) =
    \mathcal{C}^x_{\text{2D}}\left(
        A^f_y, \left(A^f_y\right)^{T x}; \{p\}
    \right).
\end{align}
Similarly, the notation $\left(A^f_y\right)^{T x}$ means transposing the two $x$ legs of the tensor $A^f_y$, as is shown in Eq.~\eqref{eq:2dEFhotRGxcol}.
\end{subequations}
The tensor RG map $\mathcal{T}_{\text{2D}}$ is a composition of the above three transformations,
\begin{align}
    \label{eq:T2Ddecomp}
    \mathcal{T}_{\text{2D}} =
    \mathcal{T}^x_{\text{2D}} \circ
    \mathcal{T}^y_{\text{2D}} \circ
    \mathcal{F}^\square.
\end{align}

We claim that the linearization $\mathcal{R}_{\text{2D}}^{(c_x, c_y)}$ related to the above RG map $\mathcal{T}_{\text{2D}}$ in the lattice-reflection sector $\mathfrak{T}_4^{(c_x, c_y)}$ can be written down using the chain rule in~\autoref{subsec:chainrule} and the rules summarized from the 1D toy example in~\autoref{subsec:linRG1D}.
The linearized RG map $\mathcal{R}_{\text{2D}}^{(c_x, c_y)} \bigr|_A$ around a tensor $A$ is a composition of three linearized transformations,
\begin{align}
    \label{eq:R2Ddecomp}
    \mathcal{R}_{\text{2D}}^{(c_x, c_y)} \bigr|_A =
    \mathcal{L}^x_{\text{2D}} \bigr|^{c_x}_{A^f_y} \circ
    \mathcal{L}^y_{\text{2D}} \bigr|^{c_y}_{A^f} \circ
    \mathcal{L}^\square_{\mathcal{F}} \bigr|_A.
\end{align}
In the above equation, $\mathcal{L}^\square_{\mathcal{F}} \bigr|_A$ is the linearization of $\mathcal{F}^\square$ in Eq.~\eqref{eq:2dA2Af} around tensor $A$.
Since $\mathcal{F}^\square : A \mapsto A^f$ is linear, its linearization is easy to write down,
\begin{subequations}
\begin{align}
    \label{eq:2dlinF}
    \mathcal{L}^\square_{\mathcal{F}} \bigr|_A:
    \delta A \mapsto 
    &\delta A^f =
    \mathcal{F}^\square \left( \delta A; \{s\} \right)
    \text{ or pictorially as} \nonumber\\
    &\includegraphics[scale=1.0, valign=c]{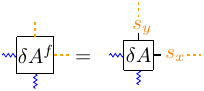}\quad.
\end{align}
The second map $\mathcal{L}^y_{\text{2D}} \bigr |_{A^f}^{c_y}$ is related to the $y$ collapse $\mathcal{T}^y_{\text{2D}}$ in Eq.~\eqref{eq:2dAf2Afy}, which can be determined using the rules in~\autoref{subsec:linRG1D},
\begin{align}
    \label{eq:2dlinLy}
    &\mathcal{L}^y_{\text{2D}} \bigr|_{A^f}^{c_y}:
    \delta A^f \mapsto 
    \delta A^f_y =
    \mathcal{C}^y_{\text{2D}}\left(
        \delta A^f, \left(A^f\right)^{T y}; \{p\}
    \right) \nonumber\\
    &+
    \mathcal{C}^y_{\text{2D}}\left(
        A^f, \left(\delta A^f\right)^{T y}; \{p\} 
    \right) \cdot (-1)^{c_y}
    \text{ or pictorially as} \nonumber\\
    &\includegraphics[width=0.80\columnwidth, valign=c]{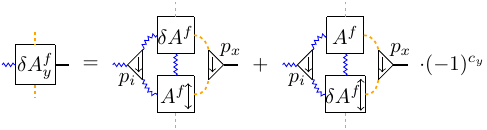}.
\end{align}
The third map $\mathcal{L}^x_{\text{2D}} \bigr |_{A^f_y}^{c_x}$ is related to the $x$ collapse $\mathcal{T}^x_{\text{2D}}$ in Eq.~\eqref{eq:2dAfy2Ap}, which can also be determined using the rules in~\autoref{subsec:linRG1D},
\begin{align}
    \label{eq:2dlinLx}
    &\mathcal{L}^x_{\text{2D}} \bigr|_{A^f_y}^{c_x}:
    \delta A^f_y \mapsto 
    \delta A' =
    \mathcal{C}^x_{\text{2D}}\left(
        \delta A^f_y, \left(A^f_y\right)^{T x}; \{p\}
    \right) \nonumber\\
    &+
    \mathcal{C}^x_{\text{2D}}\left(
        A^f_y, \left(\delta A^f_y \right)^{T x}; \{p\} 
    \right) \cdot (-1)^{c_x}
    \text{ or pictorially as} \nonumber\\
    &\includegraphics[width=0.80\columnwidth, valign=c]{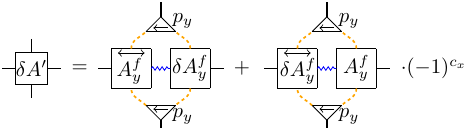}.
\end{align}
\end{subequations}
The proof of this claim from Eq.~\eqref{eq:R2Ddecomp} to Eq.~\eqref{eq:2dlinLx} will be given in Appendix~\ref{app:2dLinProof}. 
We will also justify this claim numerically using the 2D Ising model in~\autoref{sec:numdemo}.

\subsection{The linearization in 3D\label{subsec:3dLinRG}}
It is straightforward to generalize the linearization of the 2D EF-enhanced RG in previous subsection to 3D.
Denote the linear space where all 6-leg tensors live as $\mathfrak{T}_6$. 
A lattice-reflection sector $\mathfrak{T}_6^{(c_x,c_y,c_z)}$, with charge set $(c_x,c_y, c_z)$ and SWAP-gauge matrix set $(g_{zx}, g_{zy}, g_{yz}, g_{yx}, g_{xy}, g_{xz})$, is a subspace of $\mathfrak{T}_6$ that contains all 6-leg tensors that satisfies a similar condition to the 2D one in Eq.~\eqref{eq:T4cxcyDef}. 
In parallel with the definition of a lattice-reflection symmetric tensor in Eq.~\eqref{eq:refl3d}, take the reflection across the $z$-plane as an example; for any $\delta A \in \mathfrak{T}_6^{(c_x, c_y, c_z)}$, it should satisfy the following condition when its two $z$ legs are transposed,
\begin{align}
    \label{eq:T6cxcyczDefz}
    \includegraphics[width=0.75\columnwidth, valign=c]{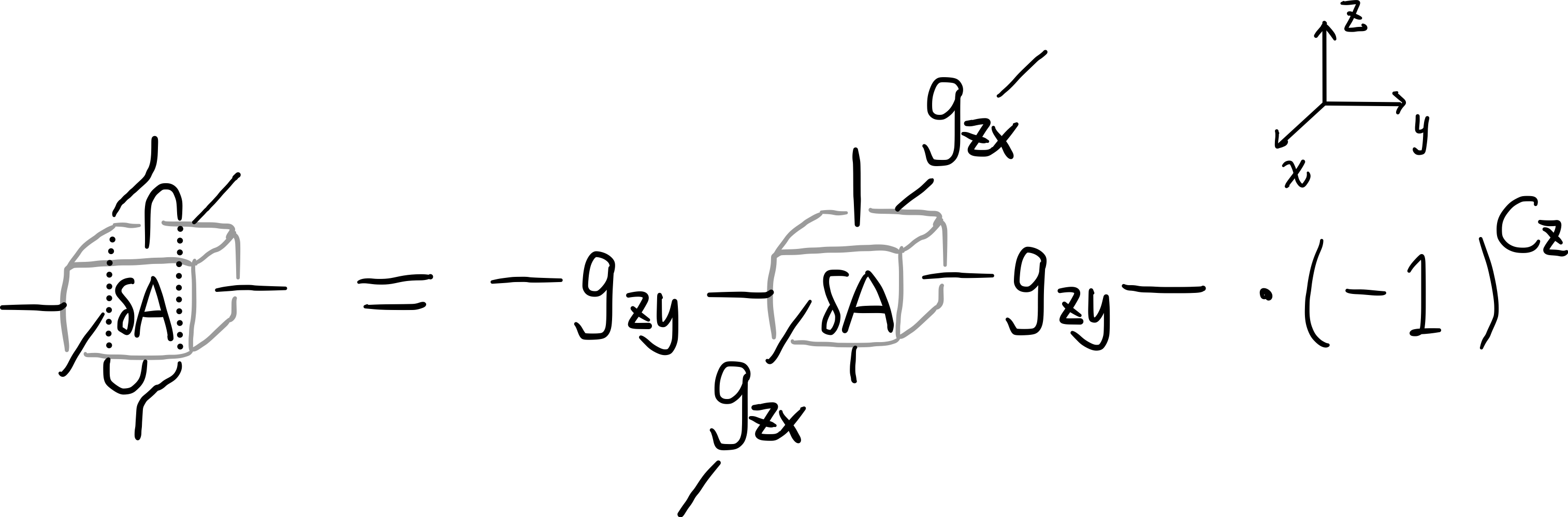}\quad.
\end{align}
Similar conditions hold for $\delta A$ when it reflects across the $y$- and $x$-plane. 
The tensor $A$ in Eq.~\eqref{eq:refl3d} has reflection charges $c_x=c_y=c_z=0$.

The 3D EF-enhanced RG map is a composition of four transformations--- $A \mapsto A^f$ in Eq.~\eqref{eq:3dEFAf} and $A^f \mapsto A_z \mapsto A_{zy} \mapsto A'$ in Eqs.~\eqref{eq:3dzcol},~\eqref{eq:3dycol} and~\eqref{eq:3dxcol},
\begin{align}
    \label{eq:T3Ddecomp}
    \mathcal{T}_{\text{3D}} =
    \mathcal{T}^x_{\text{3D}} \circ
    \mathcal{T}^y_{\text{3D}} \circ
    \mathcal{T}^z_{\text{3D}} \circ
    \mathcal{F}^{\cube}.
\end{align}
Specifically, $\mathcal{F}^{\cube}$ is the EF map in Eq.~\eqref{eq:3dEFAf}, 
\begin{subequations}
\begin{align}
    \label{eq:3DA2Af}
    \mathcal{F}^{\cube}: A \mapsto A^f = \mathcal{F}^{\cube}(A; \{s\})
\end{align}
with the subscript $\cube[.8]$ indicating that the target patch of the EF is a $2 \times 2 \times 2$ cube tensor network.
The other three maps in Eq.~\eqref{eq:T3Ddecomp} are HOTRG-like collapses in $z,y,x$ directions,
\begin{align}
    \label{eq:3DAf2Az}
    \mathcal{T}^{z}_{\text{3D}}: 
    A^f \mapsto 
    A_z 
    &= 
    \mathcal{T}^z_{\text{3D}}(A^f) \quad\text{(see Eq.~\eqref{eq:3dzcol})} \nonumber\\
    &= 
    \mathcal{C}^{z}_{\text{3D}}(A^f, \left(A^f\right)^{T z}; \{p\}),
\end{align}
\begin{align}
    \label{eq:3DAz2Azy}
    \mathcal{T}^{y}_{\text{3D}}: 
    A_z \mapsto 
    A_{zy} 
    &= 
    \mathcal{T}^y_{\text{3D}}(A_z) \quad\text{(see Eq.~\eqref{eq:3dycol})}\nonumber\\
    &=
    \mathcal{C}^{y}_{\text{3D}}(A_z, \left(A_z\right)^{T y}; \{p\}),
\end{align}
\begin{align}
    \label{eq:3DAzy2Ap}
    \mathcal{T}^{x}_{\text{3D}}: 
    A_{zy} \mapsto 
    A' 
    &= 
    \mathcal{T}^x_{\text{3D}}(A_{zy}) \quad\text{(see Eq.~\eqref{eq:3dxcol})}\nonumber\\
    &=
    \mathcal{C}^{x}_{\text{3D}}(A_{zy}, \left(A_{zy}\right)^{T x}; \{p\}).
\end{align}
\end{subequations}

We claim that the linearization $\mathcal{R}_{\text{3D}}^{(c_x, c_y, c_z)} \bigr|_A$ related to the 3D RG map $\mathcal{T}_{\text{3D}}$ around a tensor $A$ is a composition of the four transformations on the right-hand side of Eq.~\eqref{eq:T3Ddecomp},
\begin{align}
    \label{eq:R3Ddecomp}
    \mathcal{R}_{\text{3D}}^{(c_x, c_y, c_z)} \bigr |_A
    =
    \mathcal{L}^x_{\text{3D}} \bigr|_{A_{zy}}^{c_x}\circ
    \mathcal{L}^y_{\text{3D}} \bigr|_{A_z}^{c_y}\circ
    \mathcal{L}^z_{\text{3D}} \bigr|_{A^f}^{c_z}\circ
    \mathcal{L}^{\cube}_{\mathcal{F}} \bigr|_A.
\end{align}
\begin{subequations}
In the above equation, the linearization $\mathcal{L}^{\cube}_{\mathcal{F}} \bigr|_A$ of the EF map $\mathcal{F}^{\cube}$ in Eq.~\eqref{eq:3DA2Af} is easy to obtain since $\mathcal{F}^{\cube}$ itself is linear,
\begin{align}
    \label{eq:3dlinF}
    \mathcal{L}^{\cube}_{\mathcal{F}} \bigr|_A:
    \delta A \mapsto 
    &\delta A^f = 
    \mathcal{F}^{\cube}(\delta A; \{s\})
    \text{ or pictorially as } \nonumber\\
    &\includegraphics[width=0.35\columnwidth, valign=c]{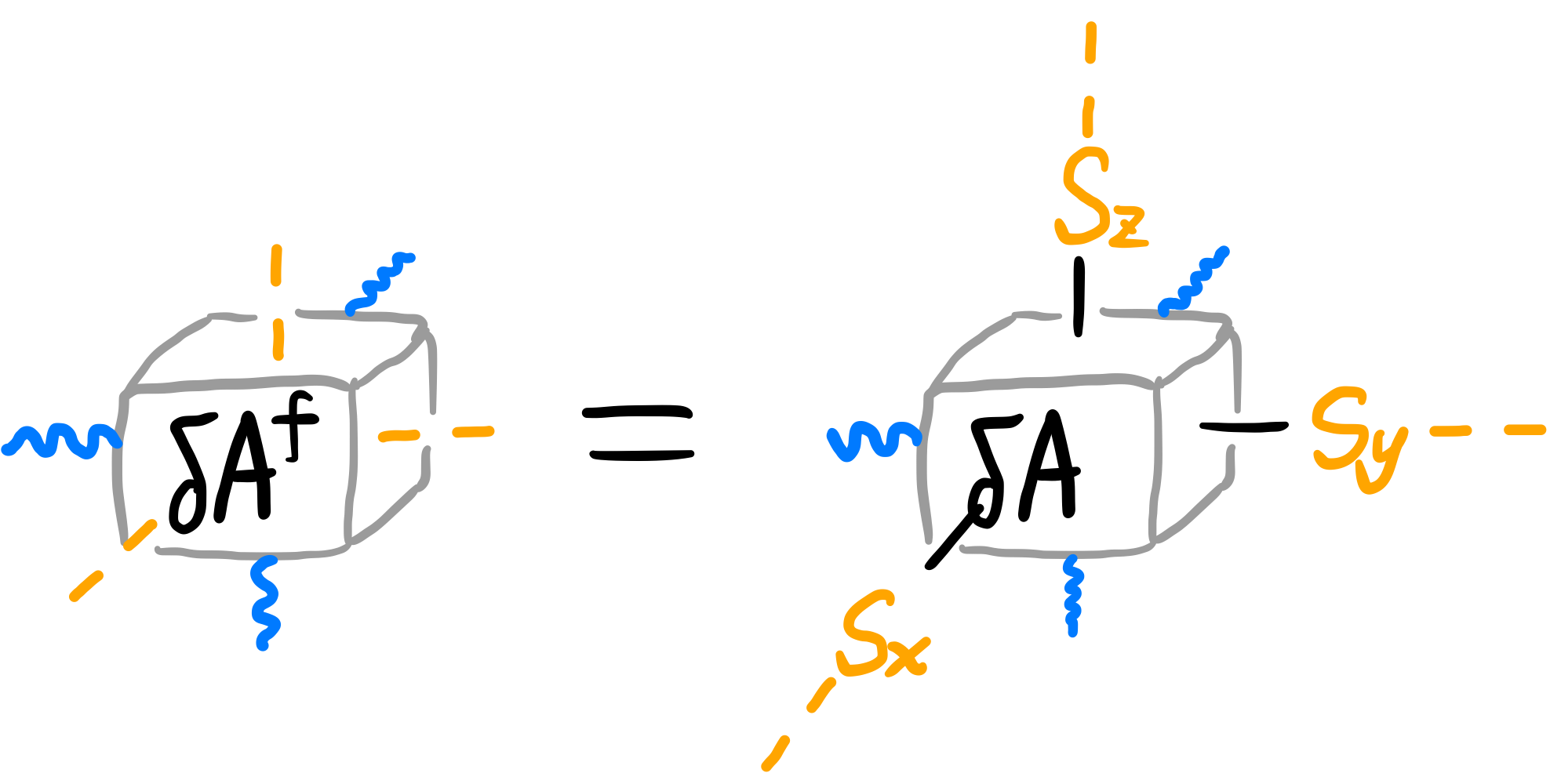}\quad.
\end{align}
The linearized maps related to the HOTRG-like collapses can be written down according to the rules summarized from the 1D toy example in~\autoref{subsec:linRG1D}.
The map $\mathcal{L}^z_{\text{3D}} \bigr|_{A^f}^{c_z}$ related to the $z$ collapse in Eq.~\eqref{eq:3DAf2Az} is
\begin{align}
    \label{eq:3dlinLz}
    &\mathcal{L}^z_{\text{3D}} \bigr|_{A^f}^{c_z}:
    \delta A^f \mapsto 
    \delta A_z = 
    \mathcal{C}^{z}_{\text{3D}}(\delta A^f, \left(A^f\right)^{T z}; \{p\}) \nonumber\\
    &+
    \mathcal{C}^{z}_{\text{3D}}(A^f, \left(\delta A^f\right)^{T z}; \{p\}) \cdot (-1)^{c_z}
    \text{ or pictorially as } \nonumber\\
    &\includegraphics[width=0.83\columnwidth, valign=c]{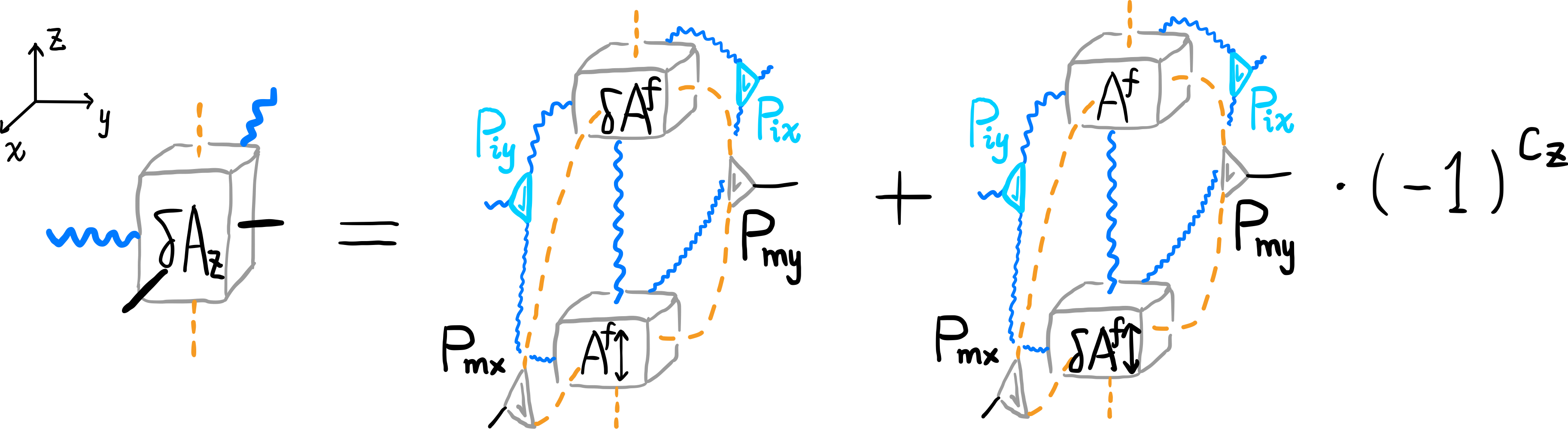}.
\end{align}
The map $\mathcal{L}^y_{\text{3D}} \bigr|_{A_z}^{c_y}$ related to the $y$ collapse in Eq.~\eqref{eq:3DAz2Azy} is
\begin{align}
    \label{eq:3dlinLy}
    &\mathcal{L}^y_{\text{3D}} \bigr|_{A_z}^{c_y}:
    \delta A_z \mapsto 
    \delta A_{zy} = 
    \mathcal{C}^{y}_{\text{3D}}(\delta A_z, \left(A_z\right)^{T y}; \{p\}) \nonumber\\
    &+
    \mathcal{C}^{y}_{\text{3D}}(A_z, \left(\delta A_z\right)^{T y}; \{p\}) \cdot (-1)^{c_y}
    \text{ or pictorially as } \nonumber\\
    &\includegraphics[width=0.83\columnwidth, valign=c]{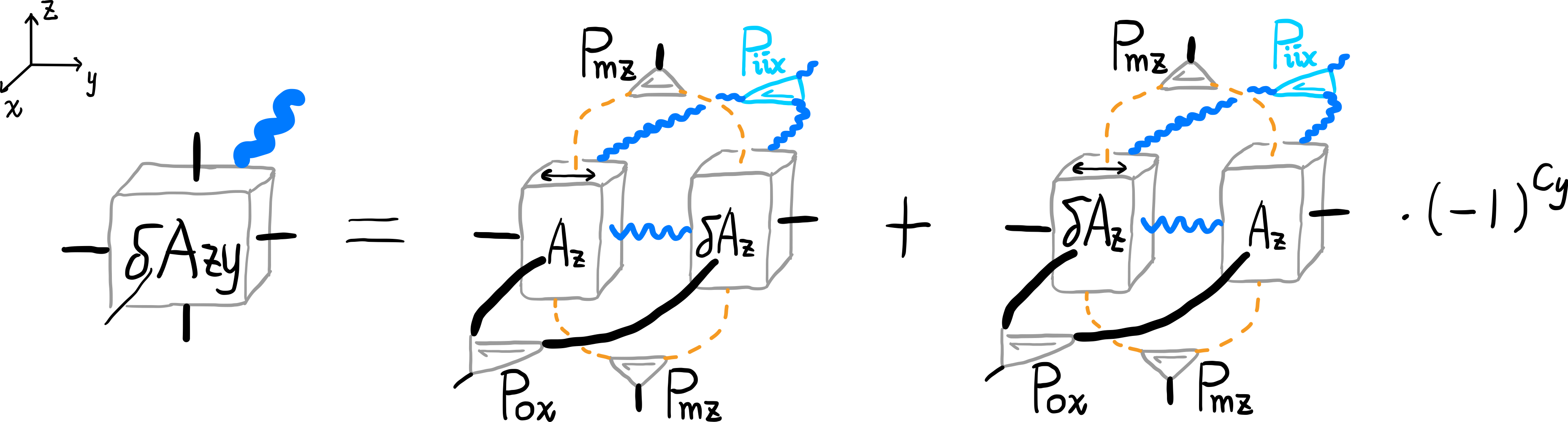}.
\end{align}
The map $\mathcal{L}^x_{\text{3D}} \bigr|_{A_{zy}}^{c_x}$ related to the $x$ collapse in Eq.~\eqref{eq:3DAzy2Ap} is
\begin{align}
    \label{eq:3dlinLx}
    &\mathcal{L}^x_{\text{3D}} \bigr|_{A_{zy}}^{c_x}:
    \delta A_{zy} \mapsto 
    \delta A' = 
    \mathcal{C}^{x}_{\text{3D}}(\delta A_{zy}, \left(A_{zy}\right)^{T x}; \{p\}) \nonumber\\
    &+
    \mathcal{C}^{x}_{\text{3D}}(A_{zy}, \left(\delta A_{zy}\right)^{T x}; \{p\}) \cdot (-1)^{c_x}
    \text{ or pictorially as } \nonumber\\
    &\includegraphics[width=0.83\columnwidth, valign=c]{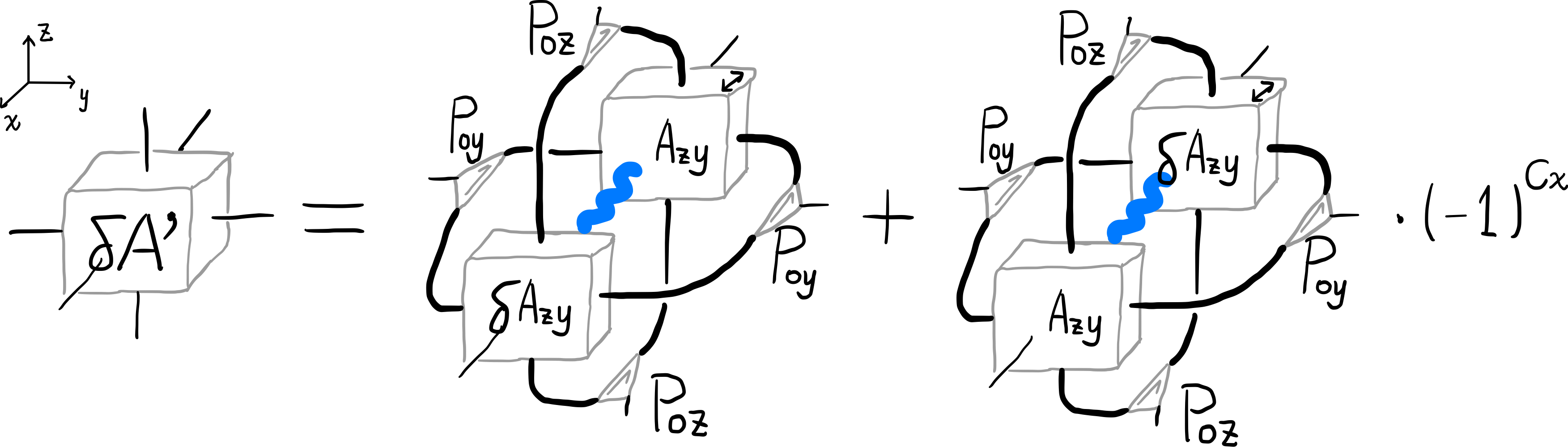}.
\end{align}
\end{subequations}

We omit the proof for the claim in 3D since it follows exactly the same strategy as the proof of the claim in 2D, which will be laid out in Appendix~\ref{app:2dLinProof}. 
The numerical justification of this claim was demonstrated in our previous work~\cite{Lyu:Kawashima:2024} by estimating the scaling dimensions of the 3D Ising model in different lattice-reflection symmetry separately. 
We will provide more numerical evidence in~\autoref{subsec:ising3d}.

\section{Numerical demonstration\label{sec:numdemo}}
In this section, we use the Ising model as a numerical demonstration of the techniques we have developed in the previous sections. 
The emphasis of our numerical demonstration will be extracting scaling dimensions from linearized RG map in separate lattice-reflection sectors, which has not been reported before in TNRG. 
The numerical results in this section can be reproduced using the {\tt python} codes published at Ref.~\cite{Lyu:algo:refl}.

According to Wilsonian RG theory, the scaling dimensions $\{x_i\}$ of a system at criticality can be extracted from the eigenvalues $\{\lambda_i\}$ of the linearized RG map around the fixed point that corresponds to that criticality,
\begin{align}
    \label{eq:rgeig2xGen}
    b^{d - x_i} = \lambda_i,
\end{align}
where $d$ is the spatial dimensionality of the system and $b$ is the rescaling factor of the RG map. 
For the RG maps in 2D and 3D proposed in~\autoref{sec:algo}, the rescaling factor is $b=2$.

In order to use Eq.~\eqref{eq:rgeig2xGen} in the TNRG, three points should be taken care of.
The first point is that the eigenvalues in the linearized RG map that correspond to total derivatives of the field are not universal~\cite{Ebel:2025,Wegner:1976}.
Therefore, in general, a linearized RG map is unable to predict the scaling dimensions of descendant operators in a conformal field theory (CFT).
However, a special way to linearize the RG map by ``freezing'' the isometric tensors and filtering matrices~\cite{Lyu:Xu:Kawashima:2021} is shown to be equivalent to the lattice dilatation operator, and thus is able to predict scaling dimensions of descendant operators~\cite{Ebel:2025}.
The linearized RG map derived in~\autoref{sec:linRG} employs such ``freezing'' method;
hence, its eigenvalue spectrum should be able to predict descendant operators.

The second point is fixing the gauge degrees of freedom. 
A systematical way of gauging fixing has been recently developed in Ref.~\cite{Ebel:2025:Rotation}. 
Here, we use a more preliminary but simple gauge fixing procedure in Ref.~\cite{Lyu:Xu:Kawashima:2021}, which works well in the context of the proposed algorithms. 

The third point is the overall multiplication of the fixed point tensors should be adjusted, since a fixed point tensor $A_*$ generally transforms like $A_* \mapsto c_* A_*$ under the RG map, where $c_*$ is a number.
This overall multiplication constant can be taken care of by rescaling the eigenvalue $\lambda_{\mathds{1}}$ that corresponds to the identity operator $\mathds{1}$ of the theory to be 1. 
For the Ising model, the identity operator has the lowest scaling dimension in spin-flip $\mathbb{Z}_2$ even sector with lattice-reflection charge to be $0$ in all direction. 
Therefore, one can use the following equation to determine the scaling dimensions from the eigenvalues of the linearized RG map,
\begin{align}
    \label{eq:rgeig2x}
    x_i = - \frac{\log\left(\lambda_i / \lambda_{\mathds{1}}\right)}{\log 2},
\end{align}
where $\lambda_{\mathds{1}}$ is the largest eigenvalue of $\mathcal{R}_{\text{2D}}^{(0, 0)}$ in Eq.~\eqref{eq:R2Ddecomp} and $\mathcal{R}_{\text{3D}}^{(0, 0, 0)}$ in Eq.~\eqref{eq:R3Ddecomp} for 2D and 3D respectively, both in the spin-flip $\mathbb{Z}_2$ even sector.

Incorporating the on-site spin-flip $\mathbb{Z}_2$ symmetry is straightforward in the linearized RG map with the technical framework in Ref.~\cite{Singh:Pfeifer:2011} and the \texttt{python} library published in Ref.~\cite{Markus:abelian}. 
Suppose we have a linearized RG map $\mathcal{R}: \delta A \mapsto \delta A'$.
To restrict $\mathcal{R}$ into the spin-flip $\mathbb{Z}_2$ even sector, one restricts the input $\delta A$ into that sector\footnote{
    In the library developed in Ref.~\cite{Markus:abelian}, for an instance \texttt{T} of the type \texttt{TensorZ2}, setting \texttt{T.charge=0} restricts the tensor \texttt{T} into the spin-flip $\mathbb{Z}_2$ sector.
}.

\subsection{The 2D Ising model\label{subsec:num2d}}
\begin{figure*}[tb]
    \subfloat[\label{fig:2Dspinflip0}
    Spin-flip even sector.
    The symbol $\mathds{1}$ denotes the identity operator, $\epsilon$ the energy density operator and $T_{mn}$ the energy-momentum operator.
    ]{
        \includegraphics[width=1.70\columnwidth,
        valign=c]{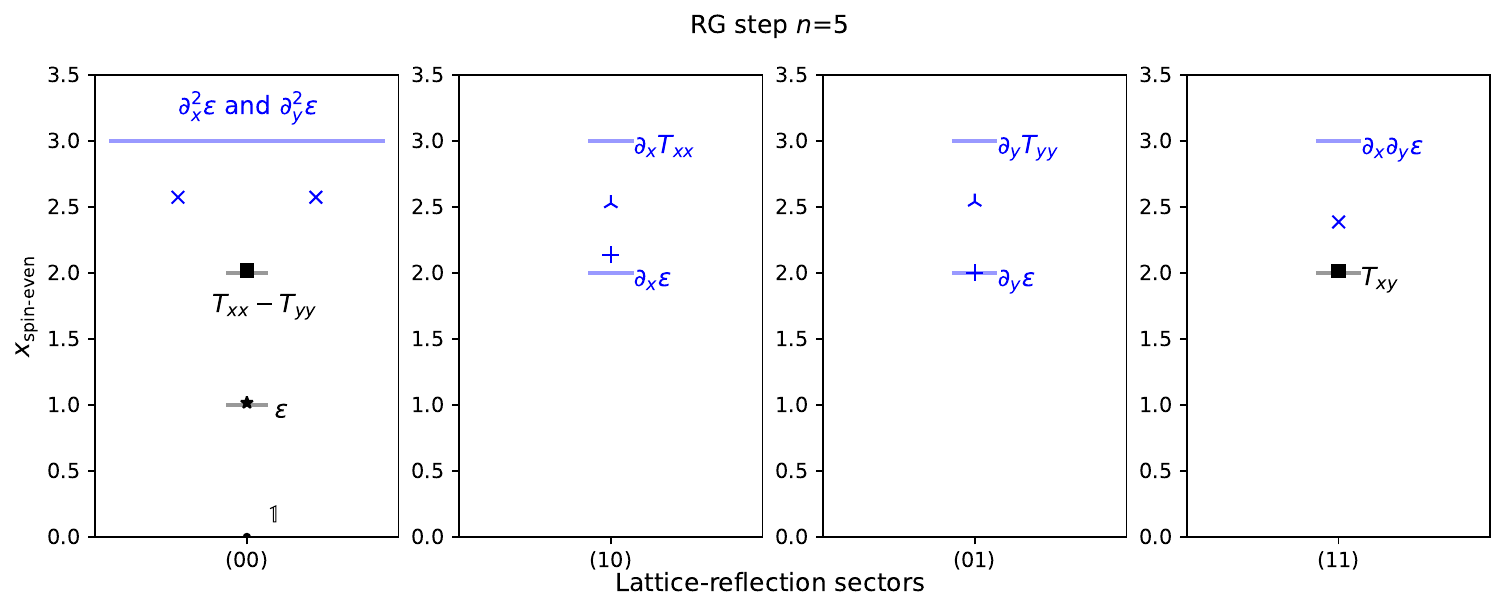}
    }\\
    \subfloat[\label{fig:2Dspinflip1}
    Spin-flip odd sector.
    The symbol $\sigma$ denotes the spin operator.
    ]{
        \includegraphics[width=1.70\columnwidth,
        valign=c]{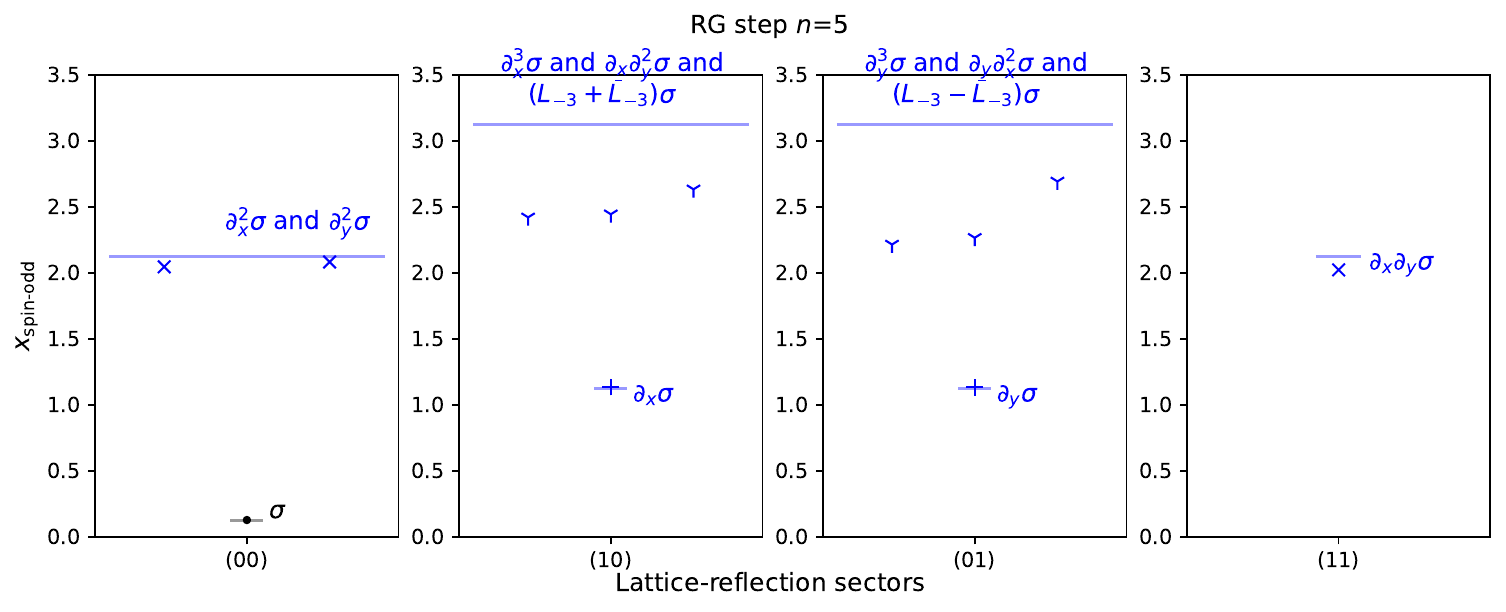}
    }
    \caption{\label{fig:2DscaleD} Scaling dimensions of the 2D Ising model organized by the spin-flip $\mathbb{Z}_2$ and the lattice-reflection symmetry sectors.
        Bond dimensions in the 2D TNRG are $\chi=36, \chi_s = 10$.
    }
\end{figure*}

The scaling dimensions of the 2D Ising university class according to the 2D CFT is usually organized by the conformal spin. 
The 2D CFT~\cite{cftbook:1997:2d} uses complex coordinates $z$ and $\bar{z}$ as the spatial coordinates, instead of the Cartesian coordinates. 
Therefore, some coordinate transformation is needed to figure out how the scaling dimensions distribute in different lattice-reflection sectors, which are parametrized in Cartesian coordinates.
We use the convention in Ref.~\cite{simmons-duffin-cftnotes}.
The 2D infinitesimal conformal maps on the space of functions have the following generators:
\begin{align}
    \label{eq:2dCFTln}
    l_m = z^{m+1} \partial_{z}, \bar{l}_m = \bar{z}^{m+1} \partial_{\bar{z}}, n \in \mathbb{Z}.
\end{align}
In the path-integral language, the conformal charges are given by
\begin{align}
    \label{eq:2dcftCharge}
    Q_V(\Sigma) = 
    \frac{1}{2 \pi i} \oint_{\Sigma} dz v(z)T(z)
    -
    \frac{1}{2 \pi i} \oint_{\Sigma} d\bar{z} \bar{v}(\bar{z}) \bar{T}(\bar{z}),
\end{align}
where $V = v(z) \partial_z + \bar{v}(\bar{z}) \partial_{\bar{z}}$ is a conformal Killing vector and $T, \bar{T}$ are components of the energy-momentum tensor.
Notice that the conformal charges are linear in the Killing vector.
The conformal generators $L_{m}$ and $\bar{L}_{m}$ that obey the Virasoro algebra are charges corresponding to  $l_m$ and $\bar{l}_{m}$ in Eq.~\eqref{eq:2dCFTln}:
\begin{align}
    \label{eq:2dcftviraL}
    L_{m}(\Sigma) = Q_{l_m}(\Sigma),
    \bar{L}_{m}(\Sigma) = Q_{\bar{l}_m}(\Sigma).
\end{align}

We will work out how the conformal generators for $m=-1, -3$ transform under spatial reflection.
Under the $x$ reflection where $x \to -x$ and $y \to y$, the complex coordinates transform like
$z \to - \bar{z}$ and $\bar{z} \to -z$.
For any odd $m$, using Eq.~\eqref{eq:2dCFTln}, generators of conformal maps on the space of functions transform like
$l_m \to - \bar{l}_m$ and  $\bar{l}_m \to - l_m$.
Since $Q_V$ in Eq.~\eqref{eq:2dcftCharge} is linear in $V$, the conformal generators $L_m$ and $\bar{L}_m$ transform in the same way as $l_m$ and $\bar{l}_m$.
Therefore, the linear combination $L_m + \bar{L}_m$ has $x$-reflection charge $c_x=1$, while $L_m - \bar{L}_m$ has $c_x=0$ for $m=-1, -3$.

The $y$ reflection, where $x \to x$ and $y \to -y$, becomes complex conjugation in complex coordinates.
Therefore, the linear combination $L_m + \bar{L}_m$ has $y$-reflection charge $c_y=0$, while $L_m - \bar{L}_m$ has $c_y=1$ for $m=-1, -3$.

Using the spatial reflection charges $c_x$ and $c_y$ for $L_m \pm \bar{L}_m$ for $m=-1,-3$, one can determine $(c_x, c_y)$ of all the scaling dimensions of the 2D Ising that are not larger than 3.125, which are summarized in~\autoref{tab:cftx-reflc-s0} and~\autoref{tab:cftx-reflc-s1}.
Notice that $L_{-1} + \bar{L}_{-1}$ and  $L_{-1} - \bar{L}_{-1}$ correspond to $\partial_x$ and $\partial_y$, respectively.

\begin{table}[tb]
    \caption{
    Scaling dimensions of the 2D Ising in the spin-flip even sector. 
The estimated $x_{\text{RG}}$ is from the linearization at the RG step $n=5$; 
the bond dimensions of the TNRG are $\chi=36, \chi_s=10$.
The symbol $\mathds{1}$ denotes the identity operator, $\epsilon$ the energy-density operator and $T_{mn}$ the energy-momentum operator.
}\label{tab:cftx-reflc-s0}
    \begin{ruledtabular}
        \begin{tabular}{ccccc}
        Operator  & $x_{\text{CFT}}$ & $(c_x, c_y)$   & $x_{\text{RG}}$   \\
        \colrule
        $\mathds{1}$     & 0              & (0, 0)       & 0  \\
        $\epsilon$     & 1              & (0, 0)       & 1.016  \\
        $T_{xx} - T_{yy}$     & 2              & (0, 0)   & 2.021  \\
        $\partial_x \epsilon$     & 2              & (1, 0)   & 2.137  \\
        $\partial_y \epsilon$     & 2              & (0, 1)  & 2.003   \\
        $T_{xy}$     & 2              & (1, 1)       & 2.016  \\
        \colrule
        \multirow{2}{*}{$\partial_x^2 \epsilon, \partial_y^2 \epsilon$ }  & 3  & \multirow{2}{*}{(0, 0)}       & 2.574  \\
             & 3     &        & 2.574  \\
        \colrule
        $\partial_x T_{xx}$     & 3              & (1, 0)  & 2.527  \\
        $\partial_y T_{yy}$     & 3              & (0, 1)  & 2.539   \\
        $\partial_x \partial_y \epsilon$     & 3     & (1, 1)   & 2.387  \\
    \end{tabular}
    \end{ruledtabular}
\end{table} 

\begin{table}[tb]
    \caption{
    Scaling dimensions of the 2D Ising in the spin-flip odd sector.
    For the estimated $x_{\text{RG}}$, the RG step of the linearization and the bond dimensions of the TNRG are the same as those in~\autoref{tab:cftx-reflc-s0}.
    The symbol $\sigma$ denotes the spin operator.
}\label{tab:cftx-reflc-s1}
    \begin{ruledtabular}
    \begin{tabular}{ccccc}
        Operator  & $x_{\text{CFT}}$ & $(c_x, c_y)$   & $x_{\text{RG}}$   \\
        \colrule
        $\sigma$  & 0.125  & (0, 0)       & 0.128  \\
        $\partial_x\sigma$  & 1.125  & (1, 0)   & 1.136  \\
        $\partial_y\sigma$  & 1.125  & (0, 1)   & 1.134  \\
        \colrule
        \multirow{2}{*}{$\partial_x^2 \sigma, \partial_y^2 \sigma$}  & 2.125  & \multirow{2}{*}{(0, 0)}       & 2.047  \\
          & 2.125  &        &  2.084 \\
        \colrule
        $\partial_x \partial_y\sigma$  & 2.125  & (1, 1) & 2.024  \\
        \colrule
        \multirow{3}{*}{\shortstack{$\partial_x^3 \sigma, \partial_x \partial_y^2 \sigma$ and\\ $ (L_{-3} + \bar{L}_{-3}) \sigma$ }}  & 3.125  & \multirow{3}{*}{(1, 0)}       & 2.421  \\
          & 3.125  &        & 2.445  \\
          & 3.125  &        & 2.634  \\
        \colrule
        \multirow{3}{*}{\shortstack{$\partial_y^3 \sigma, \partial_y \partial_x^2 \sigma$ and\\ $ (L_{-3} - \bar{L}_{-3}) \sigma$}}  & 3.125  & \multirow{3}{*}{(0, 1)}       & 2.216  \\
          & 3.125  &        &  2.267 \\
          & 3.125  &        &  2.693 \\
    \end{tabular}
    \end{ruledtabular}
\end{table}

We draw the CFT values of the scaling dimensions using short and long horizontal lines in~\autoref{fig:2DscaleD}. 
A short line indicates no degeneracy in a particular sector, while a long line indicates degeneracy. 
For example, the scaling dimensions of $\partial_x^2 \epsilon$ and $\partial_y^2 \epsilon$ have $2$-fold degeneracy in the lattice-reflection $c_x = c_y=0$ and spin-flip even sector; thus, they cannot be distinguished from each other in the numerical results.

It has already been observed in previous study~\cite{Lyu:Xu:Kawashima:2021,Guo:Wei:2024} that the linearized tensor RG equation in 2D TNRG using the ``freezing'' method, at bond dimensions $\chi=24, 30$, cannot resolve scaling dimensions of the 2D Ising that are larger than $2$. 
In the numerical calculation here, we generate a tensor RG flow at the exact value of the critical temperature $T_c= 2 / \ln( 1 + \sqrt{2} )$ and linearize the RG map at RG step $n=5$. 
The bond dimensions are set to be $\chi = 36, \chi_s=10$ in order to check whether more scaling dimensions larger than 2 can be resolved (see more details about the choice of $\chi$ and $n$ in Appendix~\ref{app:choicechin}).
The scaling operators are identified according to their scaling dimensions in each
symmetry sectors. 
Take lattice-reflection $c_x=c_y=0$ and spin-flip $c=0$ as an example. 
The scaling dimensions are calculated from Eq.~\eqref{eq:rgeig2x} and are organized in ascending order. 
According to~\autoref{tab:cftx-reflc-s0}, the first three are identified as the scaling dimensions of $\mathds{1}, \epsilon$ and $T_{xx} - T_{yy}$, while the fourth and fifth ones are $\partial_x^2 \epsilon, \partial_y^2 \epsilon$ doublet.

The numerical results are plotted as data points with different shapes in~\autoref{fig:2DscaleD}, and they are summarized in~\autoref{tab:cftx-reflc-s0} and~\autoref{tab:cftx-reflc-s1} as $x_{\text{RG}}$. 
The scaling dimensions distribute in different symmetry sectors as the CFT expectation, which provides a numerical justification of linearized RG map in Eqs.~\eqref{eq:R2Ddecomp} to~\eqref{eq:2dlinLx}. 
Meanwhile, however, the numerical results reveal several limitations of the linearization method for calculating scaling dimensions in TNRG developed in Ref.~\cite{Lyu:Xu:Kawashima:2021}. 
First, scaling dimensions larger than $2$ still cannot be resolved even at a bond dimension larger than previous study~\cite{Lyu:Xu:Kawashima:2021, Guo:Wei:2024}. 
As a comparison, the transfer matrix constructed from two copies of the same fixed-point tensor is able to resolve scaling dimensions of the 2D Ising up to 4. 
Secondly, the estimated scaling dimensions of $\partial_x \epsilon$ seems to have larger deviations compared with the values in previous work using smaller bond dimensions~\cite{Lyu:Xu:Kawashima:2021, Guo:Wei:2024}. 
This could be due to the misidentification of the scaling dimensions in previous work where there is no symmetry charge information of the eigenvector of the linearized RG map. 
The estimated scaling dimensions of the second descendants of the spin operator $\sigma$ (they are $\partial_x^2 \sigma, \partial_y^2 \sigma$ and $\partial_x \partial_y \sigma$) could be mistaken as one of the first descendants of the energy density operator $\epsilon$ (they are $\partial_x \epsilon$ and $\partial_y \epsilon$), since they have values closer to $2$.

\subsection{The 3D Ising model\label{subsec:ising3d}}
The efficiency of the 3D EF-enhance TNRG algorithm summarized in~\autoref{subsec:3dalgo} has been demonstrated in a previous study~\cite{Lyu:Kawashima:2024} using the cubic-lattice Ising model. 
The main point of Ref.~\cite{Lyu:Kawashima:2024} is that the 3D scheme is able to produce estimates of scaling dimensions that are stable with respect to the RG step, emphasizing the accuracy of the scaling dimensions of the two primary operator: the spin operator $\sigma$ and the energy-density operator $\epsilon$. 

Here, we dive deeper into the numerical results in Ref.~\cite{Lyu:Kawashima:2024} for bond dimension $\chi=6, \chi_s = \chi_m = 4$ by organizing the estimated scaling dimensions according to the lattice-reflection and spin-flip symmetry sectors. 
Using the first few digits of the bootstrap estimates of the primary operators $\sigma, \epsilon, \epsilon', \sigma_{mn}$, as well as the state-operator correspondence in CFT, one can sort out how these primary operators and the descendants of $\sigma$ and $\epsilon$ with their scaling dimensions less than 4.5 distribute in different lattice-reflection sectors; 
we summarize the CFT prediction in~\autoref{tab:cftx3D-reflc-s0} and~\autoref{tab:cftx3D-reflc-s1}\footnote{
    The lattice-reflection sector $c_x = c_y = c_z = 1$ is ignored, since they contain few scaling dimensions that are smaller than 4.5; in the spin-flip even sector, the lowest level is a triplet $\partial_z T_{xy}$ and its permutations with scaling dimension 4, while in the spin-flip odd sector, the lowest level is a singlet $\partial_x \partial_y \partial_z \sigma$ with scaling dimension $3.518$.
    It does not seem that the numerical results can resolve such high scaling dimensions well.
}. 
The CFT prediction of the scaling dimensions is visualized by plotting them in~\autoref{fig:3DscaleD} using horizontal lines, organized by lattice-reflection and spin-flip sectors. 
Similar to the 2D plot, a short line indicates no degeneracy (a singlet) in a particular sector, while a long line indicates degeneracy (a multiplet). 
For example, in the lattice-reflection $c_x=c_y=c_z=0$ and spin-flip even sector, the third and fourth scaling dimensions are a doublet, corresponding to the energy-momentum operator with its two indices being the same, $T_{xx}$ and $T_{yy}$ (the traceless condition reduces the degrees of freedom to 2).

\begin{table}[tb]
    \caption{
    Scaling dimensions of the 3D Ising in the spin-flip even sector. 
    The subscript $k$ takes values in $x, y, z$.
    The $x_{\text{CFT}}$ of the primary operators are the first few digits of the bootstrap estimates.
    The estimated $x_{\text{RG}}$ is from the linearization at the RG step $n=5$; 
the bond dimensions of the TNRG are $\chi=6, \chi_s=\chi_m=4$.
}\label{tab:cftx3D-reflc-s0}
    \begin{ruledtabular}
    \begin{tabular}{ccccc}
        Operator  & $x_{\text{CFT}}$ & $(c_x, c_y, c_z)$   & $x_{\text{RG}}$   \\
        \colrule
        $\mathds{1}$     & 0       & (0, 0, 0)       & 0  \\
        $\epsilon$      & 1.413         & (0, 0, 0)  & 1.411  \\
        $\partial_x \epsilon$     & 2.413            & (1, 0, 0) & 2.580  \\
        $\partial_y \epsilon$     & 2.413            & (0, 1, 0)  & 2.572  \\
        $\partial_z \epsilon$     & 2.413            & (0, 0, 1) & 2.589  \\
        \colrule
        \multirow{2}{*}{$T_{xx}, T_{yy}$}  & 3  & \multirow{2}{*}{(0, 0, 0)}  & 3.176  \\
             & 3              &        & 3.214  \\
        \colrule
        $T_{xy}$     & 3            & (1, 1, 0)  &  3.233  \\
        $T_{xz}$     & 3            & (1, 0, 1)  & 3.215  \\
        $T_{yz}$     & 3            & (0, 1, 1)  & 3.228   \\
        \colrule
        \multirow{3}{*}{\shortstack{$\partial_x^2 \epsilon, \partial_y^2 \epsilon$ \\ and $\partial_z^2 \epsilon$}}     & 3.413   & \multirow{3}{*}{(0, 0, 0)}   & 3.430  \\
             & 3.413            &        &  3.490 \\
             & 3.413            &        &  3.722 \\
        \colrule
        $\partial_x\partial_y \epsilon$     & 3.413    & (1, 1, 0)     & 4.552  \\
        $\partial_x\partial_z \epsilon$     & 3.413   & (1, 0, 1)   & 4.731   \\
        $\partial_y\partial_z \epsilon$     & 3.413    & (0, 1, 1)   & 5.143   \\
        $\epsilon'$     & 3.830            & (0, 0, 0)   & 4.020  \\
        \colrule
        \multirow{3}{*}{\shortstack{$\partial_x T_{kk}$ and $\partial_k T_{kx}$}}  & 4  & \multirow{3}{*}{(1, 0, 0)}   & 4.052  \\
          & 4  &        & 4.129  \\
          & 4  &        & 4.596 \\
        \colrule
        \multirow{3}{*}{\shortstack{$\partial_y T_{kk}$ and $\partial_k T_{ky}$}}  & 4  & \multirow{3}{*}{(0, 1, 0)}       & 4.146  \\
          & 4  &        & 4.190  \\
          & 4  &        & 5.155  \\
        \colrule
        \multirow{3}{*}{\shortstack{$\partial_z T_{kk}$ and $\partial_k T_{kz}$}}  & 4  & \multirow{3}{*}{(0, 0, 1)}  & 4.102  \\
          & 4  &        & 4.272  \\
          & 4  &        & 5.116  \\
    \end{tabular}
    \end{ruledtabular}
\end{table}

\begin{table}[tb]
    \caption{
    Scaling dimensions of the 3D Ising in the spin-flip odd sector. 
    For the quoted $x_{\text{CFT}}$ of the primary operators and the estimated $x_{\text{RG}}$, see more information in~\autoref{tab:cftx3D-reflc-s0}.
}\label{tab:cftx3D-reflc-s1}
    \begin{ruledtabular}
    \begin{tabular}{ccccc}
        Operator  & $x_{\text{CFT}}$ & $(c_x, c_y, c_z)$   & $x_{\text{RG}}$   \\
        \colrule
        $\sigma$      & 0.518      & (0, 0, 0)   & 0.557  \\
        $\partial_x \sigma$     & 1.518    & (1, 0, 0)   & 1.642  \\
        $\partial_y \sigma$     & 1.518     & (0, 1, 0)  & 1.685  \\
        $\partial_z \sigma$     & 1.518    & (0, 0, 1)  & 1.662   \\
        \colrule
        \multirow{3}{*}{\shortstack{$\partial_x^2 \sigma, \partial_y^2 \sigma$\\ and $\partial_z^2 \sigma$}}  & 2.518  & \multirow{3}{*}{(0, 0, 0)}  & 2.389  \\
             & 2.518              &        & 2.408  \\
             & 2.518              &        & 2.506  \\
        \colrule
        $\partial_x\partial_y \sigma$     & 2.518   & (1, 1, 0)  & 3.631   \\
        $\partial_x\partial_z \sigma$     & 2.518  & (1, 0, 1)  & 3.900   \\
        $\partial_y\partial_z \sigma$     & 2.518    & (0, 1, 1)   & 3.873  \\
        \colrule
        \multirow{3}{*}{$\partial_x\partial_k^2 \sigma$}  & 3.518  & \multirow{3}{*}{(1, 0, 0)}       &  3.884 \\
             & 3.518            &        & 3.928  \\
             & 3.518            &        & 4.031  \\
        \colrule
        \multirow{3}{*}{$\partial_y\partial_k^2 \sigma$}  & 3.518  & \multirow{3}{*}{(0, 1, 0)}       & 3.781  \\
             & 3.518            &        & 4.625  \\
             & 3.518            &        & 4.925  \\
        \colrule
        \multirow{3}{*}{$\partial_z\partial_k^2 \sigma$}  & 3.518  & \multirow{3}{*}{(0, 0, 1)}       & 3.820  \\
             & 3.518            &        & 4.451  \\
             & 3.518            &        & 4.946  \\
        \colrule
         $\partial_x \partial_y \partial_z \sigma$    & 3.518    &  (1, 1, 1)  &   \\
        \colrule
         \multirow{2}{*}{$\sigma_{xx}, \sigma_{yy}$}  & 4.180  & \multirow{2}{*}{(0, 0, 0)}       &  4.305  \\
             & 4.180            &        & 4.402  \\
        \colrule
         $\sigma_{xy}$     & 4.180            & (1, 1, 0)  & 4.316  \\
         $\sigma_{xz}$     & 4.180            & (1, 0, 1)  & 3.940  \\
         $\sigma_{yz}$     & 4.180            & (0, 1, 1)  & 4.326  \\
    \end{tabular}
    \end{ruledtabular}
\end{table}

\begin{figure*}[tb]
    \subfloat[\label{fig:3Dspinflip0}
    Spin-flip even sector.
    The symbol $\mathds{1}$ denotes the identity operator, $\epsilon$ the energy-density operator, $T_{ij}$ the energy-momentum operator and $\epsilon'$ another primary operator in the spin-flip even sector.
    ]{
        \includegraphics[width=1.70\columnwidth,
        valign=c]{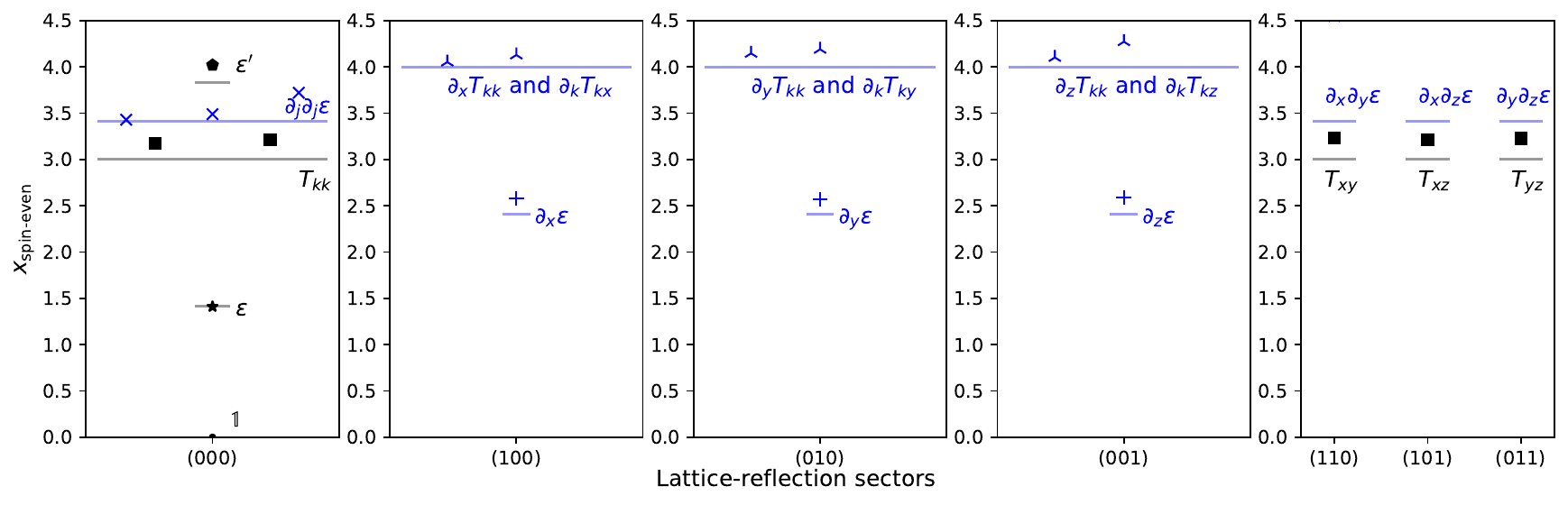}
    }\\
    \subfloat[\label{fig:3Dspinflip1}
    Spin-flip odd sector.
    The symbol $\sigma$ denotes the spin operator and $\sigma_{ij}$ another primary operator in the spin-flip odd sector.
    ]{
        \includegraphics[width=1.70\columnwidth,
        valign=c]{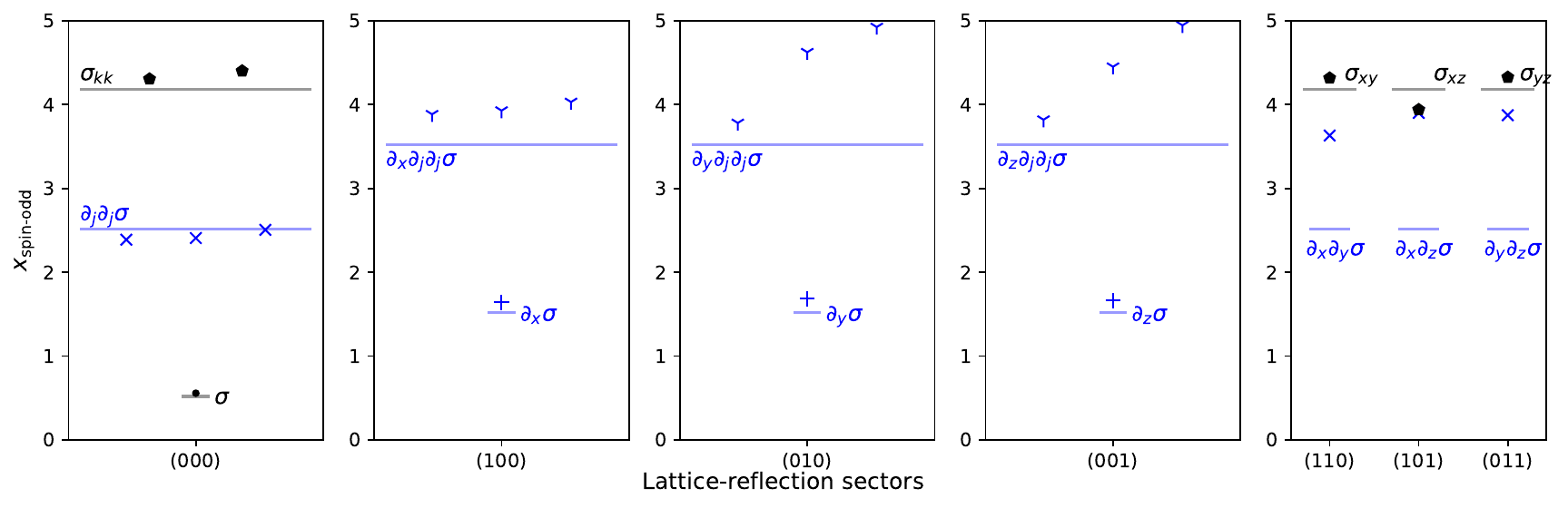}
    }
    \caption{\label{fig:3DscaleD} Scaling dimensions of the 3D Ising model organized by the spin-flip $\mathbb{Z}_2$ and the lattice-reflection symmetry sectors.
        The bond dimensions in the 3D TNRG are $\chi =6, \chi_s = \chi_m=4$.
    }
\end{figure*}

Since the critical temperature of the cubic-lattice Ising model is not known exactly, we use the RG flow of the tensor and a bisection method to estimate $T_c$, the details of which have been expounded in previous study~\cite{Lyu:Xu:Kawashima:2021,Lyu:Kawashima:2024}. 
At bond dimensions $\chi=6, \chi_s=\chi_m = 4$, the estimated critical temperature is $T_c \approx 4.54$, not very close to the known Monte Carlo value 4.5115. 
We generate a tensor RG flow at this estimated $T_c$ and linearize the RG map at the RG step $n=5$ (see more details about the choice of $\chi$ and $n$ in Appendix~\ref{app:choicechin}). 
The scaling operators are identified according to their scaling dimensions in each symmetry sector, the same way as the 2D calculation in~\autoref{subsec:num2d}.

The numerical results are plotted as data points with different shapes in~\autoref{fig:3DscaleD}, and they are summarized in~\autoref{tab:cftx3D-reflc-s0} and~\autoref{tab:cftx3D-reflc-s1} as $x_{\text{RG}}$. 
Except the scaling dimensions of the spin operator $\sigma$ and the energy-density $\epsilon$, most of the other scaling dimensions are far from accurate. 
However, the degeneracy structure of the scaling dimensions $x \leq 3$ agrees with the CFT anticipation in each symmetry sector and among sectors. 
For example, the three-fold degeneracy of the first descendants among the lattice-reflection $(c_x, c_y, c_z) = (1,0,0), (0,1,0)$ and $(0,0,1)$ sectors is quite clear in our numerical estimates for both spin-flip even and odd sectors (they are denoted by $\partial_k \epsilon$ and $\partial_k \sigma$ for $k=x,y,z$). 
Moreover, the five-fold degeneracy of the energy-momentum operator $T_{mn}$ is also clear, as well as how they are distributed into a doublet in $(000)$ sector and three singlets in sectors $(110), (101), (011)$. 
This agreement is a numerical justification of the claimed linearized RG map in Eqs~\eqref{eq:R3Ddecomp} to~\eqref{eq:3dlinLx}.

Just like the 2D numerical demonstration in~\autoref{subsec:num2d}, the current linearization scheme has difficulty resolving higher scaling dimensions of descendant operators. 
Take as an example the second descendants of the energy density, $\partial_i \partial_j \epsilon$, with 6-fold degeneracy and scaling dimension 3.413. 
Only two among the three in the lattice-reflection $(c_x, c_y, c_z) = (0,0,0)$ sector are close to this CFT prediction, with the third one slightly far way. 
The estimates in the lattice-reflection sectors $(110), (101), (011)$ are much worse; 
they do not show degeneracy and are very far away from 3.413, making them unreliable. 
The RG estimates of the second descendants of the spin operator $\partial_i \partial_j \sigma$ show a similar pattern. 
For the first descendants of the energy-momentum operator in the lattice-reflection $(c_x, c_y, c_z) = (1,0,0), (0,1,0)$ and $(0,0,1)$ sectors, the degeneracy structure is wrong: only two out of three in each sector are degenerate. 
For the third descendants of the spin operator, $\partial_i \partial_j \partial_k \sigma$, the approximate 3-fold degeneracy seems to be present in the lattice-reflection $(1,0,0)$ sector, but not in the $(0,1,0)$ and $(0,0,1)$ sectors.

Quite surprisingly, the RG estimates of higher primary fields are not too bad. 
The second-smallest primary operator $\epsilon'$ in the spin-flip even sector has scaling dimension 4.020 estimated from the linearized RG map;
this is about $5\%$ away from the bootstrap estimate 3.830. 
In the spin-flip odd sector, the second-smallest primary operator is a rank-2 traceless symmetric tenor field $\sigma_{ij}$, whose scaling dimension is 4.180 according to the bootstrap estimation. 
The RG estimate that is farthest from this value is that of $\sigma_{xz}$ in the lattice-reflection $(c_x, c_y, c_z) = (1,0,1)$ sector, whose value is 3.940, about $6\%$ away from the bootstrap value.

\section{Summary and discussions}
In this paper, we develop a toolkit for exploiting and imposing lattice-reflection symmetry in the context of coarse graining a tensor-network consisting of real-valued tensors using the TNRG.
The essential contributions of this paper are 
1) clarifying the origin of the definition of the lattice-reflection symmetry for square- and cubic-lattice tensor network,
2) proposing a transposition trick to imposing the lattice-reflection symmetry,
3) working out the implication of the lattice-reflection symmetry in the EF and projective truncations,
4) developing a general technical for proving lattice symmetries by dragging tensors around in a tensor network,
5) designing EF-enhanced TNRG algorithms in both 2D and 3D with the lattice-reflection symmetry exploited and imposed,
and 6) demonstrating how to linearize these two tensor RG equations in separate lattice-reflection symmetry sectors.
We demonstrate the validity of these developments by estimating the scaling dimensions of the square- and cubic-lattice Ising model using the proposed TNRG algorithms in 2D and 3D; 
the scaling dimensions are organized in various lattice-reflection and spin-flip symmetry sectors.

It should be emphasized that \emph{the proposed transposition trick does not introduce any additional assumption or approximation}.
The trick leaves the partition function exactly invariant when the tensor itself satisfies the definition of the lattice-reflection symmetry.
For TNRG schemes without EF, like the TRG~\cite{Levin:Nave:2007} and HOTRG~\cite{HOTRG:2012}, the lattice-reflection symmetry will be preserves up to the machine error in a few RG steps even without the transposition trick.
Using the transposition trick will reproduce the results of these methods if the numerical errors due to machine precision are ignored (recall the 1D example where the transposition trick changes the RG map from $A' = A A$ to $A' = A A^\intercal$, which are the same map when the matrix $A$ is symmetric).

One of the advantages of the transposition trick becomes clear when an EF process (like the Gilt~\cite{Hauru:2018}) is incorporated.
With this trick, it is straightforward to see the implication of the lattice-reflection symmetry in the EF process.
Without exploiting a lattice symmetry in tensor network, a general EF scheme, like the Gilt~\cite{Hauru:2018} or the FET~\cite{Evenbly:2018}, has the tendency to break that lattice symmetry.
The extent to which the lattice symmetry is broken naturally depends on the approximation error in the EF process.
In 3D, the typical EF error for the cubic-lattice Ising model is of order $10^{-3}$ to $10^{-2}$~\cite{Lyu:Kawashima:2024}, while in 2D, the EF error is smaller than $10^{-6}$; hence, exploiting the lattice-reflection symmetry becomes very more relevant in 3D than the 2D TNRG.
Besides preventing the breaking of the symmetry in an 3D EF scheme, exploiting the lattice-reflection symmetry also reduces the number of filtering matrices in the 3D EF dramatically from 24 to 3 (see Eq.~\eqref{eq:cubeEFapprox} and Eq.~\eqref{eq:3dEFapproxSym}); this is very helpful in a practical numerical implementation of a scheme.

There are several 2D TNRG schemes that can preserve the lattice-reflection~\cite{Evenbly:2017:algo} and lattice-rotation symmetry~\cite{Yang:Gu:Wen:2017} by writing down a certain ansatz in their approximations. 
The validity of their ansatz can be justified \emph{a posteriori} if the method works fine with a particular model.
The developments made in the current paper provide a framework for a deeper understanding of the ansatz in those methods.
This deeper understanding can be useful when those lattice-symmetry-preserving methods do not work well on a certain model.
In such a situation, the discussion of this paper might guide the modification or generalization of those 2D lattice-symmetry-preserving schemes.

To conclude, we mention some open problems concerning the lattice symmetry in the TNRG\@.
For the lattice-reflection symmetry, the current paper focuses on tensor networks consisting of real-valued tensors.
It is natural to explore how to generalize the discussion to complex-valued tensors.
Although, it has been proposed in Ref.~\cite{Evenbly:2017:algo} that for complex-valued tensors, the SWAP-gauge matrix $g$ becomes a unitary matrix, our numerical experiments indicate that the SWAP-gauge matrix $g$ trivializes to the identity matrix.

With the general proving techniques developed in this paper, it also seems possible to study the lattice-rotational symmetry in the TNRG\@.
The lattice-rotation symmetry in the 2D TNRG will be studied in a future paper.
In the 3D TNRG, however, the lattice-rotation symmetry is still an open problem; 
one of the difficulties of generalizing the current discussion to the lattice-rotation symmetry in 3D is that the rotation group becomes non-abelian in 3D.
Considering the recent success of the fuzzy-sphere method for studying the (2+1)D quantum criticality~\cite{fuzzysphere:2023}, where the rotational symmetry is fully exploited, it is tempting to conjecture that exploiting the lattice-rotation symmetry might also improve the efficiency of a 3D TNRG map and quality of a critical fixed-point tensor.

\acknowledgments

We thank Glen Evenbly for the tutorial about the TNRG implementation on his website \url{https://www.tensors.net/}, which inspires us to study the lattice symmetry in TNRG. 
We thank Katsuya Akamatsu, Kenji Homma, Feng-Feng Song and Satoshi Morita for useful discussions.
We thank Slava Rychkov, Clement Delcamp for comments and suggestions regarding the overall structure of this paper.
We thank Nikolay Ebel and Rajeev S. Erramilli for clarifying the CFT prediction of the scaling dimensions of the Ising model in 2D and 3D.
X.L. is grateful to the support of the Global Science Graduate Course (GSGC) program of the University of Tokyo.
This work is financially supported by JSPS KAKENHI Grant No. 23K25789.
The computation in this work has been done using the facilities of the Supercomputer Center, the Institute for Solid State Physics, the University of Tokyo.
The manuscript was written and the numerical demonstration in 2D was performed after X.L. moved to Institut des Hautes \'Etudes Scientifiques.

\appendix

\section{Determining the filtering matrices\label{app:findsmat}}
In this appendix, we discuss a scheme for determining the filtering matrices in the EF approximation introduced in~\autoref{subsec:EFformula}.
It suffices to explain in the 2D filtering of the plaquette in Eq.~\eqref{eq:2dEFapprox}; the generalization to the cube filtering in Eq.~\eqref{eq:cubeEFapprox} is straightforward.

\subsection{Optimization of the filtering matrices\label{subsec:opts}}
We use the formalism developed in the full environment truncation (FET)~\cite{Evenbly:2018} to determine the filtering matrices that give a good approximation to the target patch of an entanglement filtering. 

The two tensor-network diagrams in Eq.~\eqref{eq:2dEFapprox} can be seen as two ket vectors, and fidelity $F$ can be used to quantify how good the approximation is. 
To make the pictorial representation look like the familiar Dirac notation, we bend all physical legs to the left,
\begin{align}
    \label{eq:diagram2state}
    \ket{\Psi} =
    \includegraphics[width=0.30\columnwidth, valign=c]{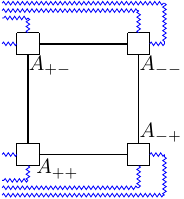},
    \ket{\Phi} = 
    \includegraphics[width=0.30\columnwidth, valign=c]{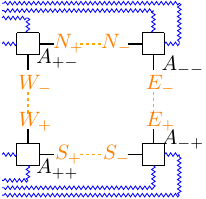},
\end{align}
and the corresponding bra vectors $\bra{\Psi}, \bra{\Phi}$ are obtained by mirror-reflecting these two diagrams. 
The fidelity $F(\Psi, \Phi)$ between the original state $\ket{\Psi}$ and the filtered state $\ket{\Phi}$ is defined to be
\begin{align}
    \label{eq:fidelityDef}
    F(\Psi,\Phi) \texteq{def}
    \frac{\braket{\Phi}{\Psi} \braket{\Psi}{\Phi}}{\braket{\Phi}{\Phi} \braket{\Psi}{\Psi}}.
\end{align}
This fidelity has values between 0 and 1.
We call the fidelity in Eqs.~\eqref{eq:diagram2state} and~\eqref{eq:fidelityDef} the EF fidelity.
The error associated with the EF approximation in Eq.~\eqref{eq:2dEFapprox} is naturally $\epsilon^2_{EF} = 1 -F$.

A good choice of filtering matrices, $N_-, N_+,\ldots$, maximizes the fidelity $F$. 
Since $\braket{\Psi}{\Psi}$ is a constant in this maximization problem, we only need to calculate the other two overlaps:
\begin{subequations}
    \label{eq:twoOverLaps}
\begin{align}
    \braket{\Psi}{\Phi} =
    \includegraphics[width=0.60\columnwidth, valign=c]{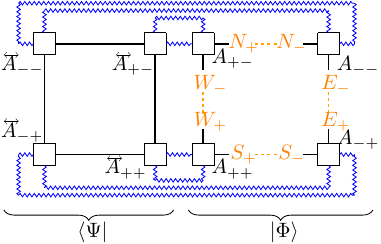}, \\
    \braket{\Phi}{\Phi} =
    \includegraphics[width=0.60\columnwidth, valign=c]{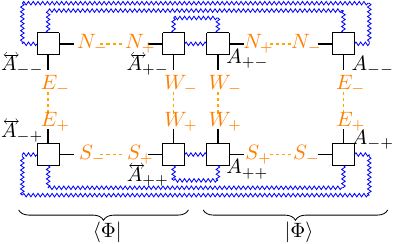},
\end{align}
and $\braket{\Phi}{\Psi}$ is the same as $\braket{\Psi}{\Phi}$ for tensors with real values.
In the above two diagrams, the double-arrow notation on a tensor like $\overleftrightarrow{A}_{++}$ means a transposition of two horizontal legs of the tensor.
\end{subequations}
The idea is to update one filtering matrix at a time. 
For example, when updating the $E_-$ matrix, the fidelity can be seen to have the following form:
\begin{subequations}
\begin{align}
    \label{eq:F4E}
    F \propto
    \includegraphics[scale=0.9, valign=c]{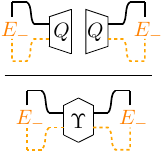}\quad,
\end{align}
where
\begin{align}
    \label{eq:UpsilonP}
    \includegraphics[width=0.85\columnwidth, valign=c]{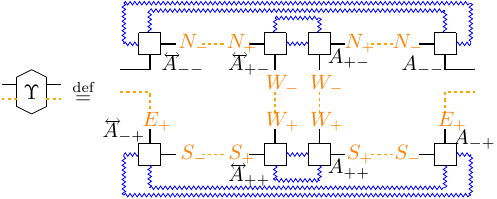},\\
    \includegraphics[width=0.85\columnwidth, valign=c]{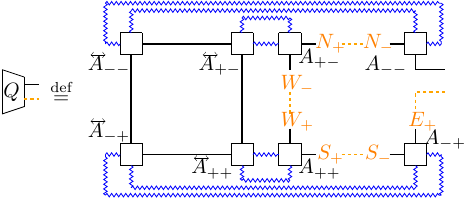}.
\end{align}
\end{subequations}
The optimization problem for $E_-$ is now put into the following form,
\begin{align}
    \label{eq:F4Eopt}
    \max_{x}
    \frac{\bra{x} N \ket{x}}{\bra{x} \Upsilon \ket{x}},
    \text{ with }
    N = \ket{Q} \bra{Q} =
    \includegraphics[scale=0.9, valign=c]{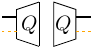}\quad,
\end{align}
and the solution is the largest eigenvector of the following generalized eigenvalue problem~\cite{EVP-Gen:2023}:
\begin{align}
    \label{eq:genEigProb}
    N \ket{x} = \lambda \Upsilon \ket{x}.
\end{align}
For our current problem, $N = \ket{Q} \bra{Q}$ is a rank-one matrix; hence the only eigenvector with a non-vanishing eigenvalue is
\begin{align}
    \label{eq:solx}
    \ket{x} = \Upsilon^{-1} \ket{Q}.
\end{align}
Therefore, the filtering matrix $E_-$ is updated according to
\begin{align}
    \label{eq:updateE}
    \includegraphics[scale=1.0, valign=c]{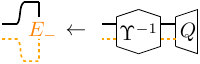}\quad.
\end{align}
After this update, the remaining filtering matrices are updated in the same manner.
This optimization process ends until the fidelity $F$ converges.

\subsection{Initialization of the filtering matrices\label{subsec:inits}}
We propose an efficient initialization scheme for filtering matrices; this scheme is inspired by a previous scheme called Graph independent local truncation (Gilt)~\cite{Hauru:2018}.
In fact, our initialization scheme is equivalent to the Gilt without recursion.
The reformulation we propose here makes the initialization more consistent with the optimization process.

We demonstrate the strategy for the 2D EF in Eq.~\eqref{eq:2dEFapprox}.
Let us show how to initialize the $E_-$ and $E_+$ pair in the EF approximation in Eq.~\eqref{eq:2dEFapprox}.
During their initialization, all other filtering matrices are treated as identity matrix. We combine the unknown $E_-$ and $E_+$ pair as a low-rank matrix $L_E$,
\begin{align}
    \label{eq:LEdef}
    \includegraphics[scale=1.0, valign=c]{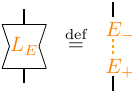}\quad.
\end{align}
The approximation of this initialization is thus
\begin{align}
    \label{eq:initApprox}
    \includegraphics[scale=0.8, valign=c]{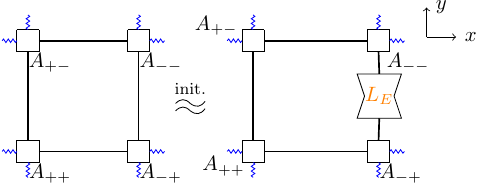}\quad.
\end{align}
Apply the same idea as the optimization procedure. 
The fidelity of this approximation can be written as
\begin{subequations}
\begin{align}
    \label{eq:F4Einit}
    F_0 \propto
    \includegraphics[scale=0.8, valign=c]{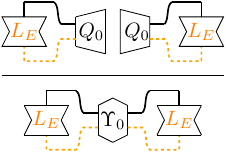}\quad,
\end{align}
where
\begin{align}
    \label{eq:UpsilonP0}
    &\includegraphics[width=0.80\columnwidth, valign=c]{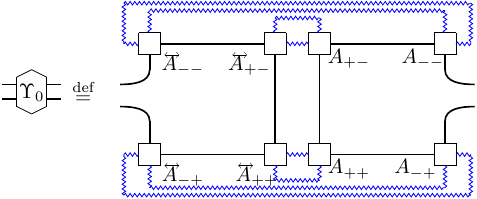}\quad,\\
    &\includegraphics[width=0.80\columnwidth, valign=c]{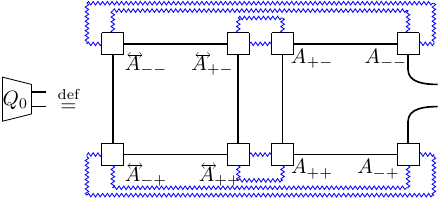}\nonumber\\
    &=
    \includegraphics[scale=0.9, valign=c]{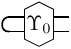}\quad.
\end{align}
\end{subequations}
However, the proposed initial $L_E$ matrix would be the trivial identity matrix if we use the update rule in Eq.~\eqref{eq:solx} since
\begin{align}
    \label{eq:trivialLE}
    \includegraphics[scale=1.0, valign=c]{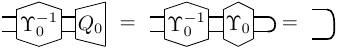}\quad.
\end{align}
This solution gives fidelity $F=1$, but fails to be a low-rank matrix. 
Inspired by the Gilt, we propose the following trick to avoid the full-rank trivial solution. 
The basic observation is that for a CDL tensor, the matrix $\Upsilon_0$ has a lower rank $\chi$ out of its possible full rank  $\chi^2$, and hence the inverse does not exist. 
Instead, we will use its Moore-Penrose inverse $\Upsilon_0^{+}$ that can be constructed using eigenvalue decomposition (EVD) for a real symmetric matrix like $\Upsilon_0$:
\begin{align}
    \label{eq:pinvDef}
    &\Upsilon_0 \texteq{EVD}
    U_0 \Lambda_0 U_0^\intercal, \text{ with }
    \Lambda_0 
    = \diag(\lambda_1, \lambda_2, \ldots, \lambda_{\chi^2}).\nonumber\\
    &\Upsilon_0^{+} \texteq{def}
    U_0 \Lambda_0^{+} U_0^\intercal, \text{ with} \nonumber\\
    &\Lambda_0^{+} \texteq{def}
    \diag(\lambda_1^{-1}, \lambda_2^{-1}, \ldots, \lambda_k^{-1}, 0, 0, \dots, 0),
\end{align}
where in $\Lambda_0^{+}$, the first $k$ eigenvalues become the inverse of the original ones, while the remaining eigenvalues are set to be zero.
The matrix $L_E$ is then obtained by replacing the inverse matrix in Eq.~\eqref{eq:solx} by the Moore-Penrose inverse in Eq.~\eqref{eq:pinvDef},
\begin{align}
    \label{eq:LEsol}
    \includegraphics[scale=1.0, valign=c]{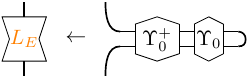}\quad.
\end{align}
The two filtering matrices $E_{+}$ and $E_{-}$ are initialized through a truncated SVD of $L_E$ according to the definition of $L_E$ in Eq.~\eqref{eq:LEdef}.
Other filtering matrices are initialized in the same way.
For a CDL tensor, by choosing $k = \chi$ in the Moore-Penrose inverse and $\chi_s = \sqrt{\chi}$ in the truncated SVD of $L_E$ (the dashed leg of a filtering matrix is filtered and has bond dimension $\chi_s$), the above initialization strategy gives filtering matrices that have fidelity $F=1$ for the EF approximation in Eqs.~\eqref{eq:diagram2state} and~\eqref{eq:fidelityDef}. 
For a given $\chi_s$, a rule of thumb for choosing the number of eigenvalues to be inverted in the Moore-Penrose inverse in Eq.~\eqref{eq:pinvDef} is $k = \chi_s^2$.

\section{Linearization in the 1D toy example as matrices\label{app:2dLinMatrix}}
If we choose the basis of the vector space $\MnnF$ such that the first $N_0 = n(n+1)/2$ basis vectors span the symmetric subspace $\Mnn{0}$ and the second $N_1 = n(n-1)/2$ basis vector span the antisymmetric subspace $\Mnn{1}$, the linear map $\mathcal{R}$ in Eq.~\eqref{eq:linearbkten1D}, due to Theorem~\ref{theo:R1Dbreak}, has the following block-diagonal matrix form:
\begin{align}
    \label{eq:R1Dblockform}
    \mathcal{R} =
\begin{pmatrix}
    \mathcal{R}^{(0)}_s & 0 \\
    0 & \mathcal{R}^{(1)}_s \\
\end{pmatrix},
\end{align}
where $\mathcal{R}^{(0)}_s$ is an $N_0$-by-$N_0$ matrix and $\mathcal{R}^{(1)}_s$ an $N_1$-by-$N_1$ matrix, and the subscript ``s'' means the two matrices has smaller dimension than $\mathcal{R}$.
Therefore, $\mathcal{R}$ is an $N$-by-$N$ matrix with $N = N_0 + N_1 = n^2$.
In this choice for the basis of $\MnnF$, due to Eqs.~\eqref{eq:R1D2RcinMnnc} and~\eqref{eq:linearmapM2Mp}, the constructed linear map $\mathcal{R}^{(c)}, c=0,1$ in Eq.~\eqref{eq:linearbkten1Dc} are $N$-by-$N$ matrices and have the following expression:
\begin{align}
    \label{eq:R1Dcblockform}
    \mathcal{R}^{(0)} =
\begin{pmatrix}
    \mathcal{R}^{(0)}_s & K^{(0)} \\
    0 & 0 \\
\end{pmatrix},
    \mathcal{R}^{(1)} =
\begin{pmatrix}
    0 & 0 \\
    K^{(1)} & \mathcal{R}^{(1)}_s \\
\end{pmatrix},
\end{align}
where $K^{(0)}$ and $K^{(1)}$ are two unknown matrices, and they are nonzero.
However, the presence of non-vanishing $K^{(c)}$ in $\mathcal{R}^{(c)}$ does not change the fact that the nonzero eigenvalues of $\mathcal{R}^{(c)}$ are the same as those of $\mathcal{R}^{(c)}_s$.
Take $\mathcal{R}^{(0)}$ as an example.
Its eigenvalue problem is
\begin{align}
    \label{eq:R1D0eigprob}
    \mathcal{R}^{(0)}
\begin{pmatrix}
    \vec{v}^{(0)}  \\
    \vec{v}^{(1)}  \\
\end{pmatrix} =
\begin{pmatrix}
    \mathcal{R}^{(0)}_s \vec{v}^{(0)} + K^{(0)}  \vec{v}^{(1)}\\
    0  \\
\end{pmatrix} =
\lambda \begin{pmatrix}
    \vec{v}^{(0)}  \\
    \vec{v}^{(1)}  \\
\end{pmatrix},
\end{align}
with $\lambda \neq 0$.
This indicates $\vec{v}^{(1)} = 0$ and the following eigenvalue problem for $\mathcal{R}^{(0)}_s$,
\begin{align}
    \label{eq:R1D0seigprob}
    \mathcal{R}^{(0)}_s \vec{v}^{(0)} = \lambda \vec{v}^{(0)}.
\end{align}
This means the nonzero eigenvalue spectrum of $\mathcal{R}^{(0)}$ is the same as that of $\mathcal{R}^{(0)}_s$.

\section{Proof of the claimed linearization in 2D\label{app:2dLinProof}}
In this subsection, we give a proof that the linearization $\mathcal{R}_{\text{2D}}^{(c_x, c_y)}$ in 2D, which we wrote down from Eq.~\eqref{eq:R2Ddecomp} to Eq.~\eqref{eq:2dlinLx}, is the expected linear map.
Concretely, we mean that the spectrum of $\mathcal{R}_{\text{2D}}^{(c_x, c_y)}$ contains all nonzero eigenvalues of $\mathcal{R}_{\text{2D}}^{\text{ntt}}$ whose eigenvectors live in the subspace $\mathfrak{T}_4^{(c_x, c_y)}$ defined in Eq.~\eqref{eq:T4cxcyDef}, where $\mathcal{R}_{\text{2D}}^{\text{ntt}}$ is the linearization of the corresponding tensor RG equation without the transposition trick.
The key result of this appendix is summarized in Theorem~\ref{theo:keyres}.

The proof in 2D goes in parallel with that of the 1D toy example in~\autoref{subsec:linRG1D}; 
In 1D, we showed two things in the proof of Theorem~\ref{theo:Rc1Dbuild}:
\begin{enumerate}
    \item $\mathcal{R}^{(c)}$ in Eq.~\eqref{eq:linearbkten1Dc} is the same linear map as the linearization $\mathcal{R}$ in Eq.~\eqref{eq:linearbkten1D} for any input $\delta A \in \Mnn{c}$.
    \item $\mathcal{R}^{(c)}$ in Eq.~\eqref{eq:linearbkten1Dc} maps any element in $\mathcal{M}_{nn}(\mathbb{R})$ into an element in $\mathcal{M}_{nn}^{(c)}(\mathbb{R})$.
        To put it differently, the range of the linear map $\mathcal{R}^{(c)}$ is $\Mnn{c}$.
\end{enumerate}

In 2D, however, due to the presence of the filtering matrices $s_x, s_y$ and the isometric tensors $p_x, p_y, p_i$ in the tensor RG equation in Eq.~\eqref{eq:2dEFhotRGeq}, the proof is more involved than that in the 1D toy example.
To simplify the proof, we turn off the EF by setting filtering matrices $s_x, s_y$ to identity and omit the projective truncation for inner legs of the $2 \times 2$ block. 
Later, we will sketch how the proof is modified after turning on the EF and applying projective truncation for inner legs.
We will conduct the 2D proof in three steps by showing that
\begin{enumerate}
    \item Once the isometric tensors $p_x, p_y$ in the tensor RG equation $\mathcal{T}_{\text{2D}}$ in Eq.~\eqref{eq:2dEFhotRGeq} with the transposition trick are determined, the isometric tensors $p_x^{\text{ntt}}, p_y^{\text{ntt}}$ in the tensor RG equation $\mathcal{T}_{\text{ntt}}$ in Eqs.~\eqref{eq:2dC2TyNoSym} to~\eqref{eq:2dTxTyTNoSym} without the transposition trick can be obtained by acting the SWAP-gauge matrices of the input tensor on $p_x$ and $p_y$.
    \item The constructed linear map $\mathcal{R}^{(c_x, c_y)}_{\text{2D}}$ maps any element $\mathfrak{T}_4$ into an element in $\Tsp{4}{(c_x, c_y)}$.
    To put it differently, the range of the linear map $\mathcal{R}^{(c_x, c_y)}_{\text{2D}}$ is $\Tsp{4}{(c_x, c_y)}$.
    \item The constructed linear map $\mathcal{R}^{(c_x, c_y)}_{\text{2D}}$ in Eq.~\eqref{eq:R2Ddecomp} is the same linear map as the corresponding linearization $\mathcal{R}^{\text{ntt}}_{\text{2D}}$ in Eq.~\eqref{eq:2dLinRGNoSym} without the transposition trick\footnote{
    Notice that Eq.~\eqref{eq:2dLinRGNoSym} is the linearization of the usual HOTRG. 
    In this proof, there will be no isometric tensor for the inner legs of the $2 \times 2$ block; this will be more clear later when we explicitly write down $\mathcal{T}_{\text{ntt}}$ and $\mathcal{R}^{\text{ntt}}_{\text{2D}}$. 
}
for any input $\delta A \in \Tsp{4}{(c_x, c_y)} $.
\end{enumerate}

\begin{remark}
    Although the constructed linearization $\mathcal{R}^{(c_x, c_y)}_{\text{2D}}$ in Eq.~\eqref{eq:R2Ddecomp} can be evaluated at any tensor $A$, in the following proof, we will always take $A = A_*$ to be the fixed-point tensor of the corresponding tensor RG map after a proper gauge fixing procedure.
    This means that the SWAP-gauge matrices $g_x, g_y$ are the same before and after the RG map; thus, the subspace $\Tsp{4}{(c_x, c_y)}$ also remains the same\footnote{
        Recall that in the definition of this subspace in Eq.~\eqref{eq:T4cxcyDef}, one needs to specify the SWAP-gauge matrix set $(g_x, g_y)$.
    }.
\end{remark}

    The tensor RG equation $\mathcal{T}_{\text{2D}}$ with the transposition trick in Eq.~\eqref{eq:2dEFhotRGeq}, after the EF is turned off and without the projective truncation for inner legs, becomes
\begin{align}
    \label{eq:app-2dprf-RGeq}
    \includegraphics[scale=1.0, valign=c]{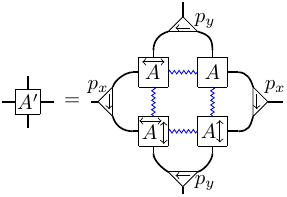}
    = \mathcal{T}_{\text{2D}}(A).
\end{align}
Without the transposition trick and the inner projective truncation, the tensor RG map $\mathcal{T}_{\text{ntt}}$ in Eqs.~\eqref{eq:2dC2TyNoSym} to~\eqref{eq:2dTxTyTNoSym} becomes Eq.~\eqref{eq:bkten}, where the isometric tensors are different from those in Eq.~\eqref{eq:app-2dprf-RGeq},
\begin{align}
    \label{eq:app-2dprf-RGeqntt}
    \includegraphics[scale=1.0, valign=c]{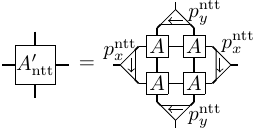}
    = \mathcal{T}_{\text{ntt}}(A).
\end{align}

\begin{theorem}
    \label{theo:2Dp2pntt}
    For the same input tensor $A$ that has the lattice-reflection symmetry defined in Eq.~\eqref{eq:refl2d} with SWAP-gauge matrices $g_x$ and $g_y$, the isometric tensors with and without the transposition trick can be related to each other according to
\begin{align}
    \label{eq:app-2dprf-p2pntt}
    \includegraphics[scale=1.0, valign=c]{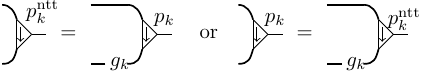},
\end{align}
for $k = x, y$.
\end{theorem}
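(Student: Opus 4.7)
My plan is to prove the theorem by comparing the density matrices that define the two isometric tensors $p_k$ and $p_k^{\text{ntt}}$, and showing that they differ only by a similarity transformation built from the SWAP-gauge matrices $g_x, g_y$ of the input tensor $A$. Focus on $p_x$, since the argument for $p_y$ is identical after relabeling directions. According to Eq.~\eqref{eq:rhopx}, $p_x$ is the top-$\chi$ eigenvector block of a density matrix $\rho_{p_x}$ built by contracting the reflection-symmetric $2\times 2$ block in the second diagram of Eq.~\eqref{eq:tsptrick2D} (with its mirror copy along the outer $x$ legs). The analogous density matrix $\rho_{p_x^{\text{ntt}}}$ for the no-trick scheme is built from the untransposed $2\times 2$ block of the first diagram of Eq.~\eqref{eq:tsptrick2D}.

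The key step is to show that the trick and no-trick density matrices are related by $\rho_{p_x^{\text{ntt}}} = G_x\,\rho_{p_x}\,G_x^{\intercal}$, where $G_x$ is a tensor product of SWAP-gauge matrices acting on the two $x$ legs that $p_x$ fuses. To do this, I start from $\rho_{p_x}$ and undo the transpositions prescribed by the transposition trick one tensor at a time: each transposition of the $x$ legs of a tensor in the lower row is replaced, via the lattice-reflection identity in Eq.~\eqref{eq:refl2d}, by an insertion of $g_y$ on each of its two $y$ legs (together with a flip of the arrow of the adjacent fusion tensor, which is irrelevant for the density-matrix construction because the legs are dummy indices summed inside $\rho$). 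Likewise, transpositions of the $y$ legs of tensors in the left column become insertions of $g_x$ on their $x$ legs. The insertions on internal (summed) legs telescope pairwise using $g_xg_x = g_yg_y = 1$ from Eq.~\eqref{eq:2dSWAPmat2id}, and the only SWAP-gauge matrices that survive sit on the two external $x$ legs of $\rho_{p_x}$—precisely the legs inherited by $p_x$.

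With the similarity relation in hand, eigenvectors of the two density matrices are related by the same multiplication: any eigenvector $v$ of $\rho_{p_x}$ with eigenvalue $\lambda$ gives an eigenvector $G_x v$ of $\rho_{p_x^{\text{ntt}}}$ with the same $\lambda$. Thus the top-$\chi$ eigenspace used to build $p_x^{\text{ntt}}$ equals $G_x$ times the top-$\chi$ eigenspace used to build $p_x$. After the same gauge-fixing convention is applied to both isometric tensors (so that the eigenvectors are chosen consistently within each degenerate subspace), this yields the pictorial identity in Eq.~\eqref{eq:app-2dprf-p2pntt}, namely $p_x^{\text{ntt}}$ is obtained from $p_x$ by dressing the two fused $x$ legs with SWAP-gauge matrices. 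The argument for $p_y$ is completely analogous, with the roles of $x$ and $y$ exchanged.

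The main obstacle is bookkeeping: I need to track carefully which SWAP-gauge matrices are generated at each step of ``dragging the tensors around,'' verify that every internal insertion cancels via Eq.~\eqref{eq:2dSWAPmat2id}, and confirm that the arrow-flip of the fusion tensor in the environment (as in Eq.~\eqref{eq:isomChDir}) does not contaminate the density matrix—this is true only because in $\rho_{p_x}$ the outer $x$ legs are precisely the ones opened up, so the arrow convention there is immaterial. A subsidiary subtlety, resolved as in the footnote to Eq.~\eqref{eq:p2g}, is degenerate eigenvalues of $\rho_{p_x}$: they are handled by the standard trick of simultaneously diagonalizing $\rho_{p_x}$ and a commuting SWAP operator, which fixes a consistent basis on both sides of the similarity transformation.
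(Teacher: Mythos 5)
Your proposal follows essentially the same route as the paper's proof: both relate the density matrices $\rho_x$ and $\rho_x^{\text{ntt}}$ by trading the transpositions for SWAP-gauge insertions via the reflection identity of $A$ in Eq.~\eqref{eq:refl2d}, cancel the insertions on contracted legs using $g_x g_x = g_y g_y = 1$, and transfer the surviving gauge factor to the eigenvector isometries. The only caveat is a bookkeeping slip --- in the paper's convention the lower row is transposed in its $y$ legs (producing $g_x$ on the $x$ legs) and the left column in its $x$ legs (producing $g_y$ on the $y$ legs), whereas you state the reverse; once corrected, your cancellation argument leaves a single $g_x$ on the fused leg belonging to the transposed tensor, matching Eq.~\eqref{eq:app-2dprf-p2pntt}.
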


\begin{proof}
    It suffices to prove the case $k=x$.
    According to the projective truncation reviewed in~\autoref{sec:projtrunc}, the isometric tensor $p_x$ in Eq.~\eqref{eq:app-2dprf-RGeq} is a collection of eigenvectors of the following density matrix
\begin{align}
    \label{eq:app-2dprf-rhopx}
    \includegraphics[scale=1.0, valign=c]{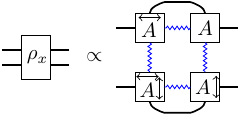}\quad,
\end{align}
while $p_x^{\text{ntt}}$ in Eq.~\eqref{eq:app-2dprf-RGeqntt} is a collection of eigenvectors of
\begin{align}
    \label{eq:app-2dprf-rhopxntt}
    \includegraphics[scale=1.0, valign=c]{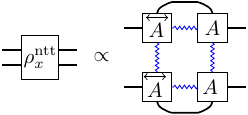}\quad.
\end{align}
Using the lattice-reflection symmetry of the input tensor $A$ in Eq.~\eqref{eq:refl2d}, one can derive the relationship between $\rho_x$ and $\rho_x^{\text{ntt}}$:
\begin{align}
    \label{eq:app-2dprf-rho2rhontt}
    \includegraphics[scale=1.0, valign=c]{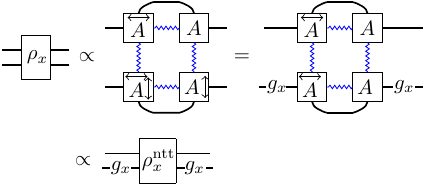}\quad,
\end{align}
where we have used the property of a SWAP-gauge matrix, $g_x g_x = 1$.
Equation~\eqref{eq:app-2dprf-rho2rhontt} indicates that once the isometry $p_x^{\text{ntt}}$ has been determined using the eigenvalue decomposition (EVD) of $\rho_x^{\text{ntt}}$ in Eq.~\eqref{eq:app-2dprf-rhopxntt}, the isometry $p_x$ can be determined by acting $g_x$ on $p_x^{\text{ntt}}$ according to the second equation in Eq.~\eqref{eq:app-2dprf-p2pntt}.
The other equation in Eq.~\eqref{eq:app-2dprf-p2pntt} follows immediately since $g_x g_x = 1$.
\end{proof}

\begin{remark}
    For the same input tensor $A$, if the isometric tensors are related according to Eq.~\eqref{eq:app-2dprf-p2pntt}, the two RG equations in Eq.~\eqref{eq:app-2dprf-RGeq} and~\eqref{eq:app-2dprf-RGeqntt} produce the same coarse-grained tensor $\mathcal{T}_{\text{2D}}(A) = \mathcal{T}_{\text{ntt}}(A)$.
\end{remark}

\begin{remark}
    In numerical calculations, Theorem~\ref{theo:2Dp2pntt} indicates a way to roll back from a scheme with transposition trick to the corresponding scheme without it.
    Specifically, after generating a tensor RG flow of $A, p_x, p_y, g_x, g_y$ using the RG equation in Eq.~\eqref{eq:app-2dprf-RGeq} near a critical fixed point, one can immediately obtain the corresponding $p_k^{\text{ntt}}, k=x,y$ for the RG equation without the transposition trick in Eq.~\eqref{eq:app-2dprf-RGeqntt}.
    With the RG flows of $A, p_x^{\text{ntt}}, p_y^{\text{ntt}}$, one can build the linearization $\mathcal{R}_{\text{2D}}^{\text{ntt}}$ of the tensor RG equation $\mathcal{T}_{\text{ntt}}$ in Eq.~\eqref{eq:app-2dprf-RGeqntt} without the transposition trick according to Ref.~\cite{Lyu:Xu:Kawashima:2021}.
\end{remark}

Next, we write down the explicit expression of the linear map $\mathcal{R}_{\text{2D}}^{(c_x, c_y)}$ constructed in Eqs.~\eqref{eq:R2Ddecomp} to~\eqref{eq:2dlinLx} when the EF is turned off and without the projective truncation for inner legs:
\begin{widetext}
\begin{align}
    \label{eq:app-2dprf-linRG}
    \includegraphics[scale=0.9, valign=c]{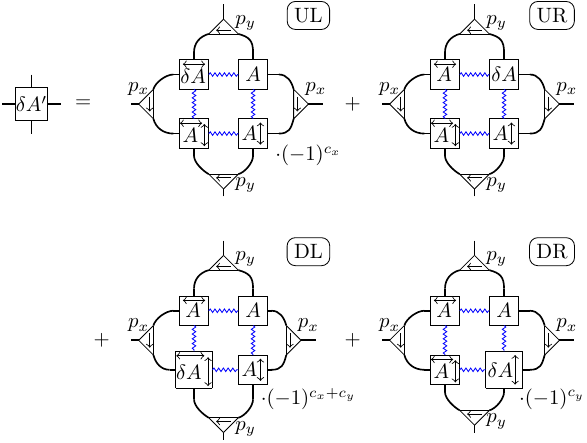}
    =
    \mathcal{R}_{\text{2D}}^{(c_x, c_y)}(\delta A).
\end{align}
\end{widetext}
We label each of the four terms in Eq.~\eqref{eq:app-2dprf-linRG} using two letters. 
The first letter, U or D, indicates ``up'' or ``down'', while the second letter, R or L, indicates ``right'' or ``left''. 
It is easy to see that the two terms, UR and DR, come from the first term in Eq.~\eqref{eq:2dlinLx} when Eqs.~\eqref{eq:2dlinF} and~\eqref{eq:2dlinLy} are plugged into it, while the other two terms, UL and DL, come from the second term in Eq.~\eqref{eq:2dlinLx}.

\begin{remark}
    When $c_x=c_y=0$, the constructed linear map $\mathcal{R}_{\text{2D}}^{(c_x, c_y)}$ is the linearization of the tensor RG map $\mathcal{T}_{\text{2D}}$ with the transposition trick in Eq.~\eqref{eq:app-2dprf-RGeq}.
    The linearization of $\mathcal{T}_{\text{2D}}$ can only have eigenvectors that live in the space $\Tsp{4}{(0, 0)}$ because the lattice-reflection symmetry is imposed in this RG map and the range of $\mathcal{T}_{\text{2D}}$ is $\Tsp{4}{(0,0)}$.
\end{remark}

\begin{remark}
    At this point, other cases where $c_x \neq 0$ or $c_y \neq 0$ of $\mathcal{R}_{\text{2D}}^{(c_x, c_y)}$ can be treated as linearization auxiliary to the $c_x=c_y=0$ case.
    As will be proven below, an auxiliary map $\mathcal{R}_{\text{2D}}^{(c_x, c_y)}$ contains eigenvectors in the subspace $\Tsp{4}{(c_x, c_y)}$.
\end{remark}

\begin{remark}
    The domain of definition of $\mathcal{R}_{\text{2D}}^{(c_x, c_y)}$ is the linear space of all $4$-leg tensors $\mathfrak{T}_4$.
\end{remark}

\begin{theorem}
    \label{theo:2DRrange}
    When evaluated at a fixed-point tensor $A=A_*$ that has reflection symmetry as is defined in Eq.~\eqref{eq:refl2d}, the range of the linear map $\mathcal{R}_{\text{2D}}^{(c_x, c_y)}$ is $\Tsp{4}{(c_x, c_y)}$ defined in Eq.~\eqref{eq:T4cxcyDef} with the SWAP-gauge matrices for $A_*$:
    \begin{subequations}
    \begin{align}
        \label{eq:app-2dprf-range2DR}
        \mathcal{R}_{\text{2D}}^{(c_x, c_y)}:
        \mathfrak{T}_4 \to \Tsp{4}{(c_x, c_y)},
        \text{ or}\\
        \forall \delta A \in \mathfrak{T}_4,
        \mathcal{R}_{\text{2D}}^{(c_x, c_y)}(\delta A) \in \Tsp{4}{(c_x, c_y)}.
    \end{align}
    \end{subequations}
\end{theorem}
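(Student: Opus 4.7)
The plan is to verify the defining condition of $\Tsp{4}{(c_x, c_y)}$ directly on the four-term expression of Eq.~\eqref{eq:app-2dprf-linRG}. Label the four terms $D_{++}, D_{-+}, D_{+-}, D_{--}$ according to the position of the $\delta A$ insertion in the reflection-symmetric $2\times 2$ block, so that the output reads
\begin{equation*}
\mathcal{R}_{\text{2D}}^{(c_x, c_y)}(\delta A) = D_{++} + (-1)^{c_x} D_{-+} + (-1)^{c_y} D_{+-} + (-1)^{c_x+c_y} D_{--}.
\end{equation*}
Membership in $\Tsp{4}{(c_x, c_y)}$ amounts to showing that applying the $y$-plane reflection (transposing the two output $x$-legs and dressing the two output $y$-legs with $g'_y$) returns this expression multiplied by $(-1)^{c_x}$, together with the analogous statement for the $x$-plane reflection.

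The two reflections are treated identically, so I would carry out only the $y$-plane case. The main step is to establish the diagrammatic identities that the $y$-plane reflection sends $D_{++}$ to $D_{-+}$, $D_{-+}$ to $D_{++}$, $D_{+-}$ to $D_{--}$, and $D_{--}$ to $D_{+-}$. For the first identity, I would drag the reflection operator from the external legs of the coarse-grained output into the $2 \times 2$ block, first through the outer isometric tensors $p_x$ by means of Eq.~\eqref{eq:isomSym} together with the identification of the renormalized SWAP-gauge matrices $g'_x, g'_y$ in Eq.~\eqref{eq:p2g}, and then through the three frozen copies of $A_*$ by means of Eq.~\eqref{eq:refl2d}. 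Each dragging step deposits a pair of SWAP-gauge matrices on some internal bond of the block, which cancels locally using $g_k g_k = 1$ from Eq.~\eqref{eq:2dSWAPmat2id} and the analogous $g'_k g'_k = 1$ inherited from the construction of $g'_k$ as a diagonal matrix with $\pm 1$ entries. What survives is the action of the reflection on the $\delta A$ insertion alone, which converts the diagram $D_{++}$ into the diagram $D_{-+}$. Collecting the four identities with their original signs yields the reflected output $(-1)^{c_x} \mathcal{R}_{\text{2D}}^{(c_x, c_y)}(\delta A)$, as required. A parallel argument for the $x$-plane reflection produces the factor $(-1)^{c_y}$, and together these put the output in $\Tsp{4}{(c_x,c_y)}$.

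The main obstacle is the SWAP-gauge bookkeeping during the dragging. Two distinct families of matrices appear simultaneously, the bare ones $g_x, g_y$ coming from $A_*$ and the renormalized ones $g'_x, g'_y$ coming from the isometric tensors, and they have to be arranged in pairs bond by bond so that the $\mathbb{Z}_2$ cancellations are all local. This local pairing is precisely the property that the transposition trick was designed to enforce, as already exploited in~\autoref{subsec:symFM} and~\autoref{subsec:symIT}, so no new technique is required; the "dragging tensors around" manipulation developed throughout~\autoref{sec:lattsym} suffices. Once the bond-by-bond cancellations are verified, the combinatorial reshuffling of the four $D$-terms and the collection of the phases $(-1)^{c_x}$ and $(-1)^{c_y}$ is immediate.
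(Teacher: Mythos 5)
Your skeleton agrees with the paper's proof: decompose the output into the four insertion terms, pair them under the reflection ($D_{++}\leftrightarrow D_{-+}$, $D_{+-}\leftrightarrow D_{--}$ for the transposition of the output $x$ legs), and let the $(-1)^{c_x}$ prefactors built into the definition of $\mathcal{R}_{\text{2D}}^{(c_x,c_y)}$ produce the overall $(-1)^{c_x}$. The membership condition you state is also the right one. Where your plan goes astray is in the mechanism for the key diagrammatic identity. You propose to drag the reflection ``through the three frozen copies of $A_*$ by means of Eq.~\eqref{eq:refl2d}'' and then to cancel interleaved pairs of $g_k$ and $g'_k$ bond by bond. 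Neither step occurs in the actual argument, and the first one signals a misreading of what the transposition trick buys. In the reflection-symmetric block the background copies are $A$, $A^{Tx}$, $A^{Ty}$, $A^{TxTy}$ at the four positions; under the left--right reflection they permute \emph{exactly} into one another ($(A^{Tx})^{Tx}=A$, $(A^{Ty})^{Tx}=A^{TxTy}$, etc.), so no gauge matrix is ever deposited on an internal bond and Eq.~\eqref{eq:refl2d} is never invoked. The only nontrivial input is the SWAP eigenproperty of the isometries whose two incoming legs are actually exchanged by this reflection --- and for the transposition of the output $x$ legs that is $p_y$ (Eq.~\eqref{eq:app-2dprf-psym}), not $p_x$ as you write: the two $p_x$ tensors merely trade places, while each $p_y$ acquires a SWAP on its inputs that is converted into $g'_y$ on the corresponding output $y$ leg.

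This distinction is not cosmetic. Because the background never calls on the symmetry of $A_*$, the range statement holds for an \emph{arbitrary} input tensor, which is precisely why the paper can claim the lattice-reflection symmetry is \emph{imposed} rather than merely preserved; the hypothesis that $A_*$ satisfies Eq.~\eqref{eq:refl2d} is only used to identify the renormalized gauge matrices with the bare ones, $g'_k=g_k$, so that the target subspace $\Tsp{4}{(c_x,c_y)}$ is the one attached to $A_*$. Your route, if pushed through consistently, would either go in circles (applying Eq.~\eqref{eq:refl2d} an even number of times per copy) or generate spurious $g_y$ insertions that you would then have to argue away; the ``two families of gauge matrices arranged in pairs bond by bond'' obstacle you anticipate is a non-issue once you notice the exact permutation of the background. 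I recommend redoing the single identity ``reflected $D_{++}$ equals $D_{-+}$ dressed with $g'_y$'' using only the global reflection of the diagram plus Eq.~\eqref{eq:app-2dprf-psym} for the two $p_y$ tensors; the rest of your combinatorics then goes through unchanged.
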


\begin{proof}
    Denote the SWAP-gauge matrices of $A=A_*$ as $g_x, g_y$, which indicates the following property of $p_x, p_y$ according to our discussion in~\autoref{subsec:symIT} and footnote~\ref{fn:EVDpSWAP},
\begin{align}
    \label{eq:app-2dprf-psym}
    \includegraphics[scale=0.8, valign=c]{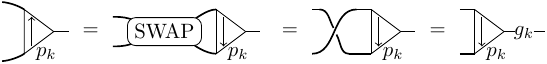},
\end{align}
for $k=x,y$.
It suffices to show the symmetry property of $\delta A'$ in Eq.~\eqref{eq:app-2dprf-linRG} for transposition of its two legs in $x$ direction, since the proof in another direction is exactly the same.
When the two $x$ legs of $\delta A$ is transposed, the sum of the two terms ``UL'' and ``UR'' picks up an overall factor $(-1)^{c_x}$, with the SWAP-gauge $g_y$ acting on their $y$ legs:
\begin{widetext}
\begin{align}
    \label{eq:app-2dprf-linRGrange}
    \includegraphics[scale=1.0, valign=c]{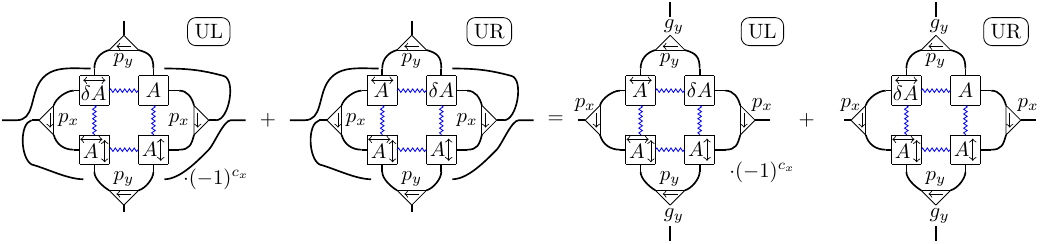},
\end{align}
\end{widetext}
where we have used the symmetry property of $p_y$ in Eq.~\eqref{eq:app-2dprf-psym}.
The same thing happens for the sum of the other two terms ``DL'' and ``DR''.
Therefore, we have proven that
\begin{align}
    \label{eq:app-2dprf-dAsym}
    \includegraphics[scale=1.0, valign=c]{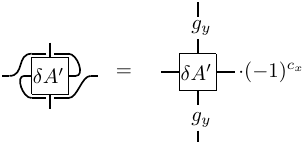}.
\end{align}
\end{proof}

\begin{corollary}
    \label{corol:specRrange}
    Like the 1D toy example, we denote the collection of all nonzero eigenvalues whose eigenvectors live in $\Tsp{4}{(c_x, c_y)}$ as $\spec^{(c_x,c_y)}$.
    Theorem~\ref{theo:2DRrange} indicates that
    \begin{align}
        \label{eq:app-2dprf-specRrange}
        \spec{\mathcal{R}_{\text{2D}}^{(c_x, c_y)}}
        =
        \spec^{(c_x, c_y)}{\mathcal{R}_{\text{2D}}^{(c_x, c_y)}}.
    \end{align}
\end{corollary}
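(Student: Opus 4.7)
The plan is to derive the corollary as an immediate consequence of Theorem~\ref{theo:2DRrange}, mirroring the second half of the proof of Theorem~\ref{theo:Rc1Dbuild} in the 1D toy example. The one-sided inclusion $\spec^{(c_x, c_y)}{\mathcal{R}_{\text{2D}}^{(c_x, c_y)}} \subseteq \spec{\mathcal{R}_{\text{2D}}^{(c_x, c_y)}}$ holds by definition, since $\spec^{(c_x, c_y)}$ is just the restriction of $\spec$ to those nonzero eigenvalues whose eigenvectors happen to lie in $\Tsp{4}{(c_x, c_y)}$. So the real content is the reverse inclusion.

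To establish the reverse inclusion, I would pick an arbitrary $\lambda \in \spec{\mathcal{R}_{\text{2D}}^{(c_x, c_y)}}$ with associated eigenvector $\delta A_\lambda \in \mathfrak{T}_4$, so that $\mathcal{R}_{\text{2D}}^{(c_x, c_y)}(\delta A_\lambda) = \lambda \cdot \delta A_\lambda$ with $\lambda \neq 0$ by convention. By Theorem~\ref{theo:2DRrange}, the left-hand side lies in the subspace $\Tsp{4}{(c_x, c_y)}$. Thus $\lambda \cdot \delta A_\lambda \in \Tsp{4}{(c_x, c_y)}$, and since $\Tsp{4}{(c_x, c_y)}$ is a linear subspace and $\lambda \neq 0$, one can divide by $\lambda$ to conclude $\delta A_\lambda \in \Tsp{4}{(c_x, c_y)}$. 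Hence $\lambda$ belongs to $\spec^{(c_x, c_y)}{\mathcal{R}_{\text{2D}}^{(c_x, c_y)}}$, proving the reverse inclusion.

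I do not anticipate any real obstacle here: Theorem~\ref{theo:2DRrange} does all the heavy lifting by pinning down the range of the constructed map, and the corollary is a purely formal consequence of that range constraint together with the definition of $\spec$ as the collection of nonzero eigenvalues. The only step that deserves an explicit sentence is the remark that multiplicity is preserved---if $\lambda$ has several linearly independent eigenvectors in $\mathfrak{T}_4$, each of them individually must land in $\Tsp{4}{(c_x, c_y)}$ by the same argument, so no eigenvectors are lost when we restrict attention to the symmetry sector. The parallel structure to the 1D case (Eqs.~\eqref{eq:R1D2RcinMnnc}--\eqref{eq:linearmapM2Mp}) makes the proof essentially a one-paragraph remark, which is consistent with the fact that the statement is labeled a corollary rather than a theorem.
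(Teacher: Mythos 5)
Your argument is correct and is exactly the one the paper intends: the corollary follows from Theorem~\ref{theo:2DRrange} by the same one-line reasoning spelled out in the 1D case (the proof of Theorem~\ref{theo:Rc1Dbuild}), namely that for $\lambda \neq 0$ the relation $\lambda\,\delta A_\lambda = \mathcal{R}_{\text{2D}}^{(c_x,c_y)}(\delta A_\lambda) \in \Tsp{4}{(c_x,c_y)}$ forces $\delta A_\lambda \in \Tsp{4}{(c_x,c_y)}$. No gaps; your remark about the trivial inclusion and about multiplicity is consistent with the paper's treatment.
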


Finally, let us write down the linearization without the transposition trick.
According to Ref.~\cite{Lyu:Xu:Kawashima:2021}, the linearization $\mathcal{R}_{\text{2D}}^{\text{ntt}}$ of the tensor RG map $\mathcal{T}_{\text{ntt}}$ in Eq.~\eqref{eq:app-2dprf-RGeqntt} is
\begin{widetext}
\begin{align}
    \label{eq:app-2dprf-linRGntt}
    \includegraphics[scale=1.0, valign=c]{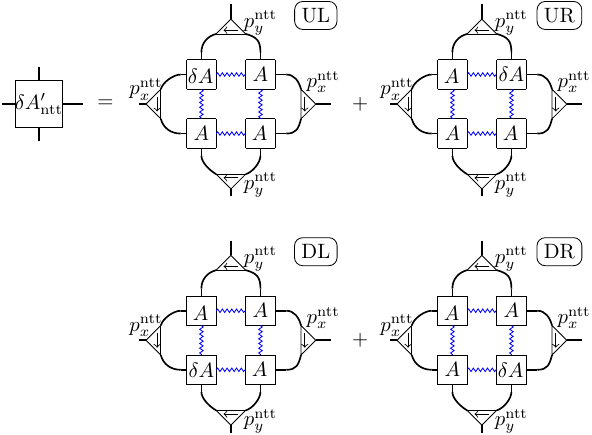}
    =
    \mathcal{R}_{\text{2D}}^{\text{ntt}}(\delta A).
\end{align}
\end{widetext}

\begin{theorem}
    \label{theo:2DR-Rntt}
    When evaluated at the same tensor $A$ that has the lattice-reflection symmetry defined in Eq.~\eqref{eq:refl2d} with SWAP-gauge matrices $g_x$ and $g_y$, the constructed linear map $\mathcal{R}_{\text{2D}}^{(c_x, c_y)}$ in Eq.~\eqref{eq:app-2dprf-linRG} is the same as the linearization $\mathcal{R}_{\text{2D}}^{\text{ntt}}$ in Eq.~\eqref{eq:app-2dprf-linRGntt} in the subspace $\Tsp{4}{(c_x, c_y)}$ defined in Eq.~\eqref{eq:T4cxcyDef}:
    \begin{align}
        \label{eq:app-linMapinT4cxcy}
        \forall \delta A \in \Tsp{4}{(c_x, c_y)},
        \mathcal{R}_{\text{2D}}^{(c_x, c_y)}(\delta A)
        =
        \mathcal{R}_{\text{2D}}^{\text{ntt}}(\delta A).
    \end{align}
\end{theorem}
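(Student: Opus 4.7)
The plan is to show that, for every $\delta A \in \mathfrak{T}_4^{(c_x, c_y)}$, each of the four pictorial terms comprising $\mathcal{R}_{\text{2D}}^{(c_x, c_y)}(\delta A)$ in Eq.~\eqref{eq:app-2dprf-linRG} matches, one by one, the corresponding term of $\mathcal{R}_{\text{2D}}^{\text{ntt}}(\delta A)$ in Eq.~\eqref{eq:app-2dprf-linRGntt}. The two key ingredients are (i) the defining symmetry of $\mathfrak{T}_4^{(c_x, c_y)}$ in Eq.~\eqref{eq:T4cxcyDef}, which trades a transposed $\delta A$ for an untransposed one at the cost of a phase $(-1)^{c_k}$ and a pair of SWAP-gauge matrices $g_{k'}$ on the orthogonal legs, and (ii) Theorem~\ref{theo:2Dp2pntt}, which absorbs an isolated $g_k$ sitting on a leg of $p_k$ by converting it into $p_k^{\text{ntt}}$.

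First I would focus on a single representative diagram, say the UL term, where $\delta A$ sits at position $(-+)$ of the reflection-symmetric block with its $x$ legs transposed and carrying the explicit phase $(-1)^{c_x}$. Applying Eq.~\eqref{eq:T4cxcyDef} converts $(\delta A)^{T x}$ back to $\delta A$, pulling out a compensating factor $(-1)^{c_x}$---which cancels the explicit phase---while leaving two matrices $g_y$ on the $y$ legs of $\delta A$. The neighboring $A_*$ tensors in the same row sit in transposed positions imposed by the transposition trick; invoking the lattice-reflection symmetry of $A_*$ (Eq.~\eqref{eq:refl2d}) untransposes them at the cost of further $g_y$ matrices on their $y$ legs and $g_x$ matrices on their $x$ legs. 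By tracking the diagram, one can verify that each SWAP-gauge matrix so produced sits at a boundary leg of an isometry $p_x$ or $p_y$, at which point Theorem~\ref{theo:2Dp2pntt} consumes it by converting $p_k$ into $p_k^{\text{ntt}}$. The resulting diagram has $\delta A$ untransposed at the left column, $A_*$'s in the other three positions of the (now non-transposed) block, and isometries $p_x^{\text{ntt}}, p_y^{\text{ntt}}$---exactly the left-column term of Eq.~\eqref{eq:app-2dprf-linRGntt}.

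The remaining three terms (UR, DL, DR) follow from the same bookkeeping, and any leftover SWAP-gauge matrices cancel pairwise via the involutive property $g_x g_x = g_y g_y = 1$. Summing the four resulting diagrams reproduces $\mathcal{R}_{\text{2D}}^{\text{ntt}}(\delta A)$, which establishes Eq.~\eqref{eq:app-linMapinT4cxcy}. Once this is combined with the range statement of Theorem~\ref{theo:2DRrange} and its spectral corollary (Corollary~\ref{corol:specRrange}), the proof mirrors that of the 1D Theorem~\ref{theo:Rc1Dbuild}: $\mathcal{R}_{\text{2D}}^{(c_x, c_y)}$ coincides with $\mathcal{R}_{\text{2D}}^{\text{ntt}}$ on $\mathfrak{T}_4^{(c_x, c_y)}$, and its nonzero spectrum is exactly $\spec^{(c_x, c_y)} \mathcal{R}_{\text{2D}}^{\text{ntt}}$, which is the key result the appendix advertises.

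The hard part will be the combinatorial bookkeeping of SWAP-gauge matrices: one must keep a careful inventory of which leg each $g_k$ ends up on after each diagrammatic manipulation, and verify that every $g_k$ either pairs up with a second copy or lands on a $p_k$ leg at precisely the place where Theorem~\ref{theo:2Dp2pntt} can absorb it. An off-by-one error here would leave a stray $g_k$ and spoil the identification. The generalization of the argument beyond the EF-free, inner-truncation-free simplification I have used is essentially routine: the filtering matrices $s_x, s_y$ are scalar (single) matrices with the symmetry properties derived in~\autoref{subsec:symFM} and so absorb transparently into $A \to A^f$, while the inner isometry $p_i$ possesses the reflection symmetry of Eq.~\eqref{eq:isomSym} proved in~\autoref{subsec:symIT} and threads through the block symmetrically, so the same Theorem~\ref{theo:2Dp2pntt}-style substitutions apply.
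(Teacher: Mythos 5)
Your proposal follows essentially the same route as the paper's proof: a term-by-term diagrammatic comparison in which the defining symmetries of $A$ (Eq.~\eqref{eq:refl2d}) and $\delta A$ (Eq.~\eqref{eq:T4cxcyDef}) untranspose the tensors, the explicit phase $(-1)^{c_k}$ cancels against the one produced by the symmetry condition, and the leftover SWAP-gauge matrices are absorbed into the isometries via Theorem~\ref{theo:2Dp2pntt}. The only cosmetic difference is that you work through the UL term while the paper exhibits the DR term; the mechanism and the supporting lemmas are identical.
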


\begin{proof}
    Using the symmetry of the tensor $A$ in Eq.~\eqref{eq:refl2d} and $\delta A$ in Eq.~\eqref{eq:T4cxcyDef}, along with the relationship between $p_k^{\text{ntt}}$ and $p_k$ for $k=x,y$ in Theorem~\ref{theo:2Dp2pntt}, it is clear that each of the four terms in Eq.~\eqref{eq:app-2dprf-linRG} is the same as the corresponding term in Eq.~\eqref{eq:app-2dprf-linRGntt}.
    Take as an example the term called ``DR'' in Eq.~\eqref{eq:app-2dprf-linRG}:
    \begin{widetext}
    \begin{align}
        \label{eq:app-2dprf-R2Rntt}
        \includegraphics[scale=1.0, valign=c]{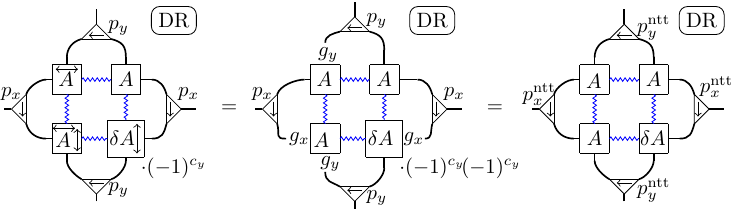},
    \end{align}
    \end{widetext}
    where in the first equality the symmetry properties of $A$ and $\delta A$ are used, and in the second equality, Theorem~\ref{theo:2Dp2pntt} is used.
\end{proof}

\begin{corollary}
    The subspace $\Tsp{4}{(c_x,c_y)}$ is an invariant subspace of $\mathcal{R}_{\text{2D}}^{\text{ntt}}$ since the range of $\mathcal{R}_{\text{2D}}^{(c_x, c_y)}$ is $\Tsp{4}{(c_x,c_y)}$ according to Theorem~\ref{theo:2DRrange}.
\end{corollary}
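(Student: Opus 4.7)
The plan is essentially to chain the two immediately preceding theorems. Fix an arbitrary $\delta A \in \Tsp{4}{(c_x, c_y)}$. First I would invoke Theorem~\ref{theo:2DR-Rntt}, which asserts that on precisely this subspace the constructed linear map $\mathcal{R}_{\text{2D}}^{(c_x, c_y)}$ agrees pointwise with the honest linearization $\mathcal{R}_{\text{2D}}^{\text{ntt}}$ of the no-transposition-trick tensor RG map. This immediately rewrites the image of interest as $\mathcal{R}_{\text{2D}}^{\text{ntt}}(\delta A) = \mathcal{R}_{\text{2D}}^{(c_x, c_y)}(\delta A)$.

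Next I would invoke Theorem~\ref{theo:2DRrange}, which says that the range of $\mathcal{R}_{\text{2D}}^{(c_x, c_y)}$, viewed as a map from $\mathfrak{T}_4$, is contained in $\Tsp{4}{(c_x, c_y)}$. Combining the two gives $\mathcal{R}_{\text{2D}}^{\text{ntt}}(\delta A) \in \Tsp{4}{(c_x, c_y)}$. Since $\delta A$ was arbitrary in $\Tsp{4}{(c_x, c_y)}$, this establishes invariance of the subspace under $\mathcal{R}_{\text{2D}}^{\text{ntt}}$, which is the claim.

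The only subtlety worth flagging is a consistency check on the setup: both theorems are invoked at the \emph{same} linearization point, namely a fixed-point tensor $A_*$ whose SWAP-gauge matrices $(g_x, g_y)$ are preserved by the RG step after gauge fixing (the remark preceding Theorem~\ref{theo:2DRrange}). That stability of $(g_x, g_y)$ is what guarantees that the subspace $\Tsp{4}{(c_x, c_y)}$ appearing in the conclusion is literally the same subspace as the one appearing in the hypothesis $\delta A \in \Tsp{4}{(c_x, c_y)}$. I would mention this explicitly so the reader does not have to re-derive it.

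There is no genuine obstacle at this stage: all the real work has been done in Theorems~\ref{theo:2Dp2pntt},~\ref{theo:2DRrange}, and~\ref{theo:2DR-Rntt}, where the relation $p_k^{\text{ntt}} = g_k\, p_k$ and the diagrammatic symmetry manipulations carry the weight. The corollary itself is a one-line composition of those two results, and I would present it as such, perhaps appending the short observation (in the spirit of Corollary~\ref{corol:specRrange}) that Eq.~\eqref{eq:app-2dprf-specRrange} therefore also identifies the nonzero eigenvalues of $\mathcal{R}_{\text{2D}}^{\text{ntt}}$ on $\Tsp{4}{(c_x, c_y)}$ with the nonzero spectrum of $\mathcal{R}_{\text{2D}}^{(c_x, c_y)}$, closing the loop with the key statement in Theorem~\ref{theo:keyres}.
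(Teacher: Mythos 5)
Your proposal is correct and follows exactly the paper's intended argument: the corollary is a one-line composition of Theorem~\ref{theo:2DR-Rntt} (agreement of the two maps on $\Tsp{4}{(c_x,c_y)}$) with Theorem~\ref{theo:2DRrange} (the range of $\mathcal{R}_{\text{2D}}^{(c_x,c_y)}$ lies in $\Tsp{4}{(c_x,c_y)}$), which is precisely the justification the paper embeds in the corollary's statement. Your added remark about evaluating both theorems at the same gauge-fixed fixed point $A_*$ is a sensible clarification but does not change the argument.
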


\begin{corollary}
    \label{corol:specInT4cxcy}
    In the invariant subspace $\Tsp{4}{(c_x,c_y)}$, the eigenvalue spectrum of  $\mathcal{R}_{\text{2D}}^{(c_x, c_y)}$ is the same as that of $\mathcal{R}_{\text{2D}}^{\text{ntt}}$:
    \begin{align}
        \label{eq:app-2dprf-specInT4cxcy}
        \spec^{(c_x,c_y)}{\mathcal{R}_{\text{2D}}^{(c_x, c_y)}}
        =
        \spec^{(c_x,c_y)}{\mathcal{R}_{\text{2D}}^{\text{ntt}}}.
    \end{align}
\end{corollary}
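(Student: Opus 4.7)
The plan is to obtain this corollary as an almost immediate consequence of Theorem~\ref{theo:2DR-Rntt}. First I would recall that, in parallel with the 1D notation of~\autoref{subsec:linRG1D}, $\spec^{(c_x,c_y)}$ denotes the multiset of nonzero eigenvalues whose eigenvectors live in the invariant subspace $\Tsp{4}{(c_x,c_y)}$. The heart of the argument is then a standard linear-algebra observation: two linear maps that agree on a common invariant subspace induce the same endomorphism of that subspace, so their eigenvalue spectra restricted to the subspace coincide.

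Concretely, I would argue both inclusions at once. Pick any $\lambda \neq 0$ together with an eigenvector $\delta A \in \Tsp{4}{(c_x,c_y)}$ of either $\mathcal{R}_{\text{2D}}^{(c_x, c_y)}$ or $\mathcal{R}_{\text{2D}}^{\text{ntt}}$. By Theorem~\ref{theo:2DR-Rntt}, evaluation at $\delta A$ gives
\begin{align}
    \mathcal{R}_{\text{2D}}^{(c_x,c_y)}(\delta A)
    =
    \mathcal{R}_{\text{2D}}^{\text{ntt}}(\delta A),
\end{align}
so $\delta A$ is also an eigenvector of the other map with the same eigenvalue $\lambda$. Hence $\lambda$ belongs to $\spec^{(c_x,c_y)}{\mathcal{R}_{\text{2D}}^{(c_x,c_y)}}$ if and only if it belongs to $\spec^{(c_x,c_y)}{\mathcal{R}_{\text{2D}}^{\text{ntt}}}$, which is the desired equality of sets.

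To make sure no multiplicities are lost, I would then upgrade the argument slightly: since $\Tsp{4}{(c_x,c_y)}$ is invariant under both $\mathcal{R}_{\text{2D}}^{(c_x,c_y)}$ (by Theorem~\ref{theo:2DRrange}) and $\mathcal{R}_{\text{2D}}^{\text{ntt}}$ (by the corollary following Theorem~\ref{theo:2DR-Rntt}), each map restricts to an honest endomorphism of $\Tsp{4}{(c_x,c_y)}$. Theorem~\ref{theo:2DR-Rntt} identifies these two restrictions as the \emph{same} linear operator on $\Tsp{4}{(c_x,c_y)}$. Therefore their characteristic polynomials, and with them all eigenvalues counted with algebraic multiplicity, agree.

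There is essentially no obstacle here beyond bookkeeping—the genuine work has already been carried out in Theorems~\ref{theo:2Dp2pntt}, \ref{theo:2DRrange}, and \ref{theo:2DR-Rntt}. The one point I would highlight, rather than a difficulty, is the conceptual upshot: combining this corollary with Corollary~\ref{corol:specRrange} yields $\spec{\mathcal{R}_{\text{2D}}^{(c_x,c_y)}} = \spec^{(c_x,c_y)}{\mathcal{R}_{\text{2D}}^{\text{ntt}}}$, so diagonalising the sector map built with the transposition trick recovers exactly the portion of the full HOTRG linearisation that sits in the $(c_x,c_y)$ lattice-reflection sector—precisely what is needed for sector-resolved scaling dimensions.
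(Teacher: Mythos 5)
Your proposal is correct and follows exactly the route the paper intends: the paper states this corollary without further argument, treating it as the immediate consequence of Theorem~\ref{theo:2DR-Rntt} (the two maps agree on $\Tsp{4}{(c_x,c_y)}$) combined with the invariance of that subspace under both maps, which is precisely your reasoning. Your added remark about restrictions and characteristic polynomials is a harmless strengthening of the same argument.
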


\begin{theorem}
    \label{theo:keyres}
    Corollary~\ref{corol:specRrange} and Corollary~\ref{corol:specInT4cxcy} together indicate the relationship between $\mathcal{R}_{\text{2D}}^{(c_x, c_y)}$ in Eq.~\eqref{eq:app-2dprf-linRG} and $\mathcal{R}_{\text{2D}}^{\text{ntt}}$ in Eq.~\eqref{eq:app-2dprf-linRGntt}:
    \begin{align}
        \label{eq:app-2dprf-result}
        \spec{\mathcal{R}_{\text{2D}}^{(c_x, c_y)}}
        =
        \spec^{(c_x,c_y)}{\mathcal{R}_{\text{2D}}^{\text{ntt}}}.
    \end{align}
\end{theorem}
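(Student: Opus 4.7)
The plan is straightforward: chain the two corollaries that were already established. All the substantive work has been done in Theorems~\ref{theo:2Dp2pntt}, \ref{theo:2DRrange}, and~\ref{theo:2DR-Rntt}, so the final statement reduces to a bookkeeping argument about where eigenvectors can live.

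First, I would apply Corollary~\ref{corol:specRrange} to rewrite the left-hand side. The point is that Theorem~\ref{theo:2DRrange} places the range of $\mathcal{R}_{\text{2D}}^{(c_x, c_y)}$ inside $\Tsp{4}{(c_x,c_y)}$, so for any nonzero eigenvalue $\lambda$ with eigenvector $\delta A$, the identity $\delta A = \lambda^{-1}\mathcal{R}_{\text{2D}}^{(c_x, c_y)}(\delta A)$ forces $\delta A \in \Tsp{4}{(c_x,c_y)}$. This gives
\begin{equation*}
    \spec \mathcal{R}_{\text{2D}}^{(c_x, c_y)} = \spec^{(c_x,c_y)} \mathcal{R}_{\text{2D}}^{(c_x, c_y)}.
\end{equation*}

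Second, I would invoke Theorem~\ref{theo:2DR-Rntt} to swap $\mathcal{R}_{\text{2D}}^{(c_x, c_y)}$ for $\mathcal{R}_{\text{2D}}^{\text{ntt}}$ on this subspace. Since the two maps agree pointwise on $\Tsp{4}{(c_x,c_y)}$, and since this subspace is invariant under both (trivially for $\mathcal{R}_{\text{2D}}^{(c_x, c_y)}$ by Theorem~\ref{theo:2DRrange}, and hence also for $\mathcal{R}_{\text{2D}}^{\text{ntt}}$ by the agreement), the eigenvalue problems restricted to $\Tsp{4}{(c_x,c_y)}$ are identical. This is precisely Corollary~\ref{corol:specInT4cxcy}, yielding
\begin{equation*}
    \spec^{(c_x,c_y)} \mathcal{R}_{\text{2D}}^{(c_x, c_y)} = \spec^{(c_x,c_y)} \mathcal{R}_{\text{2D}}^{\text{ntt}}.
\end{equation*}
Chaining the two displayed equalities delivers the claim.

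There is no real obstacle at this stage of the argument, since the theorem is essentially a packaging of the earlier results. The only subtlety worth flagging is the restriction to nonzero eigenvalues that is built into the definition of $\spec$: a vector in $\ker \mathcal{R}_{\text{2D}}^{(c_x, c_y)}$ need not belong to $\Tsp{4}{(c_x,c_y)}$, so the step in the first paragraph genuinely relies on $\lambda \neq 0$. This is in parallel with the 1D toy case discussed in Appendix~\ref{app:2dLinMatrix}, where the off-diagonal block $K^{(c)}$ in Eq.~\eqref{eq:R1Dcblockform} does not interfere with the nonzero spectrum of $\mathcal{R}^{(c)}_s$ for exactly the same reason.
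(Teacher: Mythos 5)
Your proposal is correct and follows exactly the paper's route: the theorem is stated as a direct concatenation of Corollary~\ref{corol:specRrange} and Corollary~\ref{corol:specInT4cxcy}, which is precisely the two-step chaining you carry out. Your added remark on why the restriction to nonzero eigenvalues matters is consistent with the paper's own discussion of the 1D analogue in Appendix~\ref{app:2dLinMatrix}.
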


To conclude this appendix, let us sketch what happens to the proof in this appendix when the EF is turned on and a pair of isometry $p_i$ is inserted into the inner legs of the $2 \times 2$ block.
The essential change is for the tensor RG equations in Eq.~\eqref{eq:app-2dprf-RGeq} and~\eqref{eq:app-2dprf-RGeqntt}, as well as Theorem~\ref{theo:2Dp2pntt}.
Now, the same set of $p_x, p_y$ can be used for both RG equations, but the inner isometry $p_i$ and the filtering matrices $s_x, s_y$ are different.
$p_i$ and $p_i^{\text{ntt}}$ are related to each other in the same way as Eq.~\eqref{eq:app-2dprf-p2pntt} using the SWAP-gauge matrix $g_x$.
For the filtering matrices, without the transposition trick, four filtering matrices $s_{x0}^{\text{ntt}}, s_{x1}^{\text{ntt}}, s_{y0}^{\text{ntt}}, s_{y1}^{\text{ntt}}$ will be needed.
The tensor RG equation without the transposition trick will look like the first equality of Eq.~\eqref{eq:2dEFRGeq1} without the tensor transposition, where $s_{k0}^{\text{ntt}}, s_{k1}^{\text{ntt}}$ act on the first and second incoming leg\footnote{
    Recall that the order of the incoming leg of a 2-to-1 isometric tensor is indicated by the arrow in its tensor-network diagram.
}
of $p_k$ for $k=x,y$.
These four filtering matrices are related to the two filtering matrices $s_x, s_y$ according to $s_{k0}^{\text{ntt}} = s_{k}$ and $s_{k1}^{\text{ntt}} = g_k s_{k}$ for $k=x,y$, where the SWAP-gauge matrix $g_k$ acts on the black-solid leg of $s_k$ (see Eq.~\eqref{eq:2dEFRGeq1}).

\section{Choice of the bond dimension and the RG step for linearizing the RG map\label{app:choicechin}}
In this appendix, we explain the choice of the bond dimension $\chi$ and the RG step $n$ at which the RG map is linearized in the numerical demonstration in~\autoref{sec:numdemo}.

Understood in the Wilsonian RG framework, the bond dimension $\chi$ is directly related to the number of coupling constant retained in an RG map.
For an apt RG transformation, increasing $\chi$ is expected to improve the accuracy of the estimates of the scaling dimensions.
However, there is no clear understanding of the this truncation effect when the TNRG is put into Wilsonian RG framework, since there is still no a priori criterion to tell whether an TNRG is apt unambiguously.
There is numerical evidence~\cite{Evenbly:Vidal:2015} showing that a 2D TNRG enhanced by EF can be an apt RG map.

In the 2D demonstration, we observe a general trend of more accurate estimates of scaling dimension when $\chi$ increases.
We display the $\chi=36$ calculation since it is larger than previous studies~\cite{Lyu:Xu:Kawashima:2021,Guo:Wei:2024} and we want to check whether more scaling dimensions larger than 2 can be resolved.
In the 3D demonstration, $\chi=6$ is chosen for the convenience of the demonstration; the calculation is fast so it is easy to reproduce using the codes published in Ref.~\cite{Lyu:algo:refl}. 
A more detailed study of the $\chi$ dependency has been discussed in Ref.~\cite{Lyu:Kawashima:2024}.

As for the choice of RG step $n$ at which the RG map is linearized, it should be chosen in such a way that the tensor has flowed near the critical fixed point; more concretely, the difference of the two tensors in adjacent RG steps should be small enough.
In both 2D and 3D demonstration, $n=5$ is a reasonable choice according to this criterion.



\bibliography{references}     

\end{document}